\pdfoutput=1
\documentclass[sigconf,nonacm,balance=false]{acmart}

\usepackage{preamble}

\begin{document}

\title{\papertitle}

\author{Jamshed Khan}
\affiliation{%
  \institution{Northeastern University}
  \city{Boston}
  \state{Massachusetts}
  \country{USA}}

\author{Prashant Pandey}
\affiliation{%
  \institution{Northeastern University}
  \city{Boston}
  \state{Massachusetts}
  \country{USA}}

\begin{abstract}

Suffix arrays power deduplication, contamination testing, and training data
attribution over the corpora used to train large language models, and must be
rebuilt for every new corpus or tokenizer. However,
constructing them at scale is slow.
\ig's
indexing pipeline takes about two days to process 1.4~trillion tokens on its
authors' 128-core CPU server. GPU-accelerated constructors are an order of
magnitude faster, but their working sets must fit in GPU memory. Partitioning
the corpus only moves this limit to host memory. Sorting the suffixes of one
partition needs the ranks of suffixes in other
partitions, which existing designs keep in a corpus-wide rank table that must
stay in memory.

We present \gpusaca, a GPU-accelerated out-of-core suffix-array constructor.
Its partitions each own a fixed, contiguous range of the final suffix array, so
the GPU sorts them independently and never merges them. \gpusaca\ then replaces
the rank table with \emph{streamed rank exchange}. Ranks are sent directly to
the partitions that need them, and an invariant makes partition-local ranks
sufficient for every comparison. With no rank table, intermediate data spills
across GPU memory, host memory, and SSD as the corpus grows, and host memory only
needs to hold the text.

On seven corpora of 8--238 billion tokens, \gpusaca\ builds suffix arrays
on a single GPU server $3.0$--$8.2\times$ faster end to end than \ig's
pipeline, and its construction is $6.3$--$11.4\times$ faster than
\libsais's. It indexes the 198-billion-token \data{c4} corpus and the
238-billion-token \data{pile} in 23 and 37 minutes end to end, both of which exceed
\libsais's host-memory capacity on the same server. A batched query engine
over the index reproduces \ig's counts at median speedups of
\numrange{10.8}{32.6}$\times$, with the index in GPU memory, host memory, or
on SSD.

\end{abstract}
 
\ccsdesc[500]{Theory of computation~Sorting and searching}
\ccsdesc[500]{Information systems~Data structures}

\keywords{suffix array, parallel algorithms, GPU, systems, out-of-core, external memory, memory hierarchy}

\maketitle

\section{Introduction}
\label{sec:introduction}

A \emph{suffix array} (SA)~\cite{manber90} lists every position in a tokenized
corpus in lexicographic order of the text starting there, turning sequence
queries into binary searches. Suffix arrays have become core infrastructure for
working with LLM training data. Deduplication, contamination and membership
testing, $n$-gram statistics, retrieval, and training data attribution all
reduce to such searches, in tools such as $\infty$-gram language
models~\cite{liu24}, OLMoTrace~\cite{liu25}, and semantic
search~\cite{yoneda26}.

These tools need a whole-corpus index before they can run, and each new crawl,
filter, or tokenizer requires rebuilding it. This makes construction time part
of the data pipeline. Even exact-substring deduplication relies on suffix
arrays~\cite{lee22,kandpal22}, and sharding leaves cross-shard duplicates
undetected while requiring every query to search every shard. Queries also
arrive in bulk, e.g., attribution checks every span of a model output, while
retrieval-based speculative decoding searches the corpus during
generation~\cite{he24,ho24}. Fast construction and search therefore let these
tools keep pace with changing corpora and workloads.

Building the suffix array is expensive. At two bytes per token and five bytes
per suffix, a trillion-token corpus requires \qty{2}{\tera\byte} of text and a
\qty{5}{\tera\byte} index. \ig~\cite{liu24}, a suffix-array search engine
for LLM corpora, builds its indexes with a CPU pipeline that suffix-sorts
blocks of the text and merges them. Its authors report about two days to
index the \num{1.4} trillion tokens of RedPajama on a 128-core CPU server with
\qty{1}{\tera\byte} of memory, split into separate shards, each with its own
index. Compressed alternatives, such as the FM-index of
infini-gram mini~\cite{xu25}, shrink the index, but are built from the same
suffix order.

Existing suffix-array constructors trade speed for scale, and what each must
keep in memory sets its limit (Table~\ref{tab:positioning}). In-memory CPU
constructors, such as the multithreaded SA-IS implementation
\libsais~\cite{libsais}, are limited by host memory. GPU constructors, such as
\libcubwt~\cite{libcubwt}, are about an order of magnitude faster, but keep
their entire working set in the memory of one GPU~\cite{libcubwt} or several
GPUs~\cite{buren19}, making GPU memory their limit. Out-of-core CPU
constructors scale further, but at CPU speed, sorting or merging their data on
disk~\cite{dementiev08,barsky20,bingmann16,karkkainen15,lee22}. GPU data
processing beyond GPU memory faces a similar limit more broadly. Joins,
graph systems, and vector search retain some complete structure in GPU or host
memory, and stop scaling when that structure no longer
fits~\cite{maltenberger25,lutz20,lutz22,min20,wang23,peng26}.

\begin{table}[t]
\caption{Suffix-array constructors, the data they must keep resident, and the
largest corpus this allows on our server, with $M = \qty{1}{\tera\byte}$ of
host memory and \qty{96}{\giga\byte} of GPU memory (\S\ref{sec:prelim-hw}).}
\label{tab:positioning}
\footnotesize
\setlength{\tabcolsep}{3pt}
\begin{tabular}{@{}>{\raggedright\arraybackslash}p{0.33\columnwidth}>{\raggedright\arraybackslash}p{0.43\columnwidth}>{\raggedright\arraybackslash}p{0.2\columnwidth}@{}}
    \toprule
    Constructor family & Must stay resident & Wall (tokens) \\
    \midrule
    In-memory CPU~\cite{libsais} & text, SA, workspace (host) & $\approx 100$B \\
    GPU~\cite{libcubwt,buren19} & text, working set, rank table (GPU) & \num{1.6}B \\
    Out-of-core CPU~\cite{dementiev08,karkkainen15,bingmann16,barsky20} & none; global sort or merge per round & none \\
    Sharded out-of-core CPU~\cite{liu24} & one shard's text (host) & $\approx M / 2$ per shard \\
    \midrule
    \gpusaca & text or partition map (host) & $\approx M / 2$ \\
    \bottomrule
\end{tabular}
\end{table}

\para{Challenge}
A natural solution is to partition the corpus and sort each partition on the
GPU. Partitioning alone, however, is insufficient for \emph{prefix
doubling}~\cite{manber90,osipov12,wang16,buren19}, the method used by GPU
constructors. Prefix doubling sorts suffixes in rounds, which we call
\emph{generations}, comparing prefixes twice as long in each successive one.
After each generation, every suffix has a
\emph{rank} indicating its position in the order established so far. Suffixes
that remain tied share a rank. To double the prefix length, each suffix pairs
its own rank with that of the suffix $c$ positions ahead, where $c$ is the
current prefix length. This avoids reading the text itself, but the second
suffix may lie in any partition. Existing designs therefore
maintain a \emph{rank table}~\cite{manber90} that records the current rank of
every suffix in the corpus. Each generation reads this table at scattered
positions and rewrites it, requiring the entire table to remain in memory.
Partitioning thus replaces the GPU memory limit with a host memory limit. On
our server, storing a rank table alongside the text would cap the corpus at
about 130B tokens, below \data{c4} and \data{pile}
(\S\ref{sec:evaluation-reach}).

\para{Key insight}
Our key observation is that each rank has at most one reader: in the next
generation, only the suffix $c$ positions before it needs that rank. The rank can
therefore be sent directly to its reader rather than stored for lookup.
External-memory and distributed-memory constructors also transmit ranks to the
suffixes that need them~\cite{dementiev08,barsky20,kulla07,flick15}, but by
sorting or merging all their data globally in every round, over ranks numbered
across the whole corpus. \gpusaca\ needs neither: each rank goes to its
reader's partition, an invariant keeps every rank comparison inside one
partition, and ranks travel as streams that any memory tier can hold. We call
this \emph{streamed rank exchange}.

\para{This paper}
\gpusaca\ builds the suffix array in three steps (\S\ref{sec:method}), and adds a
query engine over the result. It first partitions the suffixes by output range.
Splitters sampled from the corpus's own $K$-grams, as in
samplesort~\cite{frazer70}, fix each partition's range of the final suffix array
before sorting begins (\S\ref{sec:method-partitioned}).
The GPU sorts and ranks each partition separately,
without ever merging partitions. GPU memory thus limits the partition size
rather than the corpus size. Some suffix sorters also partition by output
order~\cite{ghoting09,mansour11,karkkainen07,khan23}, but sort each partition
against the text, which they must keep in memory or reread for each batch of
partitions. Prefix doubling instead reads the text only in its first
generation, and resolves long repeats in logarithmically many generations.

Partitions then exchange ranks as messages rather than reading them from a
shared rank table (\S\ref{sec:method-exchange}). After each generation, every
partition sends each rank to the partition containing its reader, using a
compact partition map to identify the destination.
This is similar to push-based graph
processing~\cite{roy13, beamer12}. Two properties make this exchange correct.
First, our \emph{target coherence invariant} guarantees that any two suffixes
that remain tied within a partition receive their next ranks from the same
partition, so partition-local ranks suffice. Without it, ranks would need
corpus-wide numbering, which widens every rank, message, and sort key as the
corpus grows. Second, a suffix retires and stops sending once its rank is final,
even though a later generation may still need that rank.
The suffix that
received it last forwards it as a \emph{poison pill}, and retires in turn. Every
needed rank thus arrives, exactly once.

Eliminating the rank table leaves only two structures read at random,
the text and the partition map (\S\ref{sec:method-residency}). Every other
structure is a stream of fixed-size chunks, placed in GPU memory, host
memory, or on SSD. As in the semi-external and external-memory
models~\cite{abello02, aggarwal88}, what must remain in memory determines the
largest corpus, while what may remain in memory determines only the speed.
Placing each chunk as it is written in the fastest tier with room, and evicting
the records of retired suffixes first, minimizes SSD traffic, with no
lookahead. \gpusaca\ keeps the text and the partition map in host memory, only
one at a time when memory is short, so host memory $M$ bounds it at a little
under $M / 2$ tokens: the per-shard reach of \ig's pipeline, at GPU speed.

\begin{figure}[t]
    \centering
    \includegraphics{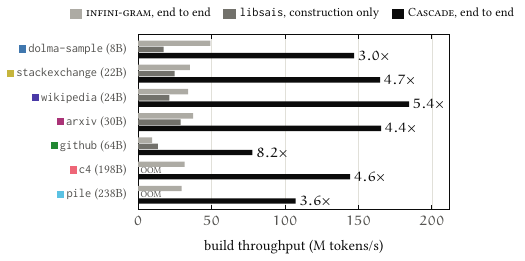}
    \caption{
        Build throughput per corpus. Labels give \gpusaca's end-to-end
        speedup over \ig; \libsais\ is timed on construction alone, its most
        favorable measure (Table~\ref{tab:baseline}).
    }
    \label{fig:intro-build}
\end{figure}

\para{Results}
Figure~\ref{fig:intro-build} summarizes our build results. We evaluate \gpusaca\ on
seven corpora containing 8 to 238 billion tokens. It builds the index
\numrange{3.0}{8.2}$\times$ faster end to end than \ig's pipeline, and its
construction is \numrange{6.3}{11.4}$\times$ faster than that of \libsais.
Unlike \ig's merge, it never truncates suffix comparisons
(\S\ref{sec:evaluation-setup}). On prefixes of \data{c4}, its lead over \ig\ doubles
as the corpus grows from 8B to 64B tokens, and holds beyond
(\S\ref{sec:evaluation-build}). \gpusaca\ indexes \data{c4} and \data{pile} in 23 and 37
minutes end to end, while \libsais\ runs out of host memory on both.
\libcubwt\ is limited to byte alphabets and GPU memory, and builds none of the
corpora.
On our server, host memory would bound \gpusaca\ at about 420B tokens, against
about 130B with a rank table.

Throughput degrades gradually as memory shrinks (\S\ref{sec:evaluation-reach}).
Using only \qty{5}{\giga\byte} of GPU memory, the \data{arxiv} and
\data{github} builds take 1.3$\times$ and 1.5$\times$ as long.
With host memory cut to its floor,
the text alone, they take 1.7$\times$ and 3.3$\times$ as long.

\gpusaca's batched query engine reproduces \ig's counts on every workload, at median
speedups over \ig's engine of \numrange{15.7}{32.6}$\times$ with the index in
GPU or host memory, and \numrange{10.8}{15.2}$\times$ with it on SSD
(\S\ref{sec:evaluation-query}).

This paper makes four contributions.
\begin{compactenum}
    \item \textbf{Partitioned prefix doubling}
      (\S\ref{sec:method-partitioned}). Sampled splitters fix each partition's
      output range, so the GPU sorts partitions independently and never merges
      them. To our knowledge, \gpusaca\ is the first GPU suffix-array
      constructor for general text whose input can exceed the memory of its
      GPUs.

    \item \textbf{Streamed rank exchange}
      (\S\ref{sec:method-exchange}). Rank messages, poison pills, and the
      target coherence invariant replace the rank table, with partition-local
      ranks of fixed width.

    \item \textbf{Tiered residency across the memory hierarchy}
      (\S\ref{sec:method-residency}). Intermediate data streams across GPU
      memory, host memory, and SSD, with
      minimal SSD traffic and no throughput cliff.

    \item \textbf{Queries over the index}
      (\S\ref{sec:method-query}, \S\ref{sec:system-query}).
      A small sample of the
      index, kept in GPU memory, narrows each query pattern of a batch to a bracket of
      the suffix array, which the engine searches in GPU memory, host memory, or on SSD.
\end{compactenum}

\section{Preliminaries}
\label{sec:prelim}

This section defines suffix arrays and prefix doubling, the method that \gpusaca\ builds on.
It then describes the hardware that \gpusaca\ targets.

\subsection{Suffix arrays}
\label{sec:prelim-text}

Let $T[0, n)$ be a text of length $n$ over a totally ordered alphabet $\Sigma$.
Its last symbol $T[n - 1] = \sentinel$ is a \defn{sentinel}: it occurs nowhere else, and it is smaller than every other symbol.~\footnote{
    Real corpora carry no such symbol; see Appendix~\ref{sec:supp-terminator}.
}
The suffix $\suf{j} = T[j, n)$ starts at position $j$, and we identify each suffix with its starting position $j$.
The \defn{$c$-context} of suffix $j$ is its prefix of length $c$, $\ctx{j}{c}$.
The \defn{longest common prefix} $\lcp(\suf{i}, \suf{j})$ is the length of the longest prefix that suffixes $i$ and $j$ share.
Each suffix has a unique length, so all suffixes are distinct.
Since \sentinel\ is unique, $\lcp(\suf{i}, \suf{j}) < \min(|\suf{i}|, |\suf{j}|)$ for $i \neq j$, and a context that contains \sentinel\ occurs at exactly one position in $T$.

The \defn{suffix array} $\sa[0, n)$ lists the suffixes in lexicographic order.
The \defn{rank} of suffix $j$ is its position $r$ in \sa, where $\sa[r] = j$.
The \defn{rank table} maps every suffix to its rank.
It is the inverse of \sa, also known as the inverse suffix array~\cite{manber90}.
The \defn{repeat depth} $\rdep(j) = \max_{i \neq j} \lcp(\suf{i}, \suf{j})$ is the length of the longest prefix that $j$ shares with another suffix.
It is how far a comparison must read to tell $j$ apart from every other suffix.

\subsection{Prefix doubling}
\label{sec:prelim-doubling}

\begin{figure*}[t]
    \centering
    \includegraphics{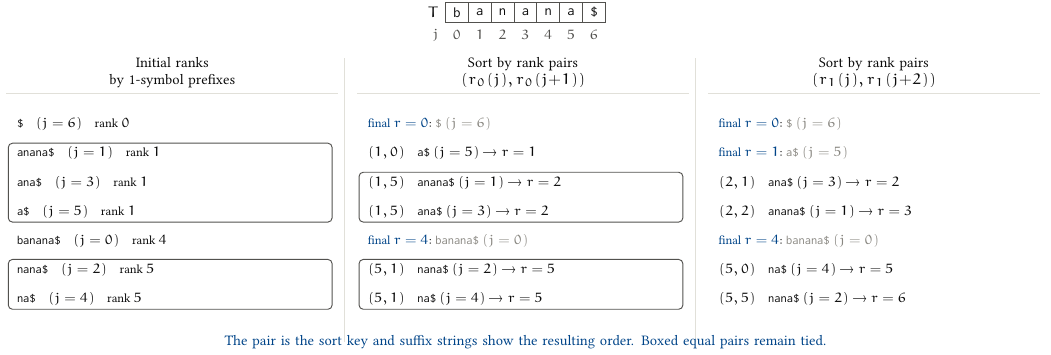}
    \caption{
        Prefix doubling on $T = \texttt{banana\$}$.
        Each column shows the suffixes in the order established by that round. Pairing the one-symbol ranks at $j$ and $j{+}1$ resolves two-symbol prefixes, and the next round pairs two-symbol ranks at $j$ and $j{+}2$, combining the prefixes $T[j,j{+}2)$ and $T[j{+}2,j{+}4)$ to resolve four-symbol prefixes.
        Boxes mark equal keys. The two suffixes with prefix ``an'' and the two with prefix ``na'' remain tied after the first pair, then split after the second.
        Each later round sorts rank pairs, not suffix strings. Suffixes are shown in the order those pairs establish.
    }
    \label{fig:prefix-doubling}
\end{figure*}

Prefix doubling~\cite{manber90} sorts the suffixes in rounds, by contexts of doubling length (Figure~\ref{fig:prefix-doubling}).
After a round that sorts by $c$-contexts, the rank of $j$ is the number of suffixes whose $c$-context is strictly smaller than that of $j$.
Suffixes with equal $c$-contexts therefore share a rank.
The $2c$-context of $j$ is its $c$-context followed by the $c$-context of $j + c$.
So sorting the suffixes by the pair of ranks of $j$ and $j + c$, and ranking them again, resolves their $2c$-contexts.
Figure~\ref{fig:prefix-doubling} traces this on $T = \texttt{banana\$}$.
After the first pairing, suffixes 1 and 3 still share the pair $(1, 5)$, since both start with \texttt{an}.
The next round pairs their ranks with those of suffixes 3 and 5, giving $(2, 2)$ and $(2, 1)$, and so places \texttt{ana\$} before \texttt{anana\$}.
We call each round a \defn{generation}, and $j + c$ the \defn{extension suffix} of $j$ in it.
The textbook implementation keeps the current rank of every suffix in the rank table, and reads the rank of each extension suffix from it.

A suffix whose rank is unique is final: its position in \sa\ is known, and it \defn{retires} from all later generations~\cite{larsson07}.
Only the suffixes that still share a rank carry forward.
Suffix $j$ thus takes part in at most $1 + \lfloor \log_2 \rdep(j) \rfloor$ generations after the first, and only in the first if $\rdep(j) = 0$.
The total work therefore depends on the distribution of repeat depths, and not only on $n$.

\para{Initial contexts}
\Gen{0} sorts the suffixes by their $\ell$-contexts, for a given $\ell \geq 1$, rather than by single symbols.
After \gen{t}, contexts of length $\cx{t} = 2^{t} \ell$ are resolved.
Generation~$t \geq 1$ therefore pairs $j$ with $j + \cx{t - 1}$.
Suffix $j$ takes part in at most $1 + \lfloor \log_2(\rdep(j) / \ell) \rfloor$ of these \defn{doubling} generations, and in none if $\rdep(j) < \ell$.

\subsection{Hardware model}
\label{sec:prelim-hw}

We target a single server in which CPUs, the \defn{host}, drive a GPU, the \defn{device}, over PCIe, with local NVMe SSDs.
Our server has an NVIDIA RTX PRO 6000 Blackwell GPU on PCIe~5.0~$\times$16, two 64-core Intel Xeon 6710E CPUs, and two Samsung PM9A3 NVMe SSDs; Table~\ref{tab:hardware} lists its measured rates.
Its three tiers are \defn{device memory}, \defn{host memory}, and the SSDs.
Each tier holds about ten times as much as the tier above it, but serves random reads at a fraction of that tier's rate.
Device memory holds \qty{96}{\giga\byte}, streams \qty{1.5}{\tera\byte\per\second}, and serves 6.5 to 15 billion random 8-byte reads per second.
Host memory holds \qty{1}{\tera\byte} and streams \qty{530}{\giga\byte\per\second}, and all 128 cores together serve 2.1 billion random reads per second.
Each of the two \qty{3.84}{\tera\byte} SSDs reads \qty{6.7}{\giga\byte\per\second} and writes \qty{4.1}{\giga\byte\per\second}, but serves only \num{870000} random 4~KB reads per second.
\gpusaca\ places its data across these three tiers (\S\ref{sec:method-residency}).

The host launches \defn{kernels} on the device: functions that the device runs on many threads at once.
The host also copies data between host and device memory, at about \qty{57}{\giga\byte\per\second} in each direction.
These copies reach that rate only from \defn{pinned} (page-locked) host memory.
Pinning is slow, so buffers are pinned once and reused.
Kernels and copies are issued on \defn{CUDA streams}, queues that each run in order.
Work on different streams overlaps, so the device can compute on one batch while it copies another.
The device can also read pinned host memory directly.
Such random reads reach only 25 to 94 million per second, however, because each one waits for an address translation at the host.
The host's cores serve random reads at least 20 times faster.

File I/O normally goes through the operating system's \defn{page cache}, which takes up host memory.
\defn{Direct I/O} (\odirect) bypasses the cache with block-aligned transfers.
A memory-mapped file instead faults its pages in on demand.

\section{Prefix Doubling Without a Rank Table}
\label{sec:method}

This section describes how \gpusaca\ constructs and searches a suffix array when the text is too large to fit in device memory.
Construction has three steps.
First, \gpusaca\ partitions suffixes by their initial prefixes, which assigns each partition a contiguous range of the final suffix array (\S\ref{sec:method-partitioned}).
It then sorts each partition by prefix doubling (\S\ref{sec:method-doubling}), exchanging the ranks needed between rounds as messages rather than storing a rank table (\S\ref{sec:method-exchange}).
Finally, it streams intermediate data through device memory, host memory, and SSD (\S\ref{sec:method-residency}).
The resulting index can be searched wherever it resides (\S\ref{sec:method-query}).
Algorithm~\ref{alg:construct} summarizes the construction, and Appendix~\ref{sec:supp-example} traces it on a small example.

\begin{algorithm}[t]
\caption{
    The construction of the suffix array of $T$: partitioning (lines~1--2), \gen{0} (lines~3--5), the doubling generations, one barrier apart (lines~6--15), and the output (lines~16--17).
    Sending or forwarding $\langle y, r, \mathit{final} \rangle$ appends it to the inbox of partition $\pmap(y)$. The bit \textit{final} is set if $r$ is final, and on every forward.
    No message goes to a position $y < 0$, nor, by the liveness filter, to a suffix known to have retired.
}
\label{alg:construct}
\begin{codebox}
    \Procname{$\gpusaca(T)$}
    \li sample the $K$-grams of $T$; choose splitters; compute \pmap
    \li scatter each suffix $j$ to partition $\pmap(j)$
    \li \For each partition \algcmnt{\gen{0}}
    \li     \Do sort by key~\eqref{eq:key0}; assign the ranks \lrk{0}
    \li         send $\langle j - \ell, \lrk{0}(j), \mathit{final} \rangle$ for each suffix $j$
            \End
    \li release $T$; $t \gets 1$
    \li \While some suffix is active \algcmnt{barrier~$t$}
    \li     \Do \For each partition with active suffixes \algcmnt{any order}
    \li         \Do join its inbox to its active suffixes
    \li             sort them by key~\eqref{eq:keyt}; assign the ranks \lrk{t}
    \li             \For each of them, $x$
    \li                 \Do send $\langle x - \cx{t}, \lrk{t}(x), \mathit{final} \rangle$
    \li                     \If $x$ received a final rank $r$
    \li                         \Then forward $\langle x - \cx{t - 1}, r, \mathit{final} \rangle$
                            \End
                        \End
                    \End
    \li     $t \gets t + 1$
            \End
    \li \For each suffix $j$: $\sa[\text{offset of } \pmap(j) + \lrk{}(j)] \gets j$
    \li \Return \sa
\end{codebox}
\end{algorithm}

\subsection{Partitioning by output range}
\label{sec:method-partitioned}
\label{sec:method-partition}

The partitioning rule serves two purposes: it limits the number of suffixes each device sort must handle, and it puts the partitions in suffix-array order before they are sorted. Thus, each partition can later write directly to its own contiguous range of the output.

\para{Splitters}
Fix a splitter length $K$ and a target partition size $B$.
As in samplesort~\cite{frazer70}, \gpusaca\ samples about $128 \log_2 n$ positions of $T$ per partition, uniformly at random, and sorts the $K$-grams that start there.
A sweep over the sorted sample closes a partition just before its estimated size would exceed $B$, and moves the boundary back so that equal $K$-grams never straddle it.
The $K$-gram at each boundary becomes a \defn{splitter}, and the splitters cut the $K$-grams into $p \ge \lceil n / B \rceil$ ranges.
The \defn{partition map} \pmap\ sends each suffix $j$ to the range that its first $K$ symbols fall in, and partition $\pmap(j)$ holds it.

\para{Each partition owns a range of the output}
The splitters order partitions by their $K$-grams. Because equal $K$-grams stay together, every suffix in partition $i$ precedes every suffix in a later partition.
Partition $i$ therefore fills exactly the range of \sa\ that starts at its \defn{offset}, the number of suffixes in partitions $0$ to $i - 1$.
Partitions are never required to be merged. Each partition is sorted and ranked independently, and writes its final order straight to its range of the output.
As a result, the device memory, not $n$, sets the partition size, and only the partition count $p$ grows with $n$.

\para{Feasibility}
Oversampling splitters keeps every partition within a constant factor of $B$ with high probability~\cite{blelloch91}.
Suffixes that share a $K$-gram, however, always land in the same partition, so the most frequent $K$-gram of $T$ must fit in one partition.
\gpusaca\ checks this on the sample, and aborts before any sorting if it fails.
A longer $K$ splits the heaviest $K$-grams (\S\ref{sec:evaluation-reach}), and every corpus we index passes with a wide margin (\S\ref{sec:evaluation-structure}).

\subsection{Prefix doubling within partitions}
\label{sec:method-doubling}

Within each partition, \gpusaca\ runs prefix doubling (\S\ref{sec:prelim-doubling}) on \defn{local ranks}: the rank $\lrk{t}(j)$ of suffix $j$ after \gen{t} counts the suffixes of its own partition that precede it in the order resolved so far.
\Gen{0} sorts each partition by the key
\begin{equation}
\label{eq:key0}
    \bigl( \ctx{j}{\ell},\; \pmap(j + \ell) \bigr),
\end{equation}
the $\ell$-context of $j$ and the partition of its extension suffix in \gen{1}. \S\ref{sec:method-tci} explains why the second component is needed.
$\lrk{0}(j)$ is the number of suffixes in the partition of $j$ with strictly smaller keys.
A suffix whose rank is unique in its partition retires, and keeps its rank, which is final.
We call a suffix \defn{active} in a generation if that generation ranks it. Every suffix is active in \gen{0}, and a suffix is active in \gen{t \ge 1} if it was still tied after \gen{t - 1}.
\Gen{t \ge 1} sorts the active suffixes of each partition by the key
\begin{equation}
\label{eq:keyt}
    \bigl( \lrk{t - 1}(j),\; \lrk{t - 1}(j + \cx{t - 1}) \bigr),
\end{equation}
the pair of the rank of $j$ and the rank of its extension suffix, its \defn{extension rank}.
Each tied suffix's new rank is its old rank plus the number of suffixes tied with it whose extension ranks are strictly smaller.
Once no suffix is active, every rank is final, and suffix $j$ goes to position $\lrk{}(j)$ past the offset of its partition in \sa.

\para{The cross-partition dependency}
Key~\eqref{eq:keyt} needs two ranks: the rank of suffix $j$, and the rank of the suffix starting at $j + \cx{t - 1}$.
These ranks may belong to different partitions. Since local ranks are meaningful only within their own partition, a partition cannot always use the extension rank directly.
Figure~\ref{fig:prefix-doubling} shows the problem for \texttt{banana\$}, with $\ell = K = 1$ and one partition per first symbol.
Partition \texttt{a} contains suffixes 1, 3, and 5; partition \texttt{n} contains suffixes 2 and 4.
In the first doubling round, suffix 1 needs the rank of suffix 2, which belongs to partition \texttt{n}.
The figure shows global ranks.
If \gen{0} ranked each partition by the $\ell$-context alone, suffixes 2, 4, and 6 would all have local rank 0 in their partitions, and suffixes 1, 3, and 5 would all receive key $(0, 0)$, even though \texttt{a\$} precedes \texttt{an}.
The rank table avoids this ambiguity, but requires a random-access rank for every suffix. It takes $\lceil \log_2 n \rceil$ bits per suffix and is rewritten in every generation.
\gpusaca\ resolves the ambiguity with target coherence (\S\ref{sec:method-tci}), and replaces the table with messages (\S\ref{sec:method-message}).

\subsection{Streamed rank exchange}
\label{sec:method-exchange}

\subsubsection{\textbf{Target coherence}}
\label{sec:method-tci}

Local ranks of different partitions never need to be compared if suffixes tied within a partition always have their extension suffixes in one partition.
We call this property \defn{target coherence}.
The component $\pmap(j + \ell)$ of key~\eqref{eq:key0} establishes it for \gen{1}, and Theorem~\ref{thm:tci} shows that every later generation preserves it.
Thus, suffixes still tied within one partition always need extension ranks from the same partition. The algorithm can compare those local ranks without converting them to global ranks.

\begin{theorem}[Target coherence]
\label{thm:tci}
After every generation~$t$, any two suffixes tied within a partition share their first \cx{t} symbols, and their extension suffixes in \gen{t + 1} lie in one partition.
\end{theorem}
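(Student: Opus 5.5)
Induction on $t$ proves both claims together. Throughout, "tied within a partition after \gen{t}" means $j$ and $j'$ lie in the same partition, are active in \gen{t}, and get equal sort keys (and so equal ranks $\lrk{t}$). The second claim follows from the first plus the partitioning rule, so the core of the induction is the first claim; the partitioning rule is also what lets local ranks be compared.

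\emph{Base case, $t = 0$.} Key~\eqref{eq:key0} is $\bigl(\ctx{j}{\ell}, \pmap(j+\ell)\bigr)$. If two suffixes tie on it, they share their first $\cx{0} = \ell$ symbols. Their extension suffixes in \gen{1} are $j+\ell$ and $j'+\ell$, and these lie in the one partition $\pmap(j+\ell) = \pmap(j'+\ell)$. One point needs care: if $j+\ell \ge n$, then the sentinel lies in the shared context. By \S\ref{sec:prelim-text} a context containing \sentinel\ occurs at only one position, so $j = j'$ and no tie exists.

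\emph{Inductive step.} Assume the statement holds after \gen{t-1}, and let $j, j'$ be tied within a partition after \gen{t}. They are active in \gen{t}, so they were tied after \gen{t-1}, and $\lrk{t-1}(j) = \lrk{t-1}(j')$. By the hypothesis, they share their first $\cx{t-1}$ symbols, and their extensions $x = j + \cx{t-1}$ and $x' = j' + \cx{t-1}$ lie in a single partition $P$. Because key~\eqref{eq:keyt} ties, $\lrk{t-1}(x) = \lrk{t-1}(x')$, and both ranks are local ranks in the same partition $P$.

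The key lemma is this: \emph{if two suffixes of one partition have equal local rank after \gen{t-1}, they share their first $\cx{t-1}$ symbols.} There are two cases. If both suffixes are still active in \gen{t-1}, the lemma is exactly the inductive hypothesis applied to $x, x'$. Otherwise one of them has retired, and its rank is unique within $P$. Two distinct suffixes with equal ranks would then contradict uniqueness, so $x = x'$, which gives $j = j'$.

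Given the lemma, $T[j, j+\cx{t})$ is $\ctx{j}{\cx{t-1}}$ followed by $\ctx{x}{\cx{t-1}}$, and likewise for $j'$. So the two suffixes share their first $\cx{t} = 2\cx{t-1}$ symbols, which is the first claim.

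For the second claim, the extension suffixes in \gen{t+1} are $j + \cx{t}$ and $j' + \cx{t}$. This is the step I expect to be the main obstacle. Sharing $\cx{t}$ symbols does not obviously put $j+\cx{t}$ and $j'+\cx{t}$ in the same partition, because $\pmap$ depends on the $K$ symbols starting there, which lie beyond the shared prefix.

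The plan is to strengthen the induction hypothesis to also track $\pmap(j+\cx{t})$. The intended route is to note that, once \gen{0} has fixed $\pmap(j+\ell)$, the partitions of later extensions are forced by equal local ranks. Concretely, I would show inductively that tied suffixes agree on $\pmap(j + m\ell)$ for all $m \le 2^t$, and also on the first $\cx{t}$ symbols. Equal ranks in $P$ carry the agreement of $\pmap(x + m\ell)$, for $m \le 2^{t-1}$, from $x$ to $x'$, and that covers offsets up to $\cx{t}$ from $j$.

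If the invariant alone does not force this agreement, I would look for the mechanism in the sends and the liveness filter of Algorithm~\ref{alg:construct}, since the partition of the next extension may be effectively folded into the key. I would also check the sentinel boundary cases, $j + \cx{t} \ge n$, separately, using the uniqueness of contexts that contain \sentinel.
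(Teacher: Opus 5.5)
\textbf{Gap: the second claim is left unproved.} Your base case and your derivation of the first claim (the shared $\cx{t}$ symbols) match the paper's proof. The second claim, that the extension suffixes in generation~$t+1$ lie in one partition, is not actually proved.

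Your opening remark that it ``follows from the first plus the partitioning rule'' is false, as you later notice. $\pmap(j + \cx{t})$ reads the $K$ symbols at $j + \cx{t}$, which lie beyond the shared prefix, so the claim cannot come from the text alone. It comes from the partition component of key~\eqref{eq:key0}, carried forward by the induction.

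The missing step is to apply the \emph{second} conjunct of the induction hypothesis to the pair $(x, x')$:
\begin{itemize}
\item Your key lemma already shows that if $x \ne x'$, then $x$ and $x'$ are distinct suffixes of one partition with equal rank after generation~$t-1$. Hence neither rank is final, and they are tied after generation~$t-1$.
\item The hypothesis then gives more than a shared $\cx{t-1}$-prefix. It also says that their extension suffixes in generation~$t$, namely $x + \cx{t-1}$ and $x' + \cx{t-1}$, lie in one partition.
\item Since $x + \cx{t-1} = j + 2\cx{t-1} = j + \cx{t}$, and likewise for $j'$, these are exactly the extension suffixes of $j$ and $j'$ in generation~$t+1$.
\end{itemize}
This is how the paper closes the induction. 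The theorem as stated is already a strong enough hypothesis, and no strengthening is needed.

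\textbf{On your proposed strengthening.} The invariant that tied suffixes agree on $\pmap(j + m\ell)$ for all $m \le 2^t$ is true, and your one-line sketch of it would go through. However, only its top case $m = 2^t$ is ever used, and that case is precisely the theorem's second claim. The intermediate multiples only reproduce themselves. Your fallback of looking to the sends and the liveness filter points the wrong way: ranks and partitions are fixed by the sort keys alone, and the filter only drops messages to suffixes that have already retired.

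\textbf{A smaller point.} You define ``tied'' as having equal keys, but the theorem speaks of equal ranks, and your induction uses the direction from equal ranks to equal keys when it applies the hypothesis to $(j, j')$ and to $(x, x')$. That direction needs the rank-assignment rule. The $s$ suffixes tied at old rank $r$ receive new ranks in $[r, r+s)$, which no other suffix of the partition holds. Within such a group, a new rank counts the members with strictly smaller extension ranks. The paper proves this explicitly, and your proof should too.
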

\begin{proof}[Proof sketch]
By induction on $t$.
Key~\eqref{eq:key0} contains $\pmap(j + \ell)$, so suffixes tied after \gen{0} agree on the partition of their extension suffixes.
For $t \ge 1$, suffixes $j$ and $j'$ tied after \gen{t} were tied after \gen{t - 1}, and so were their extension suffixes $m = j + \cx{t - 1}$ and $m' = j' + \cx{t - 1}$, which lie in one partition by induction.
The hypothesis for both pairs gives $j$ and $j'$ equal \cx{t}-contexts, and places $m + \cx{t - 1} = j + \cx{t}$ and $m' + \cx{t - 1} = j' + \cx{t}$ in one partition.
Appendix~\ref{sec:supp-proof-tci} gives the full proof.
\end{proof}

\para{Why \gen{0} needs the partition}
Without the component $\pmap(j + \ell)$, two suffixes tied after \gen{0} could have their extension suffixes in different partitions, whose local ranks cannot be compared, and \gen{1} could order them wrongly.
With it, the property carries through the ranks, so later keys need no such component.
In the example of \S\ref{sec:method-doubling}, \gen{0} gives suffix 5 the key $(\texttt{a}, \pmap(6))$, and suffixes 1 and 3 the keys $(\texttt{a}, \pmap(2))$ and $(\texttt{a}, \pmap(4))$.
Partition \texttt{\$} precedes partition \texttt{n}, so suffix 5 already sorts first, as in Figure~\ref{fig:prefix-doubling}.
Suffixes 1 and 3 stay tied, since $\pmap(2) = \pmap(4)$: their extension suffixes 2 and 4 share partition \texttt{n}, so their local ranks compare correctly.

\para{What target coherence buys}
A rank is only ever compared with ranks from its own partition, where local ranks order suffixes exactly as global ranks would, since the two differ by the partition's offset.
Ranks therefore need only enough bits to count the suffixes of one partition, 31 in \gpusaca, and key~\eqref{eq:keyt} fits in 62 bits, whatever the value of $n$.
Without coherence, a sender could still add its partition's offset and send global ranks, but these take $\lceil \log_2 n \rceil$ bits, 38 for the 238B tokens of \data{pile}, and key~\eqref{eq:keyt} would no longer fit a 64-bit word.

\subsubsection{\textbf{Rank messages}}
\label{sec:method-message}

In \gen{t \ge 1}, the only suffix that may need the rank of $m$ is $m - \cx{t - 1}$.
So a rank has at most one reader per generation, and its holder finds the reader by subtraction.
Once \gen{t} has ranked the active suffixes of a partition, each of them, $m \ge \cx{t}$, sends one \defn{message}
\begin{equation}
\label{eq:msg}
    \langle\, m - \cx{t}, \; \lrk{t}(m), \; \mathit{final} \,\rangle
\end{equation}
to its reader in \gen{t + 1}, where the bit \textit{final} is set if and only if the rank of $m$ is final, \ie $m$ retires in \gen{t} and its rank will not change in later generations.
We call a message with \textit{final} set a \defn{poison pill}, since its reader retires on receiving it, as shown below.
Messages are appended to the \defn{inbox} of the reader's partition, $\pmap(m - \cx{t})$.
Generations are separated by \defn{barriers}: barrier~$t$ precedes \gen{t}, and \gen{t + 1} reads the inboxes only after \gen{t} has written all of them.
Within a generation, the partitions are independent, and run in any order or concurrently.

\para{Receive-side join}
Before ranking a partition, \gpusaca\ sorts its active suffixes by index, and each message in its inbox finds its reader by binary search.
A hit supplies the reader's extension rank.
A miss means that the reader has already retired, and the message is discarded.
No suffix receives two messages in a generation (Theorem~\ref{thm:coverage}), so the messages join without conflicts.

\para{Forwarding final ranks}
A suffix stops sending once its rank is final.
Its rank may still be needed later, by a suffix whose extension suffix retired two or more generations earlier, and no message would then carry it.
Forwarding fills this gap.
Suppose that in \gen{t}, suffix $x$ receives the final rank of $m = x + \cx{t - 1}$.
In \gen{t + 1}, suffix $y = x - \cx{t - 1}$ needs the rank of $y + \cx{t} = m$, since $\cx{t} = 2\cx{t - 1}$. $x$ is the midpoint of $y$ and $m$.
So besides its own message, $x$ \defn{forwards} the rank that it received, as $\langle y, \lrk{}(m), \mathit{final} \rangle$ with \textit{final} set.
An active suffix whose extension suffix has already retired retires in the current generation. By Theorem~\ref{thm:tci}, the suffixes tied with it have their extension suffixes in the same partition, where the retired suffix's final rank is unique, so no suffix shares its key.
Hence $x$ retires in \gen{t}, its own message is a pill as well, and a suffix forwards at most once.
The same argument gives \defn{pullback}. If a suffix $j$ is active in \gen{t \ge 2}, then its extension suffix of \gen{t - 1}, $j + \cx{t - 2}$, was active in \gen{t - 1}.

\begin{theorem}[Complete coverage]
\label{thm:coverage}
In every \gen{t \ge 1}, each suffix receives at most one message, and each active suffix $j$ receives exactly one. It carries the rank of the extension suffix $m = j + \cx{t - 1}$, marked final if and only if $m$ retired before \gen{t}.
$m$ sends this message itself if it was active in \gen{t - 1}; otherwise, the midpoint $j + \cx{t - 2}$ forwards it.
\end{theorem}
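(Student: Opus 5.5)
We argue by induction on $t \ge 1$. First we settle uniqueness of delivery, which holds in every generation, and then existence and correctness for active suffixes, splitting on whether the extension suffix $m = j + \cx{t-1}$ was active in \gen{t-1}.

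\emph{At most one message.} A message read in \gen{t} is of one of two kinds.
\begin{enumerate}
\item A send from \gen{t-1}, $\langle m - \cx{t-1}, \lrk{t-1}(m), \cdot \rangle$, issued by $m$. The map $m \mapsto m - \cx{t-1}$ is injective, and each active $m$ sends once.
\item A forward from \gen{t-1} (for $t \ge 2$), issued by some $x$ to $x - \cx{t-2}$. By the forwarding paragraph, $x$ received a pill in \gen{t-1}, so $x$ retires in \gen{t-1}, forwards at most once, and its own send is a pill.
\end{enumerate}
Now suppose a reader $y$ receives both a send from $m = y + \cx{t-1}$ and a forward from $x = y + \cx{t-2}$. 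Then $x$ is the midpoint, and $m = x + \cx{t-2}$ is exactly the extension suffix of $x$ in \gen{t-1}. The forward requires that $m$ had retired before \gen{t-1}. But a send from $m$ in \gen{t-1} requires $m$ to be active in \gen{t-1}. These contradict each other. Two forwards to the same $y$ come from the same $x$, and two sends from the same $m$, so in every case there is at most one message. I will use this case analysis to justify that the receive-side join has no conflicts.

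\emph{Exactly one, with the right content.} Fix $j$ active in \gen{t}. Let $m = j + \cx{t-1}$, the position whose rank key~\eqref{eq:keyt} needs.

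\emph{Case A: $m$ was active in \gen{t-1}.} Then $m$ was ranked in \gen{t-1} and sent $\langle m - \cx{t-1}, \lrk{t-1}(m), \mathit{final}\rangle = \langle j, \ldots\rangle$. Its final bit is set iff $m$ retired in \gen{t-1}, and given that $m$ was active in \gen{t-1}, this is the same as retiring before \gen{t}. The liveness filter drops messages only to suffixes known to have retired, and $j$ is active, so nothing is dropped. For $t = 1$, every suffix is active in \gen{0}, so we are always in this case and the base case is done. One detail: the message must carry $\lrk{t-1}(m)$ even when $m$ retired earlier than that. Since a retired suffix keeps its rank, its final rank equals $\lrk{t-1}(m)$.

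\emph{Case B: $m$ was not active in \gen{t-1} (so $t \ge 2$).} Let $x = j + \cx{t-2}$, the extension suffix of $j$ in \gen{t-1}. We need three facts.
\begin{itemize}
\item By pullback, $j$ active in \gen{t} implies $j$ active in \gen{t-1}.
\item By the induction hypothesis applied to $j$ in \gen{t-1}, $j$ received exactly one message carrying the rank of $x$. Note that the pullback argument does not give that $x$ was active in \gen{t-1}.
\item We need $x$ active in \gen{t-1} to receive its message there. Pullback applied to $j$ at \gen{t} gives that $x$ was active in \gen{t-1}; this is precisely the pullback statement, since $x = j + \cx{t-2}$ is the extension suffix of $j$ in \gen{t-1}. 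In \gen{t-1}, the extension suffix of $x$ is $x + \cx{t-2} = m$.
\end{itemize}
By the induction hypothesis applied to $x$ in \gen{t-1}, $x$ received exactly one message, carrying $m$'s rank. Because $m$ was not active in \gen{t-1}, it retired before \gen{t-1}, so that message is final. By the forwarding rule, $x$ then forwards $\langle x - \cx{t-2}, \lrk{}(m), \mathit{final}\rangle = \langle j, \ldots \rangle$ after \gen{t-1}. This is a final message, and "$m$ retired before \gen{t}" holds, so the claimed equivalence is satisfied. Again the liveness filter cannot drop it, because $j$ is active.

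\emph{Main obstacle.} The delicate step is Case B: making the induction hypothesis apply to $x$. This requires $x$ to be active in \gen{t-1}, which I derive from pullback (itself a consequence of Theorem~\ref{thm:tci}). It also requires aligning the distances correctly, namely $\cx{t-1} = 2\cx{t-2}$, so that forwards issued after \gen{t-1} land on the reader in \gen{t} and not one generation off. I also have to check that the final bit is exactly right in Case A when $m$ retires in \gen{t-1} itself, and to confirm that messages addressed to negative positions are never needed: if $j \ge 0$, then its sender $m$ satisfies $m - \cx{t-1} = j \ge 0$, so the rule that sends nothing to positions $y < 0$ never suppresses a message an active suffix requires.
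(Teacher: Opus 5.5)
Your proof is correct and takes the same route as the paper: induction on $t$, the observation that only the extension suffix $m$ (by its own message) and the midpoint (by forwarding) can address $j$, a case split on whether $m$ was active in generation $t-1$, and pullback plus the induction hypothesis to show that the midpoint forwards a pill. The remaining issues are presentational: your ``at most one'' step uses the induction hypothesis without citing it (a forward from $x$ means $x$ received $m$'s rank marked final, so $m$ retired before generation $t-1$), which is exactly how the paper rules out double delivery; and your Case B remark that pullback does not give $x$ active contradicts your next, correct bullet and should be dropped.
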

\begin{proof}[Proof sketch]
By induction on $t$.
Only two suffixes can message $j$: $m$, with its own message, and the midpoint $z = j + \cx{t - 2}$, by forwarding.
If $m$ was active in \gen{t - 1}, $m$ sends its rank, and $z$ received that rank unmarked and forwards nothing.
Otherwise, $z$ was active in \gen{t - 1} by pullback, received the final rank of $m$ by induction, and forwards it to $j$.
Appendix~\ref{sec:supp-proof-coverage} gives the full proof.
\end{proof}

\para{Message volume and the liveness filter}
A suffix sends at most one message of its own and one forward in each generation that ranks it, and forwards at most once over a run.
\Gen{0} thus sends up to one message per suffix, and each later generation at most two per suffix that it ranks.
Messages to retired suffixes are \defn{dead}, and the join discards them.
A sender cannot know whether its reader will retire in the current generation, but it can know whether the reader already has.
As soon as a partition has been ranked, \gpusaca\ flags its retired suffixes in a bit beside their entries of \pmap, and sends no message to a flagged reader.
This \defn{liveness filter} removes only dead messages, so it changes neither the ranks nor Theorem~\ref{thm:coverage}.
It removes every message to a suffix that retired in an earlier generation, and within a generation, to one that retired in a partition processed before the sender's.

\subsection{Placing the streams}
\label{sec:method-residency}

With the rank table gone, nearly all of \gpusaca's data is streams, and placing the streams across the memory tiers needs no lookahead.

\para{Streams and random-access structures}
\gpusaca\ holds three kinds of data that grow with $n$.
The \defn{records} of each partition hold a suffix index and a local rank per suffix, laid out with the active suffixes first, so that each generation reads a prefix of them.
The messages fill the inboxes of each generation.
Records and messages are \defn{streams}: they are written and read sequentially, in chunks, and each chunk can reside in device memory, in host memory, or on SSD.
Between generations, \gpusaca\ keeps them in host memory or on SSD. Device memory holds only the few partitions in flight.
The third kind is the two structures read at random, the text $T$ and the partition map \pmap.
$T$ is last read by \gen{0}, to gather the $\ell$-contexts, while \pmap\ addresses every message and holds the liveness filter's flags until the last generation.
\gpusaca\ keeps both in host memory, and only one at a time when memory is short (\S\ref{sec:system-residency}).
In principle, neither has to be resident.
Partitioning could store each suffix's $\ell$-context in its record, and the messages could carry the partition addresses that the next generation needs, but on our corpora each costs more stream bytes than it saves in host memory (Appendices~\ref{sec:supp-contexts} and~\ref{sec:supp-relay}).

\para{Messages are elastic}
Messages can still make barrier~1's stream footprint large.
\Gen{0} may send one message per suffix, and each message includes a suffix index as well as a rank.
Unlike a rank table, however, messages are needed for only one generation and are read sequentially.
Their inboxes also shrink as suffixes retire, so the implementation can keep them in chunks across host memory and SSD.
As in the semi-external model~\cite{abello02}, what must be resident bounds $n$, and what may be resident bounds only the traffic.

\para{Lifetimes}
Almost every chunk is \defn{short-lived}. It crosses exactly one barrier, written before it and read right after it.
These are:
\begin{inparaenum}
    \item the suffix indices that partitioning scatters, read by \gen{0};
    \item the records of the active suffixes, which each generation writes for the next; and
    \item the messages, which each generation sends to the next.
\end{inparaenum}
The only exception is the records of retired suffixes, which wait in their partitions until output assembly reads them.

\para{Greedy placement}
All chunks share one host memory budget, whose free room only grows over a run, as $T$ is released.
\gpusaca\ places each chunk as it is written, in host memory if there is room.
Retired records have the lowest priority. When a short-lived chunk finds no room, it evicts retired records to SSD, just in time, and nothing else is ever evicted.
This greedy policy moves as few bytes to and from SSD as any placement can, up to the granularity of chunks.
A short-lived byte crosses one barrier, so any placement must spill the short-lived bytes that exceed the free room there.
The greedy policy, which never evicts them, spills no more.
A retired record is read only by output assembly, after every other use, so evicting it first costs nothing, as in Belady's rule~\cite{belady66} of evicting what is used farthest in the future.

\para{SSD traffic}
The SSD traffic of a run is therefore twice the short-lived bytes that exceed the free room at each barrier, summed over the barriers, plus twice the largest excess of retired records over the room that the short-lived bytes leave free, plus the output, written once.
With ample host memory, only the output reaches SSD.
Barrier~1 holds the most stream data on every corpus we index.
It carries every record, active or retired, and up to one message per suffix, together about 18 bytes per suffix (\S\ref{sec:system-layout}).
The liveness filter already removes about half of the dead messages of \gen{0} (\S\ref{sec:evaluation-structure}).

\begin{figure*}[t]
    \centering
    \begin{subfigure}[t]{\columnwidth}
        \centering
        \includegraphics{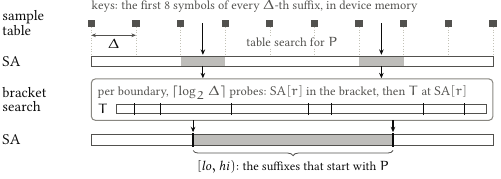}
        \caption{$P$ no longer than the keys}
        \label{fig:query-search-short}
    \end{subfigure}\hfill
    \begin{subfigure}[t]{\columnwidth}
        \centering
        \includegraphics{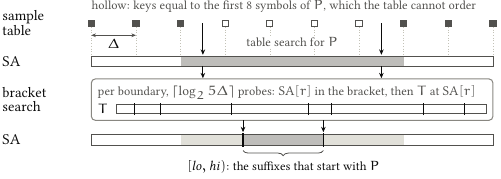}
        \caption{$P$ longer than the keys}
        \label{fig:query-search-long}
    \end{subfigure}
    \caption{
        Searching the index for a pattern $P$, top to bottom.
        The table search compares $P$ with only the keys of the sample table, and places each boundary of the interval $[\mathit{lo}, \mathit{hi})$ in a bracket (shaded, top): (a) between two consecutive samples, or (b) if keys equal the first 8 symbols of $P$ (hollow), which the table cannot order against $P$, in one bracket that spans their samples.
        The bracket search locates each boundary inside its bracket, reading \sa\ and the text $T$ from whichever tiers hold them, and yields the interval (shaded, bottom).
    }
    \label{fig:query-search}
\end{figure*}

\subsection{Queries over the index}
\label{sec:method-query}

A small sample of the suffix array, kept in device memory, narrows each search to one short bracket, which is then read from wherever the index lives.

\para{Finding a pattern}
The index consists of the text $T$ and its suffix array \sa.
A query for a pattern $P$ asks for the interval $[\mathit{lo}, \mathit{hi})$ of \sa\ whose suffixes start with $P$, whose length is the number of occurrences of $P$ in $T$~\cite{manber90}.
Two binary searches find the boundaries of the interval.
Each probe makes two dependent random reads: an entry of \sa, and then the text of that suffix, up to $|P|$ symbols.
On our corpora, a boundary takes 33--38 probes into a text and \sa\ of \qty{56}{\giga\byte} to \qty{1.7}{\tera\byte}.

\para{Two-step search}
A \defn{sample table}, also called a supra-index~\cite{ferragina08,baezayates96}, keeps the first 8 or 16 symbols of every \stride-th suffix of \sa, so its keys are sorted (Figure~\ref{fig:query-search}).
The \defn{table search} compares $P$ with these keys alone, and finds for each boundary a \defn{bracket} in which the boundary must lie.
Each \defn{bracket search} then binary-searches its bracket for its boundary, comparing against the text: at $\stride = 32$, in 5 probes into \sa, within 160 contiguous bytes.
A pattern longer than the keys ties with every sample whose key equals its first 8 or 16 symbols, since the table cannot order those samples against it.
Both of its boundaries then lie in one bracket that spans all the tied samples, and their searches take more probes (Figure~\ref{fig:query-search-long}).

\para{Placement}
A bracket search reads \sa\ only within one contiguous bracket, but reads the text at every probe, at scattered positions.
It therefore runs on the processor whose memory holds the text, the device or the host, and fetches each bracket from whichever tier holds \sa: device memory, host memory, or SSD.
The sample table stays in device memory: at $\stride = 32$ and 8-symbol keys, it is a fourteenth of the index, so it fits where the index does not (\S\ref{sec:system-query}).

\section{Streaming Partitions Through the GPU}
\label{sec:system}

\begin{figure*}[t]
    \centering
    \includegraphics{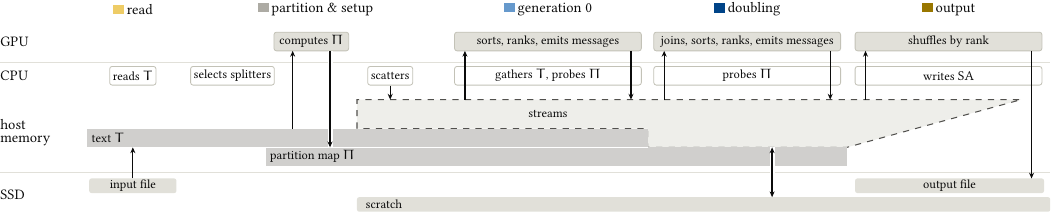}
    \caption{
        \gpusaca\ end to end: where each phase runs, and where its data lives.
        The columns show its five phases (\S\ref{sec:system-pipeline}).
        In the last three, each partition passes through the host's CPU cores to the GPU and back.
        $T$ and \pmap\ are held until the ends of \gen{0} and doubling.
        The streams (records and messages) are elastic: they take the rest of host memory's budget, and spill to the SSD beyond it (\S\ref{sec:system-residency}).
        If $T$ and \pmap\ do not fit together, \pmap\ stays on the SSD until $T$ is released (\S\ref{sec:system-residency}).
    }
    \label{fig:system-overview}
\end{figure*}

\gpusaca\ divides the work between the device and its host.
The device sorts and ranks the partitions, a few at a time, and the host prepares their data, looks up the text and the partition map, and moves every byte between host memory, SSD, and the device.
This section describes the pipeline that streams the partitions through the device (\S\ref{sec:system-pipeline}), where the streams live in between (\S\ref{sec:system-residency}), and the query engine (\S\ref{sec:system-query}).
Figure~\ref{fig:system-overview} shows where each phase runs and where its data lives.

\subsection{Pipeline}
\label{sec:system-pipeline}
\label{sec:system-device}

\para{Phases}
\gpusaca\ runs the construction in five phases (Figures~\ref{fig:system-overview} and~\ref{fig:perf-phases}).
It first reads $T$ into host memory.
During partitioning, the host selects splitters, the device computes the partition map and partition sizes, and the host scatters suffix indices into their partitions.
A background thread pins the buffers used for transfers.
\Gen{0} and the doubling generations then stream partitions through the device for sorting and ranking.
Finally, output assembly streams the records through the device, places each suffix at its final rank, and writes each partition to its output range.

\para{Thread pools}
Four thread pools, connected by queues, move each partition through a generation.
\defn{Loaders} read each partition's records and inbox from host memory or SSD into pinned host buffers called \defn{slabs}.
Each slab holds one chunk of records or messages (\S\ref{sec:system-residency}).
Because no suffix receives two messages in a generation (Theorem~\ref{thm:coverage}), an inbox is no larger than its partition.
A slab pool sized for the largest partition can therefore hold any partition's records and inbox.
\defn{Stagers} do the host's work on each slab, before and after the device's.
\defn{Dispatchers}, one per CUDA stream, issue the transfers and launch the kernels, and \defn{writers} write the partition's records and messages back to host memory or SSD.
Since the stagers and the dispatchers are separate, the host and the device work at the same time, on up to three partitions in flight, one per CUDA stream.

\para{Device work}
In a doubling generation, the device sorts a partition's active suffixes by index, joins its inbox to them, sorts their keys~\eqref{eq:keyt} in one 62-bit radix sort~\cite{merrill11}, assigns the ranks in one scan, separates the retired suffixes, and buckets the messages and forwards by destination partition.
\Gen{0} instead sorts by key~\eqref{eq:key0}, in two stable radix sorts.
A partition takes 97 bytes of device memory per suffix, mostly for the double buffers of the radix sorts.
\gpusaca\ sizes the partitions so that three of them, with their sort workspaces, fit in 90\% of the free device memory, and sets the target size $B$ so that partitions of up to $1.2B$ suffixes fit, which absorbs the imbalance of the sampling (\S\ref{sec:method-partition}).

\para{Host work}
Before a slab goes to the device, the stagers look up in \pmap\ the partitions that each suffix's message and possible forward go to, with the retirement flags of their readers, which the kernels use as the liveness filter (\S\ref{sec:method-message}).
At \gen{0}, they also gather each suffix's key~\eqref{eq:key0} from $T$ and \pmap.
After the device returns a partition, they flag its retired suffixes in \pmap.

\subsection{Streams in host memory and on SSD}
\label{sec:system-layout}
\label{sec:system-residency}

\para{Layout}
Suffix indices take 5 bytes and ranks 31 bits, and a message carries its \textit{final} bit in the top bit of its rank.
A record and a message thus both take 9 bytes, and the streams at barrier~1 take up to 18 bytes per suffix.
Each stream is stored in chunks of $2^{21}$ records or $2^{18}$ messages, which are the unit of transfer and of placement (\S\ref{sec:method-residency}).
The text, the spilled chunks, and the output bypass the page cache (\odirect), whose pages would otherwise compete for host memory, and each partition writes its range of the output at its offset in the file.

\para{Budget}
The host memory budget, 90\% of the available memory by default, holds $T$, \pmap, and the working buffers, chiefly the pinned slabs, and leaves the rest, the free room, to the chunks of the streams.
The free room only grows over a run: partitioning releases the scatter buffers, and \gen{0} releases $T$ after its last use.
On our server (\S\ref{sec:prelim-hw}), $T$ and \pmap\ fit the budget together up to about 270B tokens, and with \pmap\ on SSD, where only its flags stay in host memory, up to about 420B.

\para{Placement}
Each chunk is placed as it is written: in host memory if there is room, and on SSD otherwise.
The scatter fills all partitions at once, so it gives host memory to whole partitions, largest first.
From then on, records written back keep their place in host memory, messages fill host memory before they spill, and chunks of retired records take only free room.
A chunk of active records or messages that finds no room evicts a chunk of retired records to SSD, just in time, and takes its place.
This realizes the greedy policy of \S\ref{sec:method-residency}.

\para{The partition map on SSD}
When $T$, \pmap, and the working buffers do not fit the budget together, \gpusaca\ writes \pmap\ to SSD as partitioning computes it, and keeps only its retirement flags, one bit per suffix, in host memory until $T$ is released.
Meanwhile, \gen{0} recomputes the entries of \pmap\ that it needs, at $j + \ell$ and $j - \ell$ for each suffix $j$, by binary-searching the $K$-grams of $T$ at those positions among the splitters.
\pmap\ is then written once and read twice, and in exchange the free room at barriers~0 and~1 grows by nearly its size.

\subsection{Query engine}
\label{sec:system-query}

\begin{table}[t]
\caption{
    The placements of the index that our corpora land in: where the text and \sa\ live, where the bracket search runs, and the sample table of each corpus, its stride \stride, symbols per key, and size.
    The table search always runs on the device.
}
\label{tab:query-placements}
\small
\setlength{\tabcolsep}{3pt}
\begin{tabular}{@{}llllS[table-format=3.0]S[table-format=2.0]S[table-format=2.1]l@{}}
    \toprule
    & & & \multirow[b]{2}{*}{\makecell{Bracket\\search}} & \multicolumn{3}{c}{Sample table} & \\
    \cmidrule(lr){5-7}
    & Text & \sa & & {\stride} & {Key} & {GB} & Corpus \\
    \midrule
    A & device & device      & device & 32   & 8    & 4.0  & \data{dolma-sample} \\
    C & host   & host        & host   & 32   & 16   & 22.0 & \data{stackexchange} \\
      &        &             &        & 32   & 16   & 24.1 & \data{wikipedia} \\
      &        &             &        & 32   & 16   & 30.2 & \data{arxiv} \\
      &        &             &        & 32   & 16   & 63.9 & \data{github} \\
    D & host   & SSD         & host   & 128  & 16   & 49.5 & \data{c4} \\
      &        &             &        & 128  & 16   & 59.6 & \data{pile} \\
    \bottomrule
\end{tabular}
\end{table}

\para{Placement}
The index consists of the text and its 5-byte \sa.
At load time, the engine chooses the first placement in the order A, C, B, D that fits the available device and host memory (Table~\ref{tab:query-placements}).
In A, the device holds both the text and the suffix array, so it runs the bracket search. In C and D, the host runs that search.
B keeps the text in device memory but places the suffix array in pinned host memory. It follows C because the host can search its own memory faster than the device can fetch scattered suffix-array entries over PCIe (\S\ref{sec:prelim-hw}). None of our evaluated corpora uses B (Appendix~\ref{sec:supp-query-placements}).
The stride of the sample table starts at 32, and doubles until the table fits in the free device memory.
Its keys hold 8 symbols when the device runs the bracket search, and 16 when the host does, since the table then has the device memory to itself.

\para{Table search}
Batches of patterns stream through the device in chunks.
Each pattern's first 8 or 16 symbols are packed into an integer key, and two adjacent device threads search the table for its two boundaries; their searches probe the same keys until one equals the pattern's prefix, so their loads coalesce.
In A, each thread then binary-searches its bracket in device memory.

\para{Bracket search on the host}
In C and D, the host's cores search the brackets of one chunk while the device runs the table search on the next.
A binary search is a chain of dependent cache misses, so each core interleaves 32 searches, as in asynchronous memory access chaining~\cite{kocberber15}, and prefetches each probe's entry of \sa\ and suffix of the text a turn ahead.
In D, \sa\ stays memory-mapped on SSD, and before searching a chunk, the host requests the pages of the next chunk's brackets, so that the drives read them while the cores search.

\para{Checks}
The engine checks the order of \sa\ at load, and an \defn{interval check} verifies each interval without a search, in $\bigoh{|P|}$ time.

\section{Evaluation}
\label{sec:evaluation}

This section evaluates \gpusaca\ on seven corpora of 8 to 238B tokens (\S\ref{sec:evaluation-setup}).
We validate every suffix array that it builds, and test the method's invariants throughout its implementation (Appendix~\ref{sec:supp-validation}).
The section answers four questions.
\begin{compactenum}[Q1.]
    \item \emph{How fast does \gpusaca\ build the index?} (\S\ref{sec:evaluation-build})
    It builds the suffix array \numrange{3.0}{8.2}$\times$ faster than \ig's pipeline end to end, and its construction is \numrange{6.3}{11.4}$\times$ faster than that of \libsais, on every corpus that they build.
    On prefixes of \data{c4}, its lead over \ig\ grows with the length of the corpus.

    \item \emph{Does it scale past each memory tier?} (\S\ref{sec:evaluation-reach})
    Yes.
    Device memory bounds only the partition size. With a sixteenth of its default, about \qty{5}{\giga\byte}, builds take \numrange{1.3}{1.5}$\times$ as long.
    Dropping the rank table moves the wall of host memory on our server from about 130B to about 420B tokens.
    As host memory shrinks to its floor, builds slow by at most \num{3.3}$\times$, with no cliff.

    \item \emph{What determines its cost?} (\S\ref{sec:evaluation-structure}, \S\ref{sec:evaluation-cost})
    The repeat structure of a corpus, which varies the doubling work 30-fold across our corpora.
    Doubling costs about \qty{2}{\ns} per active suffix and generation, and spilling adds the SSDs' transfer time on \data{c4} and \data{pile}.

    \item \emph{Is the finished index fast to query?} (\S\ref{sec:evaluation-query})
    The batched query engine returns \ig's counts at median speedups of \numrange{10.8}{32.6}$\times$, with the index in device memory, in host memory, or on SSD.
\end{compactenum}

\subsection{Experimental setup}
\label{sec:evaluation-setup}

\subsubsection{\textbf{Platform}}
We run all experiments on the server of \S\ref{sec:prelim-hw}, whose measured rates Table~\ref{tab:hardware} lists.
\gpusaca\ is implemented in C++17 and CUDA, and is available at \url{https://github.com/saltsystemslab/Cascade}.
We built it with CUDA~12.8 and GCC~11.5, and it runs on Rocky Linux~9.7 (Linux~5.14) with NVIDIA driver 580.105.08.
We run it with $\ell = 16$-token initial contexts and $K = 8$-token splitters on every corpus, and $K = 20$ in one sweep (\S\ref{sec:evaluation-reach}); it sizes its partitions to the device memory (\S\ref{sec:system-device}).

\subsubsection{\textbf{Datasets}}
\label{sec:datasets}
We use seven tokenized corpora that \ig\ indexes~\cite{liu24} (Table~\ref{tab:datasets}).
\data{dolma-sample} is an 8B-token sample of the Dolma~v1.6 pretraining corpus~\cite{soldaini24}, \data{c4} the English split of C4~\cite{raffel20}, cleaned web text, and \data{pile} the deduplicated Pile~\cite{gao20}, a 22-domain mixture on which the Pythia models were trained~\cite{biderman23}.
Four are slices of Red\-Pajama-V1~\cite{weber24}: the question-answer threads of the 28 largest Stack Exchange sites (\data{stackexchange}), Wikipedia articles in 20 languages (\data{wikipedia}), the \LaTeX\ sources of preprints (\data{arxiv}), and permissively licensed source files (\data{github}).
Their domains differ in repetitiveness, which sets the work of prefix doubling (\S\ref{sec:prelim-doubling}); deduplication does not remove long repeats, as \data{pile} still repeats a span of 2.2M tokens.
Each corpus is an array of 16-bit token IDs, with a separator token before each document: \ig's released array for \data{dolma-sample}, and the others tokenized with the Llama-2 tokenizer~\cite{touvron23}, as \ig\ does.

\begin{table}[!htb]
\caption{
    Evaluation corpora and their structure (\S\ref{sec:evaluation-structure}): the number of 16-bit tokens, in billions, and the longest repeat, in tokens;
    the share of suffixes still active after generation~0, and the mean number $\bar{d}$ of doubling generations per suffix;
    and the messages per suffix that a whole run sends through the liveness filter.
}
\label{tab:datasets}
\small
\setlength{\tabcolsep}{4pt}
\begin{tabular}{@{}l S[table-format=3.2] S[table-format=7.0] S[table-format=2.1, table-space-text-post=\%] S[table-format=1.2]
                   S[table-format=1.2]@{}}
    \toprule
    & {Tokens} & {Longest} & \multicolumn{2}{c}{Doubling} & {Messages} \\
    \cmidrule(lr){4-5}
    Corpus & {(B)} & {repeat} & {active} & {$\bar{d}$} & {per suffix} \\
    \midrule
    \data{dolma-sample}  & 8.04   & 171633  & 7.1\%  & 0.15 & 0.72 \\
    \data{stackexchange} & 21.97  & 99999   & 22.3\% & 0.53 & 1.01 \\
    \data{wikipedia}     & 24.12  & 65160   & 13.7\% & 0.25 & 0.75 \\
    \data{arxiv}         & 30.21  & 520234  & 18.0\% & 0.61 & 1.09 \\
    \data{github}        & 63.94  & 518745  & 77.2\% & 4.50 & 4.78 \\
    \data{c4}            & 198.08 & 88956   & 12.3\% & 0.22 & 0.69 \\
    \data{pile}          & 238.43 & 2197906 & 20.9\% & 0.64 & 1.08 \\
    \bottomrule
\end{tabular}
\end{table}

\subsubsection{\textbf{Baselines}}
We compare against \ig's indexing pipeline~\cite{liu24}, which is built for this workload, and against the state of the art in each family of \sa\ constructors (\S\ref{sec:introduction}).
\ig's pipeline (v2.6.0~\cite{infinigram}) also represents the external-memory CPU constructors.
It adopts \citet{lee22}'s parallel block-and-merge scheme, suffix-sorting overlapping blocks of the text in memory with SA-IS~\cite{nong11,ko03}, and merging their suffix arrays from SSD.
We build each \ig\ index as one shard on all 128 cores, with the same memory budget as \gpusaca's, 90\% of the available memory (\S\ref{sec:system-residency}).
Its merge compares suffixes on at most their first \num{50000} tokens, so it does not guarantee an \emph{exact} suffix array when longer repeats exist, as they do in all our corpora (Table~\ref{tab:datasets}); \gpusaca's is exact.
\libsais~\cite{libsais} (v2.10.4), a multithreaded SA-IS~\cite{nong11,karkkainen09,timoshevskaya14,xie20} library, represents the in-memory CPU constructors.
We run its 16-bit-alphabet, 64-bit-index variant in the fastest configuration we found, 64 threads on one socket (32 perform alike, 128 slower), and unbound on \data{github}, whose \qty{0.64}{\tera\byte} footprint exceeds one socket's memory.
\libcubwt~\cite{libcubwt,osipov12,wang16,buren19} (v1.0.0, its only release that constructs suffix arrays) represents the GPU constructors; it supports only byte alphabets, and device memory bounds it below our smallest corpus.
Of the standalone external-memory constructors, pSAscan~\cite{karkkainen15} supports only byte alphabets, and \sufrank~\cite{barsky20} completed on none of our corpora.

\para{Measurements}
Every reported run had exclusive use of the server.
We time every build end to end, from reading the token array until the index is durably written, except against \libsais\ in \S\ref{sec:evaluation-build}, where we time the construction alone.
We report wall-clock time, and throughput as tokens indexed per second, as the median of three runs per instance, and the sweeps of \S\ref{sec:evaluation-reach} as means; \S\ref{sec:evaluation-query} describes how we time queries.

\subsection{Build time}
\label{sec:evaluation-build}

Table~\ref{tab:baseline} compares the build times of \gpusaca\ with those of \ig's pipeline and \libsais\ (\S\ref{sec:evaluation-setup}); \libcubwt\ builds none of the corpora.

\begin{table}[!htb]
\caption{
    Build times in minutes, end to end against \ig's pipeline and of the construction alone against \libsais, which runs out of memory (OOM) on the two largest corpora, and \gpusaca's speedups ($\times$), computed from the times in seconds.
}
\label{tab:baseline}
\small
\setlength{\tabcolsep}{3pt}
\begin{tabular}{@{}l S[table-format=3.0] S[table-format=2.0] S[table-format=1.1]
                   S[table-format=2.0] S[table-format=2.0] S[table-format=2.1]@{}}
    \toprule
    & \multicolumn{3}{c}{End to end} & \multicolumn{3}{c}{Construction} \\
    \cmidrule(lr){2-4} \cmidrule(l){5-7}
    Corpus & {\ig} & {\gpusaca} & {$\times$} & {\libsais} & {\gpusaca} & {$\times$} \\
    \midrule
    \data{dolma-sample}  & 3    & 1  & 3.0  & 8     & 1  & 10.5 \\
    \data{stackexchange} & 10   & 2  & 4.7  & 15    & 2  & 8.4 \\
    \data{wikipedia}     & 12   & 2  & 5.4  & 19    & 2  & 11.4 \\
    \data{arxiv}         & 13   & 3  & 4.4  & 17    & 2  & 7.3 \\
    \data{github}        & 113  & 14 & 8.2  & 79    & 12 & 6.3 \\
    \data{c4}            & 104  & 23 & 4.6  & {OOM} & 19 & {--} \\
    \data{pile}          & 134  & 37 & 3.6  & {OOM} & 31 & {--} \\
    \bottomrule
\end{tabular}
\end{table}

\para{Against \ig}
Both pipelines read the same token array, and write its suffix array to SSD, at five bytes per suffix.
\gpusaca\ builds the index \num{3.0}--\num{8.2}$\times$ faster than \ig, and builds it exactly, whereas \ig\ truncates its suffix comparisons (\S\ref{sec:evaluation-setup}).
Across the corpora, \ig\ indexes 9.5--49M tokens/s, and \gpusaca\ 78--184M tokens/s.
\ig's merge compares suffixes byte by byte, at a cost that grows with their common prefix up to its \num{100000}-byte bound, whereas the number of \gpusaca's doubling generations that a suffix takes part in grows only with the logarithm of its repeat depth (\S\ref{sec:prelim-doubling}).
\gpusaca's lead is thus smallest on \data{dolma-sample}, the least repetitive corpus (\num{3.0}$\times$), and largest on \data{github}, the most repetitive (\num{8.2}$\times$), on which \ig\ runs 3--5$\times$ slower per token than on the other corpora.
On \data{c4}, \gpusaca\ keeps nearly its in-memory rate, 144M tokens/s against 147--184M on the four smaller corpora, and leads by \num{4.6}$\times$.
On \data{pile}, its lead is the second smallest (\num{3.6}$\times$), as its rate falls to 107M tokens/s: its generation~0 spills about \qty{3.4}{\tera\byte} to SSD, which raises that phase's cost from 2.4--2.5~ns per token in memory to 4.4~ns, against 3.1~ns on \data{c4} (\S\ref{sec:evaluation-cost}).

\para{Against \libsais}
On the construction alone, \gpusaca\ is \num{6.3}--\num{11.4}$\times$ faster than \libsais.
\libsais\ indexes the corpora within its wall (\S\ref{sec:introduction}) at 14--29M tokens/s, whereas \gpusaca's throughput on that subset ranges 85--242M tokens/s.
\gpusaca's lead is smallest on \data{github}, on which it does 30$\times$ as much doubling work per token as on \data{dolma-sample} (\S\ref{sec:evaluation-structure}), whereas SA-IS does linear work regardless of the repeats.
\libsais\ runs out of host memory on \data{c4} and \data{pile}.
At its peak footprint of \num{10.1}--\num{10.4} bytes per token, host memory bounds it to about 100B tokens, whereas \gpusaca's construction alone takes 19 and 31 minutes on them.

\para{Scaling with corpus length}
Figure~\ref{fig:perf-reach} grows a prefix of \data{c4} from 8B tokens to the whole corpus geometrically, each cut at a document boundary, and times each method end to end.
\gpusaca's throughput rises from 126 to 173M tokens/s up to 32B tokens, as it amortizes its fixed costs, and settles at 144--156M tokens/s from 64B on, as the prefixes spill \qty{0.19}{}, \qty{1.5}{}, and \qty{3.3}{\tera\byte} to SSD.
\ig\ holds 56M tokens/s up to 16B tokens and slows to 30--32M tokens/s from 64B on, so \gpusaca's lead grows from \num{2.3}$\times$ at 8B to \num{4.8}$\times$ at 64B, and stays at \numrange{4.6}{4.7}$\times$ beyond.
\libsais\ holds 16--18M tokens/s up to 32B tokens, slows to 6.4M at 64B, where its footprint spans both sockets' memory, and runs out of memory at 128B.

\begin{figure}[t]
    \centering
    \includegraphics{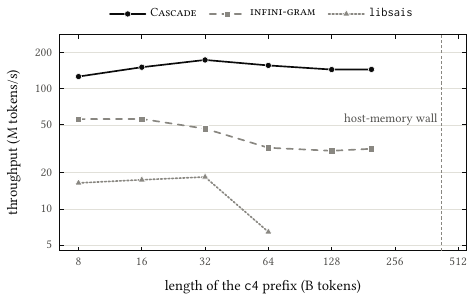}
    \caption{
        End-to-end throughput of each method on prefixes of \data{c4}, against their length; the last point is the whole corpus, and \libsais\ runs out of host memory from 128B tokens on.
        The dashed line marks the length at which the host budget fills, the wall for \gpusaca\ and \ig\ on our server.
    }
    \label{fig:perf-reach}
\end{figure}

\subsection{Reach across the memory hierarchy}
\label{sec:evaluation-reach}

Each of our first three contributions removes or softens one memory limit, and we test each in turn.
Raising the partition count emulates smaller devices (contribution 1).
Lowering the host budget forces the streams to spill across the tiers (contribution 3).
Projecting the host-memory wall with and without a rank table isolates the rank exchange (contribution 2).

\para{Device memory}
Figure~\ref{fig:perf-device} emulates smaller devices by raising the partition counts of \data{github} and \data{arxiv} from their device floors, 248 and 118, to 16 times as many, which halves the device's peak memory with each doubling, from about \qty{79}{} to \qty{5}{\giga\byte}.
From 8$\times$ on, \data{github}'s heaviest 8-gram outweighs a partition; the figure shows \data{github} at $K = 20$, which splits it at all five counts, building within 3\% of the times at $K = 8$.
The builds stay mostly flat up to 4$\times$.
Past 4$\times$, \gen{0} and doubling sort and rank up to 16 times as many partitions.
With a sixteenth of the default device memory, about \qty{5}{\giga\byte}, the builds take \num{1.3}$\times$ and \num{1.5}$\times$ as long on \data{arxiv} and \data{github} (Appendix~\ref{sec:supp-device}).
The device thus bounds the partition size and not the corpus, as long as it holds the partition of the heaviest $K$-gram, which longer splitters shrink.

\begin{figure}[t]
    \centering
    \includegraphics{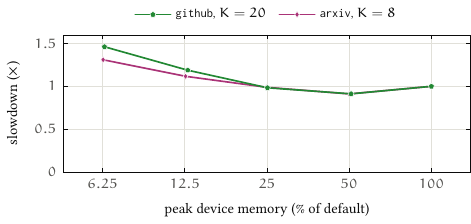}
    \caption{
        The end-to-end build's slowdown on \data{github} and \data{arxiv} against the device's measured peak memory, as their partition counts double from the device floors, $p = 248$ and $118$, to 16$\times$; each doubling halves the peak.
        Points are the mean of three runs, each coordinate relative to the default run's at the device floor, whose peak is about \qty{79}{\giga\byte}.
    }
    \label{fig:perf-device}
\end{figure}

\para{Host budget}
Figure~\ref{fig:perf-budget} lowers the host budget from its default to the floor: on \data{github}, which sends the most messages per token and keeps the most suffixes active (Table~\ref{tab:datasets}), and on \data{arxiv}, which spills nothing at the default.
Neither curve shows a cliff.
\data{github}'s spill grows from \qty{0.4}{} to \qty{5.4}{\tera\byte}, and its build slows down by up to \num{2.9}$\times$.
\data{arxiv} first spills at \qty{537}{\giga\byte}, below its \qty{584}{\giga\byte} peak footprint, and slows down by up to \num{1.4}$\times$.
Below the budgets at which the text and the partition map fit together, \qty{260}{} and \qty{150}{\giga\byte}, the map stays on SSD through generation~0.
That costs its size in writes and twice in reads, \qty{72}{} and \qty{30}{\giga\byte}, with no significant slowdown across the switch.
\gpusaca\ refuses budgets below the absolute floor, \qty{189}{} and \qty{117}{\giga\byte}, which the text, its per-suffix retirement bits, and \qty{53}{\giga\byte} of slabs and scatter buffers take up.
The floor forces every stream and the map to SSD; there, the builds slow by \num{3.3}$\times$ and \num{1.7}$\times$.

\begin{figure}[t]
    \centering
    \includegraphics{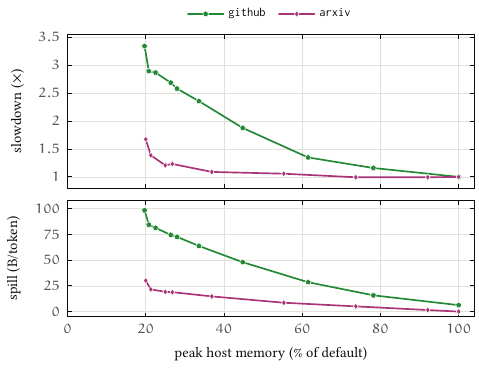}
    \caption{
        The end-to-end build's slowdown against its time at the default budget (top), and its spill in bytes per token additional to writing the output (bottom), as the host budget falls to each corpus' floor.
        Points, the mean of two or three runs, are their peak host memory, as a percentage of the peak at the default budget: \qty{963}{} and \qty{584}{\giga\byte} on \data{github} and \data{arxiv}.
        Each corpus' leftmost point is its floor, with every stream and the map forced to SSD.
    }
    \label{fig:perf-budget}
\end{figure}

\para{Walls}
Host memory bounds \gpusaca's reach where its floor, the text with its flags and buffers, fills the budget: at about 420B tokens on our server, twice the length of \data{c4} (\S\ref{sec:system-residency}).
A resident rank table of 40-bit entries would add 5 bytes per token to that, and wall the same design at about 130B tokens, below \data{c4} and \data{pile}.
Dropping the rank table lets \gpusaca\ build both.

\subsection{What the corpus costs}
\label{sec:evaluation-structure}

\para{Doubling work}
The repeat depths of a corpus fix its active sets.
A suffix stays tied after \gen{t} only while another suffix shares its $\cx{t}$-context, \ie while $\rdep(j) \ge \cx{t}$.
Figure~\ref{fig:structure-decay}a overlays the measured active sets on the repeat-depth distribution.
They agree to within 8\%, as \gpusaca\ retires some suffixes early (Appendix~\ref{sec:supp-structure}).
The corpora differ widely.
\Gen{0} leaves 7\% of the suffixes of \data{dolma-sample} active but 77\% of those of \data{github}, and thereafter about half of the active set of \data{dolma-sample} retires in each generation, whereas \data{github} retains 84--89\% of its own through \gen{5}.
Summed over the doubling generations, the active suffixes (\S\ref{sec:method-doubling}) make up the doubling work, and divided by $n$, the mean number $\bar{d}$ of doubling generations per suffix: 0.15 in \data{dolma-sample} and 4.5 in \data{github}, which does 30 times as much doubling work per token (Table~\ref{tab:datasets}).

\begin{figure}[t]
    \centering
    \includegraphics{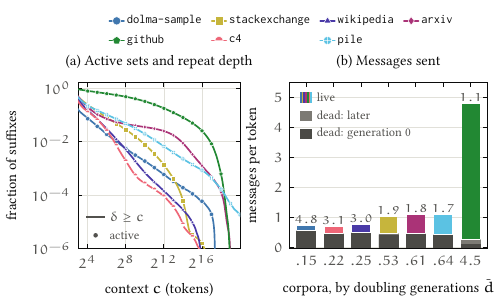}
    \caption{
        (a) The share of suffixes still active after each generation (markers), against the share of suffixes whose repeat depth is at least $c$ (lines).
        (b) The messages per suffix that a run sends, with the corpora in order of their mean doubling generations $\bar{d}$.
        The live messages, one per active suffix per generation, sum to $\bar{d}$ (colored); the dead ones that the liveness filter lets through are mostly \gen{0}'s (dark).
        Atop each bar: the messages that the run sends per live message.
    }
    \label{fig:structure-decay}
\end{figure}

\para{Message volume}
A run's message volume grows with both its doubling work and the size of the corpus.
Without the liveness filter (\S\ref{sec:method-message}), a run would send 1.21--6.18 messages per token; the filter skips \numrange{23}{48}\% of them, so a run sends from 0.69 messages per token on \data{c4} to 1.09 on \data{arxiv}, and 4.78 on \data{github} (Table~\ref{tab:datasets}).
Besides the live messages, 0.15--4.5 per token, a run sends 0.28--0.57 dead ones (Figure~\ref{fig:structure-decay}b).
90--97\% of the dead messages of a run are \gen{0}'s, but 43\% on \data{github}, where most suffixes survive \gen{0}.
The filter skips 85--90\% of the dead messages after \gen{0}, but only 41--49\% of its \gen{0} flood (Appendix~\ref{sec:supp-structure}).
The messages sent per live message thus fall as the doubling work grows.
\data{dolma-sample} discards 79\% of the messages it sends, and \data{github} only 6\%.

\para{Partition balance}
At the default partition counts, the largest partition exceeds its target size $B$ by \numrange{2.9}{5.3}\%, within the \num{20}\% slack that \gpusaca\ reserves (\S\ref{sec:system-device}), and the heaviest 8-grams hold only \numrange{0.1}{21}\% of $B$ (\S\ref{sec:method-partition}).
Sweeping $p$ up to \num{24522} and $K$ from 4 to 32 tokens keeps the largest partition within \num{10}\% of $B$, as the sampling predicts; only a $K$-gram that outweighs $B$, as \data{github}'s at $K = 4$, makes a run infeasible (Appendix~\ref{sec:supp-structure}).

\subsection{Where the bytes and the time go}
\label{sec:evaluation-cost}

\para{Bytes}
Figure~\ref{fig:perf-bytes} accounts for the bytes that each phase reads and writes.
The four smaller corpora (Table~\ref{tab:datasets}) read only their 2-byte tokens and write only the 5-byte suffix array.
The larger three also spill 6.3--19.6 bytes per token, up to \qty{3.5}{\tera\byte} at a time.
Partitioning and generation~0 write almost all of it, and the later phases read each byte back once.
The reads match the SSD traffic that \S\ref{sec:method-residency} predicts to within 2\%, and the gap is due to the page cache.

\begin{figure}[t]
    \centering
    \includegraphics{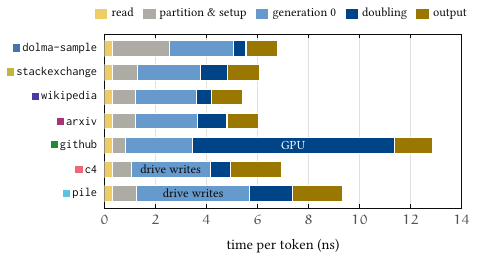}
    \caption{
        Time per token of \gpusaca's end-to-end build by phase.
        The phases are consecutive; partition \& setup cover partitioning and the overlapped pinning of the slabs.
        Labels name the resource that binds a phase taking about half the build.
    }
    \label{fig:perf-phases}
\end{figure}

\para{Seconds}
Figure~\ref{fig:perf-phases} splits each build into its phases (\S\ref{sec:system-pipeline}).
On the four corpora whose streams fit in host memory, every phase but doubling costs nearly the same per token.
Reading the tokens costs \qty{0.31}{\ns}, partitioning \qtyrange{0.5}{0.7}{\ns}, generation~0 \qtyrange{2.4}{2.5}{\ns}, and output assembly \qty{1.2}{\ns}.
Doubling instead follows the work: it costs \qtyrange{1.8}{2.2}{\ns} per active suffix and generation on \data{stackexchange}, \data{wikipedia}, \data{arxiv}, and \data{github}, though their $\bar{d}$ differs 18-fold, and dominates only \data{github}'s build, at \qty{7.9}{\ns} per token, three fifths of it, bound by the device.
Spilling adds the time of its drive writes to the phases that write.
On \data{c4} and \data{pile}, partitioning's scatter writes the spilled suffixes, and generation~0 costs \qty{3.1}{} and \qty{4.4}{\ns} per token, as it writes \qty{2.6}{} and \qty{3.4}{\tera\byte} of spill at \qty{4.2}{} and \qty{3.2}{\giga\byte\per\second} across the two drives.
The smallest build, \data{dolma-sample}'s, also waits \qty{12}{\second}, nearly a quarter of it, for its staging slabs to pin (Appendix~\ref{sec:supp-phases}).

\subsection{Query throughput}
\label{sec:evaluation-query}

\begin{figure}[t]
    \centering
    \includegraphics{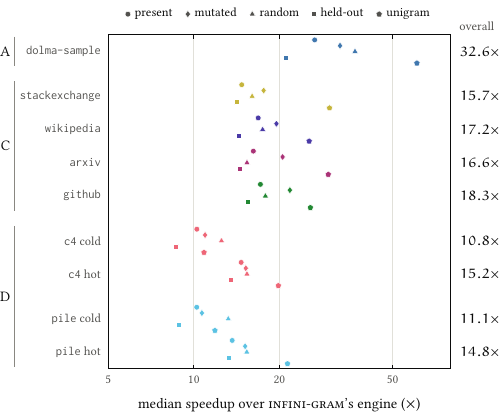}
    \caption{
        \gpusaca's query engine's median speedup over \ig's, per workload type, end to end, per corpus in its placement (Table~\ref{tab:query-placements}); with the suffix array on SSD (D), with cold and hot caches.
        At the right, the median over all 18 workloads.
    }
    \label{fig:query-speedup}
\end{figure}

\para{Setup}
We run both engines on all seven corpora, each in the placement that \gpusaca's engine picks (Table~\ref{tab:query-placements}): \data{dolma-sample} in A, the next four in C, and \data{c4} and \data{pile}, whose indexes exceed host memory, in D, where \ig\ memory-maps its \sa\ on SSD.
We run \ig's queries in parallel, one per core, without prefetching, as its API server does~\cite{infinigram}; its prefetching slows it on all seven corpora.

\para{Workloads}
Each corpus runs the same 18 workloads, seeded batches of five types that span hit rates (Appendix~\ref{sec:supp-query-workloads}): \emph{present} patterns, substrings at random positions; \emph{mutated} ones, with one token changed; \emph{random} ones, of random tokens; \emph{held-out} ones, substrings of another corpus; and all \num{65536} single tokens, the \emph{unigrams}.
Their lengths follow \ig's evaluation~\cite{liu24}: a batch holds \num{e6} patterns of 1 to 10 tokens, or \num{e5} of 100 or 1000.

\para{Measurement}
We time each engine end to end, from a batch in host memory to its counts.
Per corpus, the two engines alternate three runs each; a run loads the index (untimed; Appendix~\ref{sec:supp-query-measurement}) and repeats every workload six times (three in D), and we report the median of the three runs' medians.
In D, each engine drops its \sa\ from the page cache before every cold repetition, and keeps it in the hot ones.
Our counts match \ig's on every batch, and every interval passes its check (\S\ref{sec:system-query}).

\para{In memory}
\gpusaca's query engine is faster on every workload when the index fits in memory (Figure~\ref{fig:query-speedup}, Appendix~\ref{sec:supp-query-results}), by medians of \num{32.6}$\times$ in placement A and \numrange{15.7}{18.3}$\times$ in C.
In A, the device runs both searches; in C, the host's cores search only the brackets, in 5 probes per boundary for patterns no longer than the keys, and interleave 32 searches each.
Where we also ran a corpus in its other placements, the one that our engine picks is the fastest (Appendix~\ref{sec:supp-query-placements}).

\para{On SSD}
In placement D, our engine is \num{10.8}$\times$ and \num{11.1}$\times$ faster on \data{c4} and \data{pile} cold, and \num{15.2}$\times$ and \num{14.8}$\times$ hot.
In the cold setting, both engines wait on the drives.
Per pattern, \gpusaca's engine reads \numrange{0.03}{1.4} pages and \ig's \numrange{0.28}{11.6}, as \gpusaca\ reads only the brackets' pages, and requests them ahead of the search (\S\ref{sec:system-query}).

\section{Conclusion}
\label{sec:conclusion}

\gpusaca\ brings GPU acceleration to out-of-core suffix-array construction for LLM corpora.
Output-ordered partitions avoid merges, streamed rank exchange replaces the rank table with partition-local ranks, and the remaining streams spill across GPU memory, host memory, and SSD.
Every new corpus or tokenizer needs its own index, and on our server, \gpusaca\ builds each \numrange{3.0}{8.2}$\times$ faster than \ig's pipeline.
It indexes corpora much larger than GPU memory, such as \data{pile}, whose text alone is about five times the GPU's \qty{96}{\giga\byte}.
\gpusaca\ slows without a cliff as memory shrinks, by at most \num{1.5}$\times$ with only \qty{5}{\giga\byte} of GPU memory, a sixteenth of its default, and \num{3.3}$\times$ at the floor of host memory.
Its batched query engine reproduces \ig's counts at median speedups of \numrange{10.8}{32.6}$\times$.

More broadly, GPU joins, graph systems, and vector search also stop scaling once one complete structure no longer fits in memory.
The insight that may carry over is that a shared structure whose values each have one or a few readers known in advance, like the rank table, is a candidate for streaming to those readers.

Since partitions are independent within a generation, several GPUs or nodes are natural next steps, toward one index over a trillion-token corpus instead of shards.
So is building compressed indexes, such as the FM-index~\cite{xu25} and the suffixient array~\cite{cenzato26}, from \gpusaca's suffix arrays.
We expect \gpusaca\ to speed up both the preprocessing of LLM pretraining corpora, such as deduplication~\cite{wang26,lee22,kandpal22}, and the systems that query them: training data attribution~\cite{liu25}, non-parametric and hybrid language models~\cite{liu24}, and retrieval-based speculative decoding~\cite{he24,ho24}.
 
\bibliographystyle{ACM-Reference-Format}
\bibliography{refs}

%%% -*-BibTeX-*-
%%% Do NOT edit. File created by BibTeX with style
%%% ACM-Reference-Format-Journals [18-Jan-2012].

\begin{thebibliography}{56}

%%% ====================================================================
%%% NOTE TO THE USER: you can override these defaults by providing
%%% customized versions of any of these macros before the \bibliography
%%% command.  Each of them MUST provide its own final punctuation,
%%% except for \shownote{} and \showURL{}.  The latter does
%%% not use final punctuation, in order to avoid confusing it with
%%% the Web address.
%%%
%%% To suppress output of a particular field, define its macro to expand
%%% to an empty string, or better, \unskip, like this:
%%%
%%% \newcommand{\showURL}[1]{\unskip}   % LaTeX syntax
%%%
%%% \def \showURL #1{\unskip}           % plain TeX syntax
%%%
%%% ====================================================================

\ifx \showCODEN    \undefined \def \showCODEN     #1{\unskip}     \fi
\ifx \showISBNx    \undefined \def \showISBNx     #1{\unskip}     \fi
\ifx \showISBNxiii \undefined \def \showISBNxiii  #1{\unskip}     \fi
\ifx \showISSN     \undefined \def \showISSN      #1{\unskip}     \fi
\ifx \showLCCN     \undefined \def \showLCCN      #1{\unskip}     \fi
\ifx \shownote     \undefined \def \shownote      #1{#1}          \fi
\ifx \showarticletitle \undefined \def \showarticletitle #1{#1}   \fi
\ifx \showURL      \undefined \def \showURL       {\relax}        \fi
% The following commands are used for tagged output and should be
% invisible to TeX
\providecommand\bibfield[2]{#2}
\providecommand\bibinfo[2]{#2}
\providecommand\natexlab[1]{#1}
\providecommand\showeprint[2][]{arXiv:#2}
\makeatletter
\@ifundefined{NAT@parse@date}{}{\let\NAT@parse@date@orig\NAT@parse@date}
\@ifundefined{NAT@parse@date}{}{\def\NAT@parse@date#1#2#3#4#5#6@@{\NAT@parse@date@orig#1#2#3#4#5#6@@\def\NAT@tempyear{0000}\def\NAT@tempexlab{{?}}\ifx\NAT@year\NAT@tempyear\ifx\NAT@exlab\NAT@tempexlab\def\NAT@date{[n.\,d.]}\else\edef\NAT@date{[n.\,d.]\NAT@exlab}\fi\fi}}
\makeatother

\bibitem[Abello et~al\mbox{.}(2002)]%
        {abello02}
\bibfield{author}{\bibinfo{person}{James Abello}, \bibinfo{person}{Adam~L.
  Buchsbaum}, {and} \bibinfo{person}{Jeffery~R. Westbrook}.}
  \bibinfo{year}{2002}\natexlab{}.
\newblock \showarticletitle{A Functional Approach to External Graph
  Algorithms}.
\newblock \bibinfo{journal}{\emph{Algorithmica}} \bibinfo{volume}{32},
  \bibinfo{number}{3} (\bibinfo{year}{2002}), \bibinfo{pages}{437--458}.
\newblock
\href{https://doi.org/10.1007/s00453-001-0088-5}{doi:\nolinkurl{10.1007/s00453-001-0088-5}}


\bibitem[Aggarwal and Vitter(1988)]%
        {aggarwal88}
\bibfield{author}{\bibinfo{person}{Alok Aggarwal} {and}
  \bibinfo{person}{Jeffrey~S. Vitter}.} \bibinfo{year}{1988}\natexlab{}.
\newblock \showarticletitle{The input/output complexity of sorting and related
  problems}.
\newblock \bibinfo{journal}{\emph{Commun. ACM}} \bibinfo{volume}{31},
  \bibinfo{number}{9} (\bibinfo{year}{1988}), \bibinfo{pages}{1116--1127}.
\newblock
\href{https://doi.org/10.1145/48529.48535}{doi:\nolinkurl{10.1145/48529.48535}}


\bibitem[Baeza-Yates et~al\mbox{.}(1994)]%
        {baezayates96}
\bibfield{author}{\bibinfo{person}{Ricardo Baeza-Yates},
  \bibinfo{person}{Eduardo~Fernandes Barbosa}, {and} \bibinfo{person}{Nivio
  Ziviani}.} \bibinfo{year}{1994}\natexlab{}.
\newblock \showarticletitle{Hierarchies of indices for text searching}. In
  \bibinfo{booktitle}{\emph{Intelligent Multimedia Information Retrieval
  Systems and Management - Volume 1}} (New York, New York)
  \emph{(\bibinfo{series}{RIAO '94})}. \bibinfo{publisher}{LE CENTRE DE HAUTES
  ETUDES INTERNATIONALES D'INFORMATIQUE DOCUMENTAIRE}, \bibinfo{address}{Paris,
  FRA}, \bibinfo{pages}{295--312}.
\newblock


\bibitem[Barsky et~al\mbox{.}(2020)]%
        {barsky20}
\bibfield{author}{\bibinfo{person}{Marina Barsky}, \bibinfo{person}{Jonathan
  Gabor}, \bibinfo{person}{Mariano~P. Consens}, {and} \bibinfo{person}{Alex
  Thomo}.} \bibinfo{year}{2020}\natexlab{}.
\newblock \showarticletitle{Suffix rank: a new scalable algorithm for indexing
  large string collections}.
\newblock \bibinfo{journal}{\emph{Proc. VLDB Endow.}} \bibinfo{volume}{13},
  \bibinfo{number}{12} (\bibinfo{date}{July} \bibinfo{year}{2020}),
  \bibinfo{pages}{2787--2800}.
\newblock
\showISSN{2150-8097}
\href{https://doi.org/10.14778/3407790.3407861}{doi:\nolinkurl{10.14778/3407790.3407861}}


\bibitem[Beamer et~al\mbox{.}(2012)]%
        {beamer12}
\bibfield{author}{\bibinfo{person}{Scott Beamer}, \bibinfo{person}{Krste
  Asanovic}, {and} \bibinfo{person}{David Patterson}.}
  \bibinfo{year}{2012}\natexlab{}.
\newblock \showarticletitle{Direction-optimizing Breadth-First Search}. In
  \bibinfo{booktitle}{\emph{Proceedings of the 2012 International Conference
  for High Performance Computing, Networking, Storage and Analysis}}
  \emph{(\bibinfo{series}{SC '12})}. \bibinfo{publisher}{IEEE Computer
  Society}, \bibinfo{address}{USA}, \bibinfo{pages}{1--10}.
\newblock
\showISBNx{9781467308069}
\href{https://doi.org/10.1109/SC.2012.50}{doi:\nolinkurl{10.1109/SC.2012.50}}


\bibitem[Belady(1966)]%
        {belady66}
\bibfield{author}{\bibinfo{person}{L.~A. Belady}.}
  \bibinfo{year}{1966}\natexlab{}.
\newblock \showarticletitle{A study of replacement algorithms for a
  virtual-storage computer}.
\newblock \bibinfo{journal}{\emph{IBM Systems Journal}} \bibinfo{volume}{5},
  \bibinfo{number}{2} (\bibinfo{year}{1966}), \bibinfo{pages}{78--101}.
\newblock
\href{https://doi.org/10.1147/sj.52.0078}{doi:\nolinkurl{10.1147/sj.52.0078}}


\bibitem[Biderman et~al\mbox{.}(2023)]%
        {biderman23}
\bibfield{author}{\bibinfo{person}{Stella Biderman}, \bibinfo{person}{Hailey
  Schoelkopf}, \bibinfo{person}{Quentin Anthony}, \bibinfo{person}{Herbie
  Bradley}, \bibinfo{person}{Kyle O'Brien}, \bibinfo{person}{Eric Hallahan},
  \bibinfo{person}{Mohammad~Aflah Khan}, \bibinfo{person}{Shivanshu Purohit},
  \bibinfo{person}{USVSN~Sai Prashanth}, \bibinfo{person}{Edward Raff},
  \bibinfo{person}{Aviya Skowron}, \bibinfo{person}{Lintang Sutawika}, {and}
  \bibinfo{person}{Oskar Van Der~Wal}.} \bibinfo{year}{2023}\natexlab{}.
\newblock \showarticletitle{Pythia: A Suite for Analyzing Large Language Models
  Across Training and Scaling}. In \bibinfo{booktitle}{\emph{Proceedings of the
  40th International Conference on Machine Learning}}
  \emph{(\bibinfo{series}{Proceedings of Machine Learning Research},
  Vol.~\bibinfo{volume}{202})}. \bibinfo{publisher}{PMLR},
  \bibinfo{pages}{2397--2430}.
\newblock


\bibitem[Bingmann et~al\mbox{.}(2016)]%
        {bingmann16}
\bibfield{author}{\bibinfo{person}{Timo Bingmann}, \bibinfo{person}{Johannes
  Fischer}, {and} \bibinfo{person}{Vitaly Osipov}.}
  \bibinfo{year}{2016}\natexlab{}.
\newblock \showarticletitle{Inducing Suffix and LCP Arrays in External Memory}.
\newblock \bibinfo{journal}{\emph{ACM J. Exp. Algorithmics}}
  \bibinfo{volume}{21}, Article \bibinfo{articleno}{2.3} (\bibinfo{date}{Sept.}
  \bibinfo{year}{2016}), \bibinfo{numpages}{27}~pages.
\newblock
\showISSN{1084-6654}
\href{https://doi.org/10.1145/2975593}{doi:\nolinkurl{10.1145/2975593}}


\bibitem[Blelloch et~al\mbox{.}(1991)]%
        {blelloch91}
\bibfield{author}{\bibinfo{person}{Guy~E. Blelloch},
  \bibinfo{person}{Charles~E. Leiserson}, \bibinfo{person}{Bruce~M. Maggs},
  \bibinfo{person}{C.~Greg Plaxton}, \bibinfo{person}{Stephen~J. Smith}, {and}
  \bibinfo{person}{Marco Zagha}.} \bibinfo{year}{1991}\natexlab{}.
\newblock \showarticletitle{A comparison of sorting algorithms for the
  {Connection Machine CM-2}}. In \bibinfo{booktitle}{\emph{Proceedings of the
  Third Annual ACM Symposium on Parallel Algorithms and Architectures}} (Hilton
  Head, South Carolina, USA) \emph{(\bibinfo{series}{SPAA '91})}.
  \bibinfo{publisher}{Association for Computing Machinery},
  \bibinfo{address}{New York, NY, USA}, \bibinfo{pages}{3--16}.
\newblock
\showISBNx{0897914384}
\href{https://doi.org/10.1145/113379.113380}{doi:\nolinkurl{10.1145/113379.113380}}


\bibitem[B\"{u}ren et~al\mbox{.}(2019)]%
        {buren19}
\bibfield{author}{\bibinfo{person}{Florian B\"{u}ren}, \bibinfo{person}{Daniel
  J\"{u}nger}, \bibinfo{person}{Robin Kobus}, \bibinfo{person}{Christian
  Hundt}, {and} \bibinfo{person}{Bertil Schmidt}.}
  \bibinfo{year}{2019}\natexlab{}.
\newblock \showarticletitle{Suffix Array Construction on Multi-GPU Systems}. In
  \bibinfo{booktitle}{\emph{Proceedings of the 28th International Symposium on
  High-Performance Parallel and Distributed Computing}} (Phoenix, AZ, USA)
  \emph{(\bibinfo{series}{HPDC '19})}. \bibinfo{publisher}{Association for
  Computing Machinery}, \bibinfo{address}{New York, NY, USA},
  \bibinfo{pages}{183--194}.
\newblock
\showISBNx{9781450366700}
\href{https://doi.org/10.1145/3307681.3325961}{doi:\nolinkurl{10.1145/3307681.3325961}}


\bibitem[Cenzato et~al\mbox{.}(2026)]%
        {cenzato26}
\bibfield{author}{\bibinfo{person}{Davide Cenzato}, \bibinfo{person}{Lore
  Depuydt}, \bibinfo{person}{Travis Gagie}, \bibinfo{person}{Sung-Hwan Kim},
  \bibinfo{person}{Giovanni Manzini}, \bibinfo{person}{Francisco Olivares},
  {and} \bibinfo{person}{Nicola Prezza}.} \bibinfo{year}{2026}\natexlab{}.
\newblock \showarticletitle{Suffixient Arrays: A New Efficient Suffix Array
  Compression Technique}.
\newblock \bibinfo{journal}{\emph{Theory of Computing Systems}}
  \bibinfo{volume}{70}, \bibinfo{number}{3} (\bibinfo{date}{09}
  \bibinfo{year}{2026}), \bibinfo{pages}{50}.
\newblock
\href{https://doi.org/10.1007/s00224-026-10287-6}{doi:\nolinkurl{10.1007/s00224-026-10287-6}}


\bibitem[Dementiev et~al\mbox{.}(2008)]%
        {dementiev08}
\bibfield{author}{\bibinfo{person}{Roman Dementiev}, \bibinfo{person}{Juha
  K{\"a}rkk{\"a}inen}, \bibinfo{person}{Jens Mehnert}, {and}
  \bibinfo{person}{Peter Sanders}.} \bibinfo{year}{2008}\natexlab{}.
\newblock \showarticletitle{Better external memory suffix array construction}.
\newblock \bibinfo{journal}{\emph{ACM J. Exp. Algorithmics}}
  \bibinfo{volume}{12}, Article \bibinfo{articleno}{3.4}
  (\bibinfo{year}{2008}), \bibinfo{numpages}{24}~pages.
\newblock
\href{https://doi.org/10.1145/1227161.1402296}{doi:\nolinkurl{10.1145/1227161.1402296}}


\bibitem[Ferragina(2008)]%
        {ferragina08}
\bibfield{author}{\bibinfo{person}{Paolo Ferragina}.}
  \bibinfo{year}{2008}\natexlab{}.
\newblock \showarticletitle{String algorithms and data structures}.
\newblock \bibinfo{journal}{\emph{CoRR}}  \bibinfo{volume}{abs/0801.2378}
  (\bibinfo{year}{2008}).
\newblock
\showeprint[arXiv]{0801.2378}
\href{https://doi.org/10.48550/ARXIV.0801.2378}{doi:\nolinkurl{10.48550/ARXIV.0801.2378}}


\bibitem[Flick and Aluru(2015)]%
        {flick15}
\bibfield{author}{\bibinfo{person}{Patrick Flick} {and}
  \bibinfo{person}{Srinivas Aluru}.} \bibinfo{year}{2015}\natexlab{}.
\newblock \showarticletitle{Parallel distributed memory construction of suffix
  and longest common prefix arrays}. In \bibinfo{booktitle}{\emph{Proceedings
  of the International Conference for High Performance Computing, Networking,
  Storage and Analysis}} (Austin, Texas) \emph{(\bibinfo{series}{SC '15})}.
  \bibinfo{publisher}{Association for Computing Machinery},
  \bibinfo{address}{New York, NY, USA}, Article \bibinfo{articleno}{16},
  \bibinfo{numpages}{10}~pages.
\newblock
\showISBNx{9781450337236}
\href{https://doi.org/10.1145/2807591.2807609}{doi:\nolinkurl{10.1145/2807591.2807609}}


\bibitem[Frazer and McKellar(1970)]%
        {frazer70}
\bibfield{author}{\bibinfo{person}{W.~D. Frazer} {and} \bibinfo{person}{A.~C.
  McKellar}.} \bibinfo{year}{1970}\natexlab{}.
\newblock \showarticletitle{Samplesort: A Sampling Approach to Minimal Storage
  Tree Sorting}.
\newblock \bibinfo{journal}{\emph{J. ACM}} \bibinfo{volume}{17},
  \bibinfo{number}{3} (\bibinfo{date}{July} \bibinfo{year}{1970}),
  \bibinfo{pages}{496--507}.
\newblock
\showISSN{0004-5411}
\href{https://doi.org/10.1145/321592.321600}{doi:\nolinkurl{10.1145/321592.321600}}


\bibitem[Gao et~al\mbox{.}(2020)]%
        {gao20}
\bibfield{author}{\bibinfo{person}{Leo Gao}, \bibinfo{person}{Stella Biderman},
  \bibinfo{person}{Sid Black}, \bibinfo{person}{Laurence Golding},
  \bibinfo{person}{Travis Hoppe}, \bibinfo{person}{Charles Foster},
  \bibinfo{person}{Jason Phang}, \bibinfo{person}{Horace He},
  \bibinfo{person}{Anish Thite}, \bibinfo{person}{Noa Nabeshima},
  \bibinfo{person}{Shawn Presser}, {and} \bibinfo{person}{Connor Leahy}.}
  \bibinfo{year}{2020}\natexlab{}.
\newblock \showarticletitle{The Pile: An 800GB Dataset of Diverse Text for
  Language Modeling}.
\newblock \bibinfo{journal}{\emph{arXiv preprint arXiv:2101.00027}}
  (\bibinfo{year}{2020}).
\newblock


\bibitem[Ghoting and Makarychev(2009)]%
        {ghoting09}
\bibfield{author}{\bibinfo{person}{Amol Ghoting} {and}
  \bibinfo{person}{Konstantin Makarychev}.} \bibinfo{year}{2009}\natexlab{}.
\newblock \showarticletitle{Serial and parallel methods for {I/O} efficient
  suffix tree construction}. In \bibinfo{booktitle}{\emph{Proceedings of the
  2009 ACM SIGMOD International Conference on Management of Data}} (Providence,
  Rhode Island, USA) \emph{(\bibinfo{series}{SIGMOD '09})}.
  \bibinfo{publisher}{Association for Computing Machinery},
  \bibinfo{address}{New York, NY, USA}, \bibinfo{pages}{827--840}.
\newblock
\showISBNx{9781605585512}
\href{https://doi.org/10.1145/1559845.1559931}{doi:\nolinkurl{10.1145/1559845.1559931}}


\bibitem[Grebnov(2023)]%
        {libcubwt}
\bibfield{author}{\bibinfo{person}{Ilya Grebnov}.}
  \bibinfo{year}{2023}\natexlab{}.
\newblock \bibinfo{booktitle}{\emph{libcubwt}}.
\newblock
\urldef\tempurl%
\url{https://github.com/IlyaGrebnov/libcubwt}
\showURL{%
Retrieved April 20, 2026 from \tempurl}
\newblock
\shownote{Version 1.0.0}.


\bibitem[Grebnov(2025)]%
        {libsais}
\bibfield{author}{\bibinfo{person}{Ilya Grebnov}.}
  \bibinfo{year}{2025}\natexlab{}.
\newblock \bibinfo{booktitle}{\emph{libsais}}.
\newblock
\urldef\tempurl%
\url{https://github.com/IlyaGrebnov/libsais}
\showURL{%
Retrieved April 20, 2026 from \tempurl}
\newblock
\shownote{Version 2.10.4}.


\bibitem[He et~al\mbox{.}(2024)]%
        {he24}
\bibfield{author}{\bibinfo{person}{Zhenyu He}, \bibinfo{person}{Zexuan Zhong},
  \bibinfo{person}{Tianle Cai}, \bibinfo{person}{Jason Lee}, {and}
  \bibinfo{person}{Di He}.} \bibinfo{year}{2024}\natexlab{}.
\newblock \showarticletitle{{REST}: Retrieval-Based Speculative Decoding}. In
  \bibinfo{booktitle}{\emph{Proceedings of the 2024 Conference of the North
  American Chapter of the Association for Computational Linguistics: Human
  Language Technologies (Volume 1: Long Papers)}}.
  \bibinfo{publisher}{Association for Computational Linguistics},
  \bibinfo{address}{Mexico City, Mexico}, \bibinfo{pages}{1582--1595}.
\newblock
\href{https://doi.org/10.18653/v1/2024.naacl-long.88}{doi:\nolinkurl{10.18653/v1/2024.naacl-long.88}}


\bibitem[Ho et~al\mbox{.}(2024)]%
        {ho24}
\bibfield{author}{\bibinfo{person}{Sophia Ho}, \bibinfo{person}{Jinsol Park},
  {and} \bibinfo{person}{Patrick Wang}.} \bibinfo{year}{2024}\natexlab{}.
\newblock \bibinfo{title}{{CREST}: Effectively Compacting a Datastore For
  Retrieval-Based Speculative Decoding}.
\newblock
\showeprint[arxiv]{2408.04678}~[cs.CL]


\bibitem[Kandpal et~al\mbox{.}(2022)]%
        {kandpal22}
\bibfield{author}{\bibinfo{person}{Nikhil Kandpal}, \bibinfo{person}{Eric
  Wallace}, {and} \bibinfo{person}{Colin Raffel}.}
  \bibinfo{year}{2022}\natexlab{}.
\newblock \showarticletitle{Deduplicating Training Data Mitigates Privacy Risks
  in Language Models}. In \bibinfo{booktitle}{\emph{Proceedings of the 39th
  International Conference on Machine Learning}}
  \emph{(\bibinfo{series}{Proceedings of Machine Learning Research},
  Vol.~\bibinfo{volume}{162})}. \bibinfo{publisher}{PMLR},
  \bibinfo{address}{Baltimore, Maryland, USA}, \bibinfo{pages}{10697--10707}.
\newblock


\bibitem[K{\"a}rkk{\"a}inen et~al\mbox{.}(2015)]%
        {karkkainen15}
\bibfield{author}{\bibinfo{person}{Juha K{\"a}rkk{\"a}inen},
  \bibinfo{person}{Dominik Kempa}, {and} \bibinfo{person}{Simon~J. Puglisi}.}
  \bibinfo{year}{2015}\natexlab{}.
\newblock \showarticletitle{Parallel External Memory Suffix Sorting}. In
  \bibinfo{booktitle}{\emph{Combinatorial Pattern Matching}}.
  \bibinfo{publisher}{Springer International Publishing},
  \bibinfo{address}{Cham}, \bibinfo{pages}{329--342}.
\newblock
\href{https://doi.org/10.1007/978-3-319-19929-0_28}{doi:\nolinkurl{10.1007/978-3-319-19929-0_28}}


\bibitem[K{\"a}rkk{\"a}inen et~al\mbox{.}(2009)]%
        {karkkainen09}
\bibfield{author}{\bibinfo{person}{Juha K{\"a}rkk{\"a}inen},
  \bibinfo{person}{Giovanni Manzini}, {and} \bibinfo{person}{Simon~J.
  Puglisi}.} \bibinfo{year}{2009}\natexlab{}.
\newblock \showarticletitle{Permuted Longest-Common-Prefix Array}. In
  \bibinfo{booktitle}{\emph{Combinatorial Pattern Matching}},
  \bibfield{editor}{\bibinfo{person}{Gregory Kucherov} {and}
  \bibinfo{person}{Esko Ukkonen}} (Eds.). \bibinfo{publisher}{Springer Berlin
  Heidelberg}, \bibinfo{address}{Berlin, Heidelberg},
  \bibinfo{pages}{181--192}.
\newblock
\showISBNx{978-3-642-02441-2}
\href{https://doi.org/10.1007/978-3-642-02441-2_17}{doi:\nolinkurl{10.1007/978-3-642-02441-2_17}}


\bibitem[Khan et~al\mbox{.}(2023)]%
        {khan23}
\bibfield{author}{\bibinfo{person}{Jamshed Khan}, \bibinfo{person}{Tobias
  Rubel}, \bibinfo{person}{Laxman Dhulipala}, \bibinfo{person}{Erin Molloy},
  {and} \bibinfo{person}{Rob Patro}.} \bibinfo{year}{2023}\natexlab{}.
\newblock \showarticletitle{{Fast, Parallel, and Cache-Friendly Suffix Array
  Construction}}. In \bibinfo{booktitle}{\emph{23rd International Workshop on
  Algorithms in Bioinformatics (WABI 2023)}} \emph{(\bibinfo{series}{Leibniz
  International Proceedings in Informatics (LIPIcs)},
  Vol.~\bibinfo{volume}{273})}, \bibfield{editor}{\bibinfo{person}{Djamal
  Belazzougui} {and} \bibinfo{person}{A\"{i}da Ouangraoua}} (Eds.).
  \bibinfo{publisher}{Schloss Dagstuhl -- Leibniz-Zentrum f{\"u}r Informatik},
  \bibinfo{address}{Dagstuhl, Germany}, \bibinfo{pages}{16:1--16:21}.
\newblock
\showISBNx{978-3-95977-294-5}
\showISSN{1868-8969}
\href{https://doi.org/10.4230/LIPIcs.WABI.2023.16}{doi:\nolinkurl{10.4230/LIPIcs.WABI.2023.16}}


\bibitem[Ko and Aluru(2003)]%
        {ko03}
\bibfield{author}{\bibinfo{person}{Pang Ko} {and} \bibinfo{person}{Srinivas
  Aluru}.} \bibinfo{year}{2003}\natexlab{}.
\newblock \showarticletitle{Space Efficient Linear Time Construction of Suffix
  Arrays}. In \bibinfo{booktitle}{\emph{Combinatorial Pattern Matching}},
  \bibfield{editor}{\bibinfo{person}{Ricardo Baeza-Yates},
  \bibinfo{person}{Edgar Ch{\'a}vez}, {and} \bibinfo{person}{Maxime
  Crochemore}} (Eds.). \bibinfo{publisher}{Springer Berlin Heidelberg},
  \bibinfo{address}{Berlin, Heidelberg}, \bibinfo{pages}{200--210}.
\newblock
\showISBNx{978-3-540-44888-4}
\href{https://doi.org/10.1007/3-540-44888-8_15}{doi:\nolinkurl{10.1007/3-540-44888-8_15}}


\bibitem[Kocberber et~al\mbox{.}(2015)]%
        {kocberber15}
\bibfield{author}{\bibinfo{person}{Onur Kocberber}, \bibinfo{person}{Babak
  Falsafi}, {and} \bibinfo{person}{Boris Grot}.}
  \bibinfo{year}{2015}\natexlab{}.
\newblock \showarticletitle{Asynchronous memory access chaining}.
\newblock \bibinfo{journal}{\emph{Proc. VLDB Endow.}} \bibinfo{volume}{9},
  \bibinfo{number}{4} (\bibinfo{date}{Dec.} \bibinfo{year}{2015}),
  \bibinfo{pages}{252--263}.
\newblock
\showISSN{2150-8097}
\href{https://doi.org/10.14778/2856318.2856321}{doi:\nolinkurl{10.14778/2856318.2856321}}


\bibitem[Kulla and Sanders(2007)]%
        {kulla07}
\bibfield{author}{\bibinfo{person}{Fabian Kulla} {and} \bibinfo{person}{Peter
  Sanders}.} \bibinfo{year}{2007}\natexlab{}.
\newblock \showarticletitle{Scalable parallel suffix array construction}.
\newblock \bibinfo{journal}{\emph{Parallel Comput.}} \bibinfo{volume}{33},
  \bibinfo{number}{9} (\bibinfo{year}{2007}), \bibinfo{pages}{605--612}.
\newblock
\showISSN{0167-8191}
\href{https://doi.org/10.1016/j.parco.2007.06.004}{doi:\nolinkurl{10.1016/j.parco.2007.06.004}}
\newblock
\shownote{Selected Papers from EuroPVM/MPI 2006}.


\bibitem[Kärkkäinen(2007)]%
        {karkkainen07}
\bibfield{author}{\bibinfo{person}{Juha Kärkkäinen}.}
  \bibinfo{year}{2007}\natexlab{}.
\newblock \showarticletitle{Fast BWT in small space by blockwise suffix
  sorting}.
\newblock \bibinfo{journal}{\emph{Theoretical Computer Science}}
  \bibinfo{volume}{387}, \bibinfo{number}{3} (\bibinfo{year}{2007}),
  \bibinfo{pages}{249--257}.
\newblock
\showISSN{0304-3975}
\href{https://doi.org/10.1016/j.tcs.2007.07.018}{doi:\nolinkurl{10.1016/j.tcs.2007.07.018}}


\bibitem[Larsson and Sadakane(2007)]%
        {larsson07}
\bibfield{author}{\bibinfo{person}{N.~Jesper Larsson} {and}
  \bibinfo{person}{Kunihiko Sadakane}.} \bibinfo{year}{2007}\natexlab{}.
\newblock \showarticletitle{Faster suffix sorting}.
\newblock \bibinfo{journal}{\emph{Theoretical Computer Science}}
  \bibinfo{volume}{387}, \bibinfo{number}{3} (\bibinfo{year}{2007}),
  \bibinfo{pages}{258--272}.
\newblock
\showISSN{0304-3975}
\href{https://doi.org/10.1016/j.tcs.2007.07.017}{doi:\nolinkurl{10.1016/j.tcs.2007.07.017}}


\bibitem[Lee et~al\mbox{.}(2022)]%
        {lee22}
\bibfield{author}{\bibinfo{person}{Katherine Lee}, \bibinfo{person}{Daphne
  Ippolito}, \bibinfo{person}{Andrew Nystrom}, \bibinfo{person}{Chiyuan Zhang},
  \bibinfo{person}{Douglas Eck}, \bibinfo{person}{Chris Callison-Burch}, {and}
  \bibinfo{person}{Nicholas Carlini}.} \bibinfo{year}{2022}\natexlab{}.
\newblock \showarticletitle{Deduplicating Training Data Makes Language Models
  Better}. In \bibinfo{booktitle}{\emph{Proceedings of the 60th Annual Meeting
  of the Association for Computational Linguistics (Volume 1: Long Papers)}}.
  \bibinfo{publisher}{Association for Computational Linguistics},
  \bibinfo{pages}{8424--8445}.
\newblock
\href{https://doi.org/10.18653/v1/2022.acl-long.577}{doi:\nolinkurl{10.18653/v1/2022.acl-long.577}}


\bibitem[Liu(2025)]%
        {infinigram}
\bibfield{author}{\bibinfo{person}{Jiacheng Liu}.}
  \bibinfo{year}{2025}\natexlab{}.
\newblock \bibinfo{booktitle}{\emph{infini-gram}}.
\newblock
\urldef\tempurl%
\url{https://github.com/liujch1998/infini-gram}
\showURL{%
Retrieved October 14, 2026 from \tempurl}
\newblock
\shownote{Version 2.6.0}.


\bibitem[Liu et~al\mbox{.}(2025)]%
        {liu25}
\bibfield{author}{\bibinfo{person}{Jiacheng Liu}, \bibinfo{person}{Taylor
  Blanton}, \bibinfo{person}{Yanai Elazar}, \bibinfo{person}{Sewon Min},
  \bibinfo{person}{Yen-Sung Chen}, \bibinfo{person}{Arnavi Chheda-Kothary},
  \bibinfo{person}{Huy Tran}, \bibinfo{person}{Byron Bischoff},
  \bibinfo{person}{Eric Marsh}, \bibinfo{person}{Michael Schmitz},
  \bibinfo{person}{Cassidy Trier}, \bibinfo{person}{Aaron Sarnat},
  \bibinfo{person}{Jenna James}, \bibinfo{person}{Jon Borchardt},
  \bibinfo{person}{Bailey Kuehl}, \bibinfo{person}{Evie Yu-Yen Cheng},
  \bibinfo{person}{Karen Farley}, \bibinfo{person}{Taira Anderson},
  \bibinfo{person}{David Albright}, \bibinfo{person}{Carissa Schoenick},
  \bibinfo{person}{Luca Soldaini}, \bibinfo{person}{Dirk Groeneveld},
  \bibinfo{person}{Rock~Yuren Pang}, \bibinfo{person}{Pang~Wei Koh},
  \bibinfo{person}{Noah~A. Smith}, \bibinfo{person}{Sophie Lebrecht},
  \bibinfo{person}{Yejin Choi}, \bibinfo{person}{Hannaneh Hajishirzi},
  \bibinfo{person}{Ali Farhadi}, {and} \bibinfo{person}{Jesse Dodge}.}
  \bibinfo{year}{2025}\natexlab{}.
\newblock \showarticletitle{{OLM}o{T}race: Tracing Language Model Outputs Back
  to Trillions of Training Tokens}. In \bibinfo{booktitle}{\emph{Proceedings of
  the 63rd Annual Meeting of the Association for Computational Linguistics
  (Volume 3: System Demonstrations)}},
  \bibfield{editor}{\bibinfo{person}{Pushkar Mishra}, \bibinfo{person}{Smaranda
  Muresan}, {and} \bibinfo{person}{Tao Yu}} (Eds.).
  \bibinfo{publisher}{Association for Computational Linguistics},
  \bibinfo{address}{Vienna, Austria}, \bibinfo{pages}{178--188}.
\newblock
\showISBNx{979-8-89176-253-4}
\href{https://doi.org/10.18653/v1/2025.acl-demo.18}{doi:\nolinkurl{10.18653/v1/2025.acl-demo.18}}


\bibitem[Liu et~al\mbox{.}(2024)]%
        {liu24}
\bibfield{author}{\bibinfo{person}{Jiacheng Liu}, \bibinfo{person}{Sewon Min},
  \bibinfo{person}{Luke Zettlemoyer}, \bibinfo{person}{Yejin Choi}, {and}
  \bibinfo{person}{Hannaneh Hajishirzi}.} \bibinfo{year}{2024}\natexlab{}.
\newblock \showarticletitle{Infini-gram: Scaling Unbounded n-gram Language
  Models to a Trillion Tokens}. In \bibinfo{booktitle}{\emph{First Conference
  on Language Modeling}}.
\newblock
\urldef\tempurl%
\url{https://openreview.net/forum?id=u2vAyMeLMm}
\showURL{%
\tempurl}


\bibitem[Lutz et~al\mbox{.}(2020)]%
        {lutz20}
\bibfield{author}{\bibinfo{person}{Clemens Lutz}, \bibinfo{person}{Sebastian
  Bre{\ss}}, \bibinfo{person}{Steffen Zeuch}, \bibinfo{person}{Tilmann Rabl},
  {and} \bibinfo{person}{Volker Markl}.} \bibinfo{year}{2020}\natexlab{}.
\newblock \showarticletitle{Pump Up the Volume: Processing Large Data on {GPUs}
  with Fast Interconnects}. In \bibinfo{booktitle}{\emph{Proceedings of the
  2020 ACM SIGMOD International Conference on Management of Data}}
  \emph{(\bibinfo{series}{SIGMOD '20})}. \bibinfo{publisher}{Association for
  Computing Machinery}, \bibinfo{address}{New York, NY, USA},
  \bibinfo{pages}{1633--1649}.
\newblock
\href{https://doi.org/10.1145/3318464.3389705}{doi:\nolinkurl{10.1145/3318464.3389705}}


\bibitem[Lutz et~al\mbox{.}(2022)]%
        {lutz22}
\bibfield{author}{\bibinfo{person}{Clemens Lutz}, \bibinfo{person}{Sebastian
  Bre{\ss}}, \bibinfo{person}{Steffen Zeuch}, \bibinfo{person}{Tilmann Rabl},
  {and} \bibinfo{person}{Volker Markl}.} \bibinfo{year}{2022}\natexlab{}.
\newblock \showarticletitle{Triton Join: Efficiently Scaling to a Large Join
  State on {GPUs} with Fast Interconnects}. In
  \bibinfo{booktitle}{\emph{Proceedings of the 2022 International Conference on
  Management of Data}} \emph{(\bibinfo{series}{SIGMOD '22})}.
  \bibinfo{publisher}{Association for Computing Machinery},
  \bibinfo{address}{New York, NY, USA}, \bibinfo{pages}{1017--1032}.
\newblock
\href{https://doi.org/10.1145/3514221.3517911}{doi:\nolinkurl{10.1145/3514221.3517911}}


\bibitem[Maltenberger et~al\mbox{.}(2025)]%
        {maltenberger25}
\bibfield{author}{\bibinfo{person}{Tobias Maltenberger}, \bibinfo{person}{Ilin
  Tolovski}, {and} \bibinfo{person}{Tilmann Rabl}.}
  \bibinfo{year}{2025}\natexlab{}.
\newblock \showarticletitle{Efficiently Joining Large Relations on Multi-GPU
  Systems}.
\newblock \bibinfo{journal}{\emph{Proc. VLDB Endow.}} \bibinfo{volume}{18},
  \bibinfo{number}{11} (\bibinfo{date}{July} \bibinfo{year}{2025}),
  \bibinfo{pages}{4653--4667}.
\newblock
\showISSN{2150-8097}
\href{https://doi.org/10.14778/3749646.3749720}{doi:\nolinkurl{10.14778/3749646.3749720}}


\bibitem[Manber and Myers(1990)]%
        {manber90}
\bibfield{author}{\bibinfo{person}{Udi Manber} {and} \bibinfo{person}{Gene
  Myers}.} \bibinfo{year}{1990}\natexlab{}.
\newblock \showarticletitle{Suffix arrays: a new method for on-line string
  searches}. In \bibinfo{booktitle}{\emph{Proceedings of the First Annual
  ACM-SIAM Symposium on Discrete Algorithms}} (San Francisco, California, USA)
  \emph{(\bibinfo{series}{SODA '90})}. \bibinfo{publisher}{Society for
  Industrial and Applied Mathematics}, \bibinfo{address}{USA},
  \bibinfo{pages}{319--327}.
\newblock
\showISBNx{0898712513}


\bibitem[Mansour et~al\mbox{.}(2011)]%
        {mansour11}
\bibfield{author}{\bibinfo{person}{Essam Mansour}, \bibinfo{person}{Amin
  Allam}, \bibinfo{person}{Spiros Skiadopoulos}, {and} \bibinfo{person}{Panos
  Kalnis}.} \bibinfo{year}{2011}\natexlab{}.
\newblock \showarticletitle{{ERA}: efficient serial and parallel suffix tree
  construction for very long strings}.
\newblock \bibinfo{journal}{\emph{Proc. VLDB Endow.}} \bibinfo{volume}{5},
  \bibinfo{number}{1} (\bibinfo{year}{2011}), \bibinfo{pages}{49--60}.
\newblock
\href{https://doi.org/10.14778/2047485.2047490}{doi:\nolinkurl{10.14778/2047485.2047490}}


\bibitem[Merrill and Grimshaw(2011)]%
        {merrill11}
\bibfield{author}{\bibinfo{person}{Duane Merrill} {and} \bibinfo{person}{Andrew
  Grimshaw}.} \bibinfo{year}{2011}\natexlab{}.
\newblock \showarticletitle{High Performance and Scalable Radix Sorting: A Case
  Study of Implementing Dynamic Parallelism for {GPU} Computing}.
\newblock \bibinfo{journal}{\emph{Parallel Processing Letters}}
  \bibinfo{volume}{21}, \bibinfo{number}{2} (\bibinfo{year}{2011}),
  \bibinfo{pages}{245--272}.
\newblock
\href{https://doi.org/10.1142/S0129626411000187}{doi:\nolinkurl{10.1142/S0129626411000187}}


\bibitem[Min et~al\mbox{.}(2020)]%
        {min20}
\bibfield{author}{\bibinfo{person}{Seung~Won Min},
  \bibinfo{person}{Vikram~Sharma Mailthody}, \bibinfo{person}{Zaid Qureshi},
  \bibinfo{person}{Jinjun Xiong}, \bibinfo{person}{Eiman Ebrahimi}, {and}
  \bibinfo{person}{Wen-mei Hwu}.} \bibinfo{year}{2020}\natexlab{}.
\newblock \showarticletitle{{EMOGI}: Efficient Memory-access for Out-of-memory
  Graph-traversal in {GPUs}}.
\newblock \bibinfo{journal}{\emph{Proc. VLDB Endow.}} \bibinfo{volume}{14},
  \bibinfo{number}{2} (\bibinfo{year}{2020}), \bibinfo{pages}{114--127}.
\newblock
\href{https://doi.org/10.14778/3425879.3425883}{doi:\nolinkurl{10.14778/3425879.3425883}}


\bibitem[Nong et~al\mbox{.}(2011)]%
        {nong11}
\bibfield{author}{\bibinfo{person}{Ge Nong}, \bibinfo{person}{Sen Zhang}, {and}
  \bibinfo{person}{Wai~Hong Chan}.} \bibinfo{year}{2011}\natexlab{}.
\newblock \showarticletitle{Two Efficient Algorithms for Linear Time Suffix
  Array Construction}.
\newblock \bibinfo{journal}{\emph{IEEE Trans. Comput.}} \bibinfo{volume}{60},
  \bibinfo{number}{10} (\bibinfo{date}{Oct.} \bibinfo{year}{2011}),
  \bibinfo{pages}{1471--1484}.
\newblock
\showISSN{0018-9340}
\href{https://doi.org/10.1109/TC.2010.188}{doi:\nolinkurl{10.1109/TC.2010.188}}


\bibitem[Osipov(2012)]%
        {osipov12}
\bibfield{author}{\bibinfo{person}{Vitaly Osipov}.}
  \bibinfo{year}{2012}\natexlab{}.
\newblock \showarticletitle{Parallel Suffix Array Construction for Shared
  Memory Architectures}. In \bibinfo{booktitle}{\emph{String Processing and
  Information Retrieval}}, \bibfield{editor}{\bibinfo{person}{Liliana
  Calder{\'o}n-Benavides}, \bibinfo{person}{Cristina Gonz{\'a}lez-Caro},
  \bibinfo{person}{Edgar Ch{\'a}vez}, {and} \bibinfo{person}{Nivio Ziviani}}
  (Eds.). \bibinfo{publisher}{Springer Berlin Heidelberg},
  \bibinfo{address}{Berlin, Heidelberg}, \bibinfo{pages}{379--384}.
\newblock
\showISBNx{978-3-642-34109-0}
\href{https://doi.org/10.1007/978-3-642-34109-0_40}{doi:\nolinkurl{10.1007/978-3-642-34109-0_40}}


\bibitem[Peng et~al\mbox{.}(2026)]%
        {peng26}
\bibfield{author}{\bibinfo{person}{Yuchen Peng}, \bibinfo{person}{Dingyu Yang},
  \bibinfo{person}{Zhongle Xie}, \bibinfo{person}{Ji Sun},
  \bibinfo{person}{Lidan Shou}, \bibinfo{person}{Ke Chen}, {and}
  \bibinfo{person}{Gang Chen}.} \bibinfo{year}{2026}\natexlab{}.
\newblock \showarticletitle{{SVFusion}: A {CPU-GPU} Co-Processing Architecture
  for Large-Scale Real-Time Vector Search}.
\newblock \bibinfo{journal}{\emph{Proc. VLDB Endow.}} \bibinfo{volume}{19},
  \bibinfo{number}{5} (\bibinfo{year}{2026}), \bibinfo{pages}{1074--1087}.
\newblock
\href{https://doi.org/10.14778/3796195.3796216}{doi:\nolinkurl{10.14778/3796195.3796216}}


\bibitem[Raffel et~al\mbox{.}(2020)]%
        {raffel20}
\bibfield{author}{\bibinfo{person}{Colin Raffel}, \bibinfo{person}{Noam
  Shazeer}, \bibinfo{person}{Adam Roberts}, \bibinfo{person}{Katherine Lee},
  \bibinfo{person}{Sharan Narang}, \bibinfo{person}{Michael Matena},
  \bibinfo{person}{Yanqi Zhou}, \bibinfo{person}{Wei Li}, {and}
  \bibinfo{person}{Peter~J. Liu}.} \bibinfo{year}{2020}\natexlab{}.
\newblock \showarticletitle{Exploring the Limits of Transfer Learning with a
  Unified Text-to-Text Transformer}.
\newblock \bibinfo{journal}{\emph{Journal of Machine Learning Research}}
  \bibinfo{volume}{21}, \bibinfo{number}{140} (\bibinfo{year}{2020}),
  \bibinfo{pages}{1--67}.
\newblock
\urldef\tempurl%
\url{http://jmlr.org/papers/v21/20-074.html}
\showURL{%
\tempurl}


\bibitem[Roy et~al\mbox{.}(2013)]%
        {roy13}
\bibfield{author}{\bibinfo{person}{Amitabha Roy}, \bibinfo{person}{Ivo
  Mihailovic}, {and} \bibinfo{person}{Willy Zwaenepoel}.}
  \bibinfo{year}{2013}\natexlab{}.
\newblock \showarticletitle{{X-Stream}: edge-centric graph processing using
  streaming partitions}. In \bibinfo{booktitle}{\emph{Proceedings of the
  Twenty-Fourth ACM Symposium on Operating Systems Principles}} (Farmington,
  Pennsylvania) \emph{(\bibinfo{series}{SOSP '13})}.
  \bibinfo{publisher}{Association for Computing Machinery},
  \bibinfo{address}{New York, NY, USA}, \bibinfo{pages}{472--488}.
\newblock
\showISBNx{9781450323888}
\href{https://doi.org/10.1145/2517349.2522740}{doi:\nolinkurl{10.1145/2517349.2522740}}


\bibitem[Soldaini et~al\mbox{.}(2024)]%
        {soldaini24}
\bibfield{author}{\bibinfo{person}{Luca Soldaini}, \bibinfo{person}{Rodney
  Kinney}, \bibinfo{person}{Akshita Bhagia}, \bibinfo{person}{Dustin Schwenk},
  \bibinfo{person}{David Atkinson}, \bibinfo{person}{Russell Authur},
  \bibinfo{person}{Ben Bogin}, \bibinfo{person}{Khyathi Chandu},
  \bibinfo{person}{Jennifer Dumas}, \bibinfo{person}{Yanai Elazar},
  \bibinfo{person}{Valentin Hofmann}, \bibinfo{person}{Ananya Jha},
  \bibinfo{person}{Sachin Kumar}, \bibinfo{person}{Li Lucy},
  \bibinfo{person}{Xinxi Lyu}, \bibinfo{person}{Nathan Lambert},
  \bibinfo{person}{Ian Magnusson}, \bibinfo{person}{Jacob Morrison},
  \bibinfo{person}{Niklas Muennighoff}, \bibinfo{person}{Aakanksha Naik},
  \bibinfo{person}{Crystal Nam}, \bibinfo{person}{Matthew Peters},
  \bibinfo{person}{Abhilasha Ravichander}, \bibinfo{person}{Kyle Richardson},
  \bibinfo{person}{Zejiang Shen}, \bibinfo{person}{Emma Strubell},
  \bibinfo{person}{Nishant Subramani}, \bibinfo{person}{Oyvind Tafjord},
  \bibinfo{person}{Evan Walsh}, \bibinfo{person}{Luke Zettlemoyer},
  \bibinfo{person}{Noah Smith}, \bibinfo{person}{Hannaneh Hajishirzi},
  \bibinfo{person}{Iz Beltagy}, \bibinfo{person}{Dirk Groeneveld},
  \bibinfo{person}{Jesse Dodge}, {and} \bibinfo{person}{Kyle Lo}.}
  \bibinfo{year}{2024}\natexlab{}.
\newblock \showarticletitle{Dolma: an Open Corpus of Three Trillion Tokens for
  Language Model Pretraining Research}. In
  \bibinfo{booktitle}{\emph{Proceedings of the 62nd Annual Meeting of the
  Association for Computational Linguistics (Volume 1: Long Papers)}},
  \bibfield{editor}{\bibinfo{person}{Lun-Wei Ku}, \bibinfo{person}{Andre
  Martins}, {and} \bibinfo{person}{Vivek Srikumar}} (Eds.).
  \bibinfo{publisher}{Association for Computational Linguistics},
  \bibinfo{address}{Bangkok, Thailand}, \bibinfo{pages}{15725--15788}.
\newblock
\href{https://doi.org/10.18653/v1/2024.acl-long.840}{doi:\nolinkurl{10.18653/v1/2024.acl-long.840}}


\bibitem[Timoshevskaya and Feng(2014)]%
        {timoshevskaya14}
\bibfield{author}{\bibinfo{person}{Nataliya Timoshevskaya} {and}
  \bibinfo{person}{Wu-chun Feng}.} \bibinfo{year}{2014}\natexlab{}.
\newblock \showarticletitle{{SAIS-OPT}: On the characterization and
  optimization of the {SA-IS} algorithm for suffix array construction}. In
  \bibinfo{booktitle}{\emph{2014 IEEE 4th International Conference on
  Computational Advances in Bio and Medical Sciences (ICCABS)}}.
  \bibinfo{pages}{1--6}.
\newblock
\href{https://doi.org/10.1109/ICCABS.2014.6863917}{doi:\nolinkurl{10.1109/ICCABS.2014.6863917}}


\bibitem[Touvron et~al\mbox{.}(2023)]%
        {touvron23}
\bibfield{author}{\bibinfo{person}{Hugo Touvron}, \bibinfo{person}{Louis
  Martin}, \bibinfo{person}{Kevin Stone}, \bibinfo{person}{Peter Albert},
  \bibinfo{person}{Amjad Almahairi}, \bibinfo{person}{Yasmine Babaei},
  \bibinfo{person}{Nikolay Bashlykov}, \bibinfo{person}{Soumya Batra},
  \bibinfo{person}{Prajjwal Bhargava}, \bibinfo{person}{Shruti Bhosale},
  \bibinfo{person}{Dan Bikel}, \bibinfo{person}{Lukas Blecher},
  \bibinfo{person}{Cristian Canton~Ferrer}, \bibinfo{person}{Moya Chen},
  \bibinfo{person}{Guillem Cucurull}, \bibinfo{person}{David Esiobu},
  \bibinfo{person}{Jude Fernandes}, \bibinfo{person}{Jeremy Fu},
  \bibinfo{person}{Wenyin Fu}, \bibinfo{person}{Brian Fuller},
  \bibinfo{person}{Cynthia Gao}, \bibinfo{person}{Vedanuj Goswami},
  \bibinfo{person}{Naman Goyal}, \bibinfo{person}{Anthony Hartshorn},
  \bibinfo{person}{Saghar Hosseini}, \bibinfo{person}{Rui Hou},
  \bibinfo{person}{Hakan Inan}, \bibinfo{person}{Marcin Kardas},
  \bibinfo{person}{Viktor Kerkez}, \bibinfo{person}{Madian Khabsa},
  \bibinfo{person}{Isabel Kloumann}, \bibinfo{person}{Artem Korenev},
  \bibinfo{person}{Punit~Singh Koura}, \bibinfo{person}{Marie-Anne Lachaux},
  \bibinfo{person}{Thibaut Lavril}, \bibinfo{person}{Jenya Lee},
  \bibinfo{person}{Diana Liskovich}, \bibinfo{person}{Yinghai Lu},
  \bibinfo{person}{Yuning Mao}, \bibinfo{person}{Xavier Martinet},
  \bibinfo{person}{Todor Mihaylov}, \bibinfo{person}{Pushkar Mishra},
  \bibinfo{person}{Igor Molybog}, \bibinfo{person}{Yixin Nie},
  \bibinfo{person}{Andrew Poulton}, \bibinfo{person}{Jeremy Reizenstein},
  \bibinfo{person}{Rashi Rungta}, \bibinfo{person}{Kalyan Saladi},
  \bibinfo{person}{Alan Schelten}, \bibinfo{person}{Ruan Silva},
  \bibinfo{person}{Eric~Michael Smith}, \bibinfo{person}{Ranjan Subramanian},
  \bibinfo{person}{Xiaoqing~Ellen Tan}, \bibinfo{person}{Binh Tang},
  \bibinfo{person}{Ross Taylor}, \bibinfo{person}{Adina Williams},
  \bibinfo{person}{Jian~Xiang Kuan}, \bibinfo{person}{Puxin Xu},
  \bibinfo{person}{Zheng Yan}, \bibinfo{person}{Iliyan Zarov},
  \bibinfo{person}{Yuchen Zhang}, \bibinfo{person}{Angela Fan},
  \bibinfo{person}{Melanie Kambadur}, \bibinfo{person}{Sharan Narang},
  \bibinfo{person}{Aurelien Rodriguez}, \bibinfo{person}{Robert Stojnic},
  \bibinfo{person}{Sergey Edunov}, {and} \bibinfo{person}{Thomas Scialom}.}
  \bibinfo{year}{2023}\natexlab{}.
\newblock \showarticletitle{Llama 2: Open Foundation and Fine-Tuned Chat
  Models}.
\newblock \bibinfo{journal}{\emph{arXiv preprint arXiv:2307.09288}}
  (\bibinfo{year}{2023}).
\newblock


\bibitem[Wang et~al\mbox{.}(2026)]%
        {wang26}
\bibfield{author}{\bibinfo{person}{Hai Wang}, \bibinfo{person}{Chenhao Wang},
  \bibinfo{person}{Qifeng Cai}, \bibinfo{person}{Yixiu Liu},
  \bibinfo{person}{Miao Peng}, \bibinfo{person}{Nuo Chen},
  \bibinfo{person}{Yuanlin Tu}, \bibinfo{person}{Chengcheng Xu}, {and}
  \bibinfo{person}{Feng Zhang}.} \bibinfo{year}{2026}\natexlab{}.
\newblock \bibinfo{title}{Scalable Frequency- and Length-Aware Subdocument
  Deduplication for Large Language Model Pretraining}.
\newblock
\showeprint[arxiv]{2608.03089}~[cs.CL]


\bibitem[Wang et~al\mbox{.}(2016)]%
        {wang16}
\bibfield{author}{\bibinfo{person}{Leyuan Wang}, \bibinfo{person}{Sean Baxter},
  {and} \bibinfo{person}{John~D. Owens}.} \bibinfo{year}{2016}\natexlab{}.
\newblock \showarticletitle{Fast parallel skew and prefix-doubling suffix array
  construction on the GPU}.
\newblock \bibinfo{journal}{\emph{Concurr. Comput. : Pract. Exper.}}
  \bibinfo{volume}{28}, \bibinfo{number}{12} (\bibinfo{date}{aug}
  \bibinfo{year}{2016}), \bibinfo{pages}{3466--3484}.
\newblock
\showISSN{1532-0626}
\href{https://doi.org/10.1002/cpe.3867}{doi:\nolinkurl{10.1002/cpe.3867}}


\bibitem[Wang et~al\mbox{.}(2023)]%
        {wang23}
\bibfield{author}{\bibinfo{person}{Qiange Wang}, \bibinfo{person}{Xin Ai},
  \bibinfo{person}{Yanfeng Zhang}, \bibinfo{person}{Jing Chen}, {and}
  \bibinfo{person}{Ge Yu}.} \bibinfo{year}{2023}\natexlab{}.
\newblock \showarticletitle{{HyTGraph}: {GPU}-Accelerated Graph Processing with
  Hybrid Transfer Management}. In \bibinfo{booktitle}{\emph{2023 IEEE 39th
  International Conference on Data Engineering (ICDE)}}.
  \bibinfo{publisher}{IEEE}, \bibinfo{pages}{558--571}.
\newblock
\href{https://doi.org/10.1109/ICDE55515.2023.00049}{doi:\nolinkurl{10.1109/ICDE55515.2023.00049}}


\bibitem[Weber et~al\mbox{.}(2024)]%
        {weber24}
\bibfield{author}{\bibinfo{person}{Maurice Weber}, \bibinfo{person}{Daniel~Y.
  Fu}, \bibinfo{person}{Quentin Anthony}, \bibinfo{person}{Yonatan Oren},
  \bibinfo{person}{Shane Adams}, \bibinfo{person}{Anton Alexandrov},
  \bibinfo{person}{Xiaozhong Lyu}, \bibinfo{person}{Huu Nguyen},
  \bibinfo{person}{Xiaozhe Yao}, \bibinfo{person}{Virginia Adams},
  \bibinfo{person}{Ben Athiwaratkun}, \bibinfo{person}{Rahul Chalamala},
  \bibinfo{person}{Kezhen Chen}, \bibinfo{person}{Max Ryabinin},
  \bibinfo{person}{Tri Dao}, \bibinfo{person}{Percy Liang},
  \bibinfo{person}{Christopher R{\'e}}, \bibinfo{person}{Irina Rish}, {and}
  \bibinfo{person}{Ce Zhang}.} \bibinfo{year}{2024}\natexlab{}.
\newblock \showarticletitle{RedPajama: an Open Dataset for Training Large
  Language Models}. In \bibinfo{booktitle}{\emph{Advances in Neural Information
  Processing Systems}}, Vol.~\bibinfo{volume}{37}. \bibinfo{publisher}{Curran
  Associates, Inc.}, \bibinfo{pages}{116462--116492}.
\newblock
\href{https://doi.org/10.52202/079017-3697}{doi:\nolinkurl{10.52202/079017-3697}}


\bibitem[Xie et~al\mbox{.}(2020)]%
        {xie20}
\bibfield{author}{\bibinfo{person}{Jing~Yi Xie}, \bibinfo{person}{Ge Nong},
  \bibinfo{person}{Bin Lao}, {and} \bibinfo{person}{Wentao Xu}.}
  \bibinfo{year}{2020}\natexlab{}.
\newblock \showarticletitle{Scalable Suffix Sorting on a Multicore Machine}.
\newblock \bibinfo{journal}{\emph{IEEE Trans. Comput.}} \bibinfo{volume}{69},
  \bibinfo{number}{9} (\bibinfo{year}{2020}), \bibinfo{pages}{1364--1375}.
\newblock
\href{https://doi.org/10.1109/TC.2020.2972546}{doi:\nolinkurl{10.1109/TC.2020.2972546}}


\bibitem[Xu et~al\mbox{.}(2025)]%
        {xu25}
\bibfield{author}{\bibinfo{person}{Hao Xu}, \bibinfo{person}{Jiacheng Liu},
  \bibinfo{person}{Yejin Choi}, \bibinfo{person}{Noah~A. Smith}, {and}
  \bibinfo{person}{Hannaneh Hajishirzi}.} \bibinfo{year}{2025}\natexlab{}.
\newblock \showarticletitle{Infini-gram mini: Exact n-gram Search at the
  Internet Scale with {FM}-Index}. In \bibinfo{booktitle}{\emph{Proceedings of
  the 2025 Conference on Empirical Methods in Natural Language Processing}}.
  \bibinfo{publisher}{Association for Computational Linguistics},
  \bibinfo{address}{Suzhou, China}, \bibinfo{pages}{24944--24969}.
\newblock
\href{https://doi.org/10.18653/v1/2025.emnlp-main.1268}{doi:\nolinkurl{10.18653/v1/2025.emnlp-main.1268}}


\bibitem[Yoneda et~al\mbox{.}(2026)]%
        {yoneda26}
\bibfield{author}{\bibinfo{person}{Masataka Yoneda}, \bibinfo{person}{Yusuke
  Matsushita}, \bibinfo{person}{Go Kamoda}, \bibinfo{person}{Kohei Suenaga},
  \bibinfo{person}{Takuya Akiba}, \bibinfo{person}{Masaki Waga}, {and}
  \bibinfo{person}{Sho Yokoi}.} \bibinfo{year}{2026}\natexlab{}.
\newblock \showarticletitle{SoftMatcha 2: A Fast and Soft Pattern Matcher for
  Trillion-Scale Corpora}. In \bibinfo{booktitle}{\emph{Forty-third
  International Conference on Machine Learning}}.
\newblock
\urldef\tempurl%
\url{https://openreview.net/forum?id=y1czGbUum5}
\showURL{%
\tempurl}


\end{thebibliography}

\appendix

\section{Worked example}
\label{sec:supp-example}

\begin{figure*}[t]
    \centering
    \begin{subfigure}[t]{\columnwidth}
        \centering
        \includegraphics{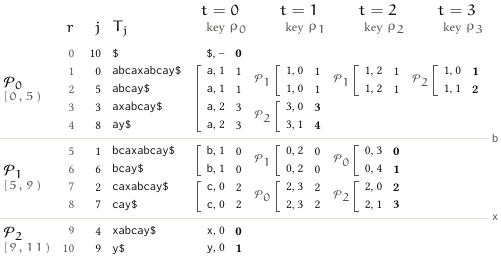}
        \caption{Ranks within partitions}
        \label{fig:example-ranks}
    \end{subfigure}\hfill
    \begin{subfigure}[t]{\columnwidth}
        \centering
        \includegraphics{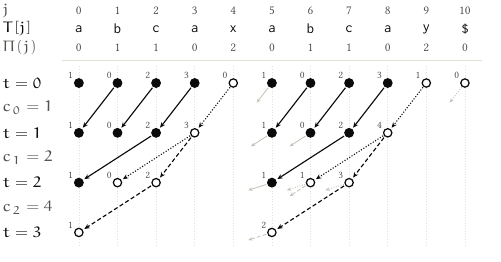}
        \caption{Messages between generations}
        \label{fig:example-messages}
    \end{subfigure}
    \caption{
        The worked example: $T = \texttt{abcaxabcay\$}$, with $\ell = K = 1$ and the splitters \texttt{b} and \texttt{x}.
        (a) The suffixes in \sa\ order, in partitions 0 to 2, labeled $\pset{0}$ to $\pset{2}$ at the left, above their ranges of \sa.
        Each generation lists the keys of its active suffixes, key~(\ref{eq:key0}) at \gen{0} and key~(\ref{eq:keyt}) after it, and the local ranks $\lrk{t}(j)$ that sorting them yields, which become the first components of the next generation's keys.
        A bracket covers the suffixes whose keys share their first component.
        After \gen{0}, the label to the left of a bracket names the partition that holds the bracket's extension suffixes.
        A rank is bold once it is final, and the suffix then leaves the columns; the dash is \sentinel's reserved second component~(\S\ref{sec:supp-terminator}).
        (b) The active suffixes of each generation in text order, with local ranks: filled if a suffix stays tied, hollow if it retires there.
        Arrows: every message that each generation sends, to the row below.
        Own messages of live ranks are solid, of pills dotted, and forwarded pills dashed.
        Black messages are read, one into each dot below the first row.
        Those that reach retired suffixes are faint stubs pointing toward them.
        No liveness filter is shown here for simplicity, and the filter would eliminate some of the dead messages.
    }
    \label{fig:example}
\end{figure*}

Let $T = \texttt{abcaxabcay\$}$, with $n = 11$, $\ell = K = 1$, and the splitters \texttt{b} and \texttt{x}.
Partition 0 holds the 5 suffixes that start with \sentinel\ or \texttt{a}, partition 1 the 4 that start with \texttt{b} or \texttt{c}, and partition 2 the 2 that start with \texttt{x} or \texttt{y}, so the three fill the ranges $[0, 5)$, $[5, 9)$, and $[9, 11)$ of \sa\ (\S\ref{sec:method-partition}).
$T$ holds two copies of \texttt{abca}, followed by \texttt{x} and \texttt{y}, so suffixes in the two copies stay tied until their contexts reach the \texttt{x} or the \texttt{y}.
Figure~\ref{fig:example} traces the run.

\para{Generation 0}
Each partition sorts its suffixes by key~(\ref{eq:key0}): here, by the first symbol, and then by the partition of the next suffix.
The four suffixes of partition 0 that start with \texttt{a} share that symbol, and the partition component splits them into two classes of tied suffixes, $\{0, 5\}$ and $\{3, 8\}$.
Suffixes 4, 9, and 10 have unique keys in their partitions and retire, while the other 8 stay tied.
Every suffix but 0 then sends its rank to the suffix before it: 10 messages, of which the three from 4, 9, and 10 are pills.

\para{Doubling generations}
At \gen{1}, each class compares the ranks of its members' extension suffixes, one position ahead.
Those of $\{3, 8\}$, 4 and 9, both lie in partition 2, as target coherence guarantees (Theorem~\ref{thm:tci}), and their ranks 0 and 1 split the class.
The members of each other class read the same rank, and stay tied.
At \gen{2}, whose extension suffixes lie $\cx{1} = 2$ positions ahead, $\{1, 6\}$ splits on the final ranks of 3 and 8, and $\{2, 7\}$ on those of 4 and 9; at \gen{3}, $\cx{2} = 4$ positions ahead, $\{0, 5\}$ splits on those of 4 and 9 as well.
Generations 1 to 3 thus rank 8, 6, and 2 suffixes.
Output assembly places each suffix at its rank past its partition's offset, which gives $\sa = [10, 0, 5, 3, 8, 1, 6, 2, 7, 4, 9]$.

\para{Messages}
Figure~\ref{fig:example-messages} draws all 25 messages that the run sends without the liveness filter.
The 16 that are read, in black, are one for each of the 8, 6, and 2 suffixes that generations 1 to 3 rank, as complete coverage states (Theorem~\ref{thm:coverage}): every dot below the first row has exactly one incoming black arrow.
The other 9, the faint stubs, reach suffixes that have retired, and are dead.
Forwarding passes a final rank on for as long as it is needed.
Suffix 4 retires at \gen{0}, but its rank is the extension rank of 3, 2, and 0 at generations 1, 2, and 3: 4 sends it to 3 as a pill, 3 forwards it to 2, and 2 forwards it to 0, and each of them retires on receiving it (\S\ref{sec:method-message}).
The rank of 9 serves 8, 7, and 5 in the same way.

\section{Proofs}
\label{sec:supp-proofs}

Notation and terminology are those of \S\ref{sec:method}.

\subsection{Target coherence}
\label{sec:supp-proof-tci}

\begin{theorem}[Theorem~\ref{thm:tci}, target coherence]
After every generation~$t$, any two suffixes tied within a partition share their first \cx{t} symbols, and their extension suffixes in \gen{t + 1} lie in one partition.
\end{theorem}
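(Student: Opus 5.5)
We prove both claims together by induction on $t$, with the hypothesis $H(t)$: if suffixes $j \neq j'$ lie in one partition and are tied after \gen{t} (equal local rank $\lrk{t}$, both active or, more precisely, sharing a rank class), then $\ctx{j}{\cx{t}} = \ctx{j'}{\cx{t}}$ and $\pmap(j + \cx{t}) = \pmap(j' + \cx{t})$. Note that the extension suffix of $j$ in \gen{t + 1} is exactly $j + \cx{t}$, so this is the statement. Throughout, we use that positions past $n$ never arise for tied suffixes: since \sentinel\ is unique, a context containing it occurs once, so tied suffixes have $\cx{t}$-contexts free of \sentinel\ and $j + \cx{t} < n$.

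\emph{Base case $t = 0$.} Ties after \gen{0} mean equal key~\eqref{eq:key0}, so $\ctx{j}{\ell} = \ctx{j'}{\ell}$ and $\pmap(j + \ell) = \pmap(j' + \ell)$, with $\cx{0} = \ell$.

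\emph{Inductive step.} Assume $H(t - 1)$ for $t \ge 1$, and let $j, j'$ be tied after \gen{t} in one partition. First we check that they were tied after \gen{t - 1}: the rank assignment gives each tied suffix its old rank plus a count of suffixes with smaller extension rank, so ranks after \gen{t} refine those after \gen{t - 1}; equal key~\eqref{eq:keyt} forces equal $\lrk{t - 1}$. (A retired suffix keeps a unique rank, so it is never tied and needs no treatment.) By $H(t - 1)$, $\ctx{j}{\cx{t-1}} = \ctx{j'}{\cx{t-1}}$ and $m = j + \cx{t-1}$, $m' = j' + \cx{t-1}$ lie in one partition. Equal keys also give $\lrk{t-1}(m) = \lrk{t-1}(m')$; since $m, m'$ share a partition, this equality of local ranks is meaningful, so $m$ and $m'$ are tied after \gen{t - 1} (or $m = m'$, impossible as $j \ne j'$). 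Applying $H(t - 1)$ to $m, m'$ yields $\ctx{m}{\cx{t-1}} = \ctx{m'}{\cx{t-1}}$ and $\pmap(m + \cx{t-1}) = \pmap(m' + \cx{t-1})$. Concatenating the two equal $\cx{t-1}$-contexts gives equal $\cx{t}$-contexts of $j, j'$, and $m + \cx{t-1} = j + \cx{t}$ gives the partition claim.

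The delicate step is the claim that equal local ranks $\lrk{t-1}(m) = \lrk{t-1}(m')$ mean $m, m'$ are genuinely tied: one must verify that a suffix whose rank is final and unique cannot share a value with an active suffix in the same partition, i.e.\ that local ranks within a partition are an order-consistent labeling where equal values occur only within a live tie class. I would establish this as a small auxiliary invariant alongside the induction (ranks after \gen{t} are exactly the count of strictly smaller suffixes in the current order, so equal ranks imply equal class), and also confirm via Theorem~\ref{thm:coverage} that the extension rank used in key~\eqref{eq:keyt} is indeed $\lrk{t-1}(m)$, including when $m$ retired earlier and its final rank arrives by forwarding.
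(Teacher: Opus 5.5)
Your proposal is correct and follows the paper's proof step for step. It inducts on $t$, deduces from a tie after \gen{t} a tie after \gen{t - 1} with equal extension ranks, applies the hypothesis to both $(j, j')$ and $(m, m')$, and concatenates the $\cx{t - 1}$-contexts. Your auxiliary invariant, that a rank counts the suffixes of its partition strictly preceding it in the current order, is exactly what justifies two of the paper's one-line claims: that equal ranks after \gen{t} mean equal keys~\eqref{eq:keyt}, and that two distinct suffixes of one partition share a rank only if it is not final.

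The appeal to Theorem~\ref{thm:coverage} is unnecessary, since the statement concerns key~\eqref{eq:keyt} as defined. It is also best dropped, because the pullback property that theorem rests on is itself derived from target coherence.
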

\begin{proof}
By induction on $t$.
After \gen{0}, the rank of a suffix is the number of suffixes in its partition with strictly smaller keys~(\ref{eq:key0}), so two suffixes tied after \gen{0} have equal keys.
They thus share their $\ell$-contexts, with $\cx{0} = \ell$, and the partitions $\pmap(j + \ell)$ of their extension suffixes in \gen{1}.

Let $t \ge 1$, and let $j \ne j'$ be tied after \gen{t} within a partition.
A shared rank is not final, so both were active in \gen{t}.
Equal ranks after \gen{t} mean equal keys~(\ref{eq:keyt}).
The $s$ suffixes tied at a rank $r$ after \gen{t - 1} receive new ranks in $[r, r + s)$, and no other suffix of the partition holds a rank in that range, so suffixes with different old ranks get different new ones.
Within one such group, a suffix's new rank exceeds $r$ by the number of suffixes in the group with strictly smaller extension ranks, so suffixes with different extension ranks get different new ones as well.
Hence $j$ and $j'$ were tied after \gen{t - 1}, and their extension suffixes $m = j + \cx{t - 1}$ and $m' = j' + \cx{t - 1}$ have equal ranks after \gen{t - 1}.

By the induction hypothesis for $j$ and $j'$, the suffixes $m$ and $m'$ lie in one partition, and $\ctx{j}{\cx{t - 1}} = \ctx{j'}{\cx{t - 1}}$.
As $j \ne j'$, also $m \ne m'$.
Two distinct suffixes of one partition share a rank only if it is not final, so $m$ and $m'$ were tied after \gen{t - 1}.
By the induction hypothesis for $m$ and $m'$, $\ctx{m}{\cx{t - 1}} = \ctx{m'}{\cx{t - 1}}$, and $m + \cx{t - 1}$ and $m' + \cx{t - 1}$ lie in one partition.
Concatenating the contexts of $j$ and $m$, and of $j'$ and $m'$, gives $\ctx{j}{\cx{t}} = \ctx{j'}{\cx{t}}$, as $\cx{t} = 2\cx{t - 1}$.
Finally, $m + \cx{t - 1} = j + \cx{t}$ and $m' + \cx{t - 1} = j' + \cx{t}$ are the extension suffixes of $j$ and $j'$ in \gen{t + 1}, and they lie in one partition.
\end{proof}

\subsection{Complete coverage}
\label{sec:supp-proof-coverage}

The proof uses pullback (\S\ref{sec:method-message}): if a suffix $j$ is active in \gen{t \ge 2}, then $j + \cx{t - 2}$ was active in \gen{t - 1}.

\begin{theorem}[Theorem~\ref{thm:coverage}, complete coverage]
In every \gen{t \ge 1}, each suffix receives at most one message, and each active suffix $j$ receives exactly one: it carries the rank of the extension suffix $m = j + \cx{t - 1}$, marked final if and only if $m$ retired before \gen{t}.
$m$ sends this message itself if it was active in \gen{t - 1}; otherwise, the midpoint $j + \cx{t - 2}$ forwards it.
\end{theorem}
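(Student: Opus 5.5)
We argue by induction on $t$. The first step identifies which messages can reach a given suffix $j$ in \gen{t}. Messages read in \gen{t} are those sent after \gen{t - 1}, and they are of two kinds. An own message $\langle m - \cx{t-1}, \lrk{t-1}(m), \cdot\rangle$ reaches $j$ only from $m = j + \cx{t - 1}$. A forward $\langle x - \cx{t - 2}, r, \cdot\rangle$, for $t \ge 2$, reaches $j$ only from $x = j + \cx{t - 2}$, the midpoint.

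In \gen{0} every suffix sends its own message and nothing is forwarded. So for $t = 1$, suffix $j$ receives exactly the message of $m = j + \ell$. This message is final if and only if $m$ retired in \gen{0}, that is, before \gen{1}. Every suffix is active in \gen{0}, so $m$ was active in \gen{t - 1}, matching the statement. There is one edge case: if $j + \ell \ge n$, the sentinel's uniqueness makes $j$ retire in \gen{0}, so $j$ is not active in \gen{1}. I would check this against the reserved-sentinel convention rather than belabor it.

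For the inductive step, take $t \ge 2$ and case on whether $m = j + \cx{t - 1}$ was active in \gen{t - 1}.

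In the first case, $m$ was active in \gen{t - 1}, so it sends its own message carrying $\lrk{t - 1}(m)$. That message is final if and only if $m$ retired in \gen{t - 1}, which is the same as having retired before \gen{t}, because it was active in \gen{t - 1}. The midpoint $x = j + \cx{t - 2}$ forwards only if it received a final rank in \gen{t - 1}. By the induction hypothesis, the rank $x$ received (if $x$ was active) is that of $x + \cx{t - 2} = m$, marked final only if $m$ retired before \gen{t - 1}. That contradicts $m$ being active in \gen{t - 1}, so $x$ forwards nothing.

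In the second case, $m$ was not active in \gen{t - 1}, so it sends nothing; hence it retired before \gen{t - 1}. If $j$ is inactive in \gen{t}, there is nothing to prove beyond "at most one" message. If $j$ is active, pullback makes $x$ active in \gen{t - 1}. By the induction hypothesis, $x$ received exactly one message, carrying the rank of $m$ and marked final since $m$ retired before \gen{t - 1}. By the algorithm, $x$ then forwards $\langle x - \cx{t - 2}, \lrk{}(m), \mathit{final}\rangle$ to $j$. That message is final, and indeed $m$ retired before \gen{t}.

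In both cases at most one of the two possible senders actually sends. This gives "at most one message" for every suffix, active or not. The reason is that the forward exists only when $x$ received a pill, and then $m$ was inactive and silent.

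The main obstacle is making "at most one" airtight for inactive suffixes, including those whose $x$ was inactive in \gen{t - 1}. The fix is to note that forwarding requires $x$ to have been active and to have received a pill. The induction hypothesis applied to that active $x$ then pins the pill to $m$'s retirement, which excludes $m$'s own message. So the induction hypothesis must be stated for all suffixes, with the "exactly one, carries $m$'s rank" clause applied only to active ones.
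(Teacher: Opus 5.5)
Your proof is correct and follows the paper's argument essentially step for step. Both use induction on $t$ and observe that only $m$ (its own message) and the midpoint $j + \cx{t - 2}$ (a forward) can address $j$. Both then split on whether $m$ was active in \gen{t - 1}, apply the induction hypothesis to the midpoint to rule out a forward in the first case, and use pullback plus the hypothesis to produce it in the second. The one point the paper states that you leave implicit is that the liveness filter drops only messages to suffixes that have already retired, so the claims hold with or without it.
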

\begin{proof}
By induction on $t$.
The messages that \gen{t} reads are those that \gen{t - 1} sends.
The liveness filter removes only messages to suffixes that have retired, so it removes none of the messages to an active suffix, and the claims hold with or without it.

For $t = 1$, every suffix $m \ge \ell$ was active in \gen{0}, and sent the message $\langle m - \ell, \lrk{0}(m), \mathit{final} \rangle$, with \textit{final} set if and only if $m$ retired in \gen{0}; \gen{0} forwards nothing.
So each suffix $j$ receives at most one message, from $m = j + \ell$, and an active suffix $j$ receives it.

Let $t \ge 2$, and let $z = j + \cx{t - 2}$, the midpoint of $j$ and $m$, as $\cx{t - 1} = 2\cx{t - 2}$.
In \gen{t - 1}, a suffix $x$ sends its own message to $x - \cx{t - 1}$, and forwards to $x - \cx{t - 2}$ the rank that it received for its extension suffix $x + \cx{t - 2}$.
So only $m$ can send its own message to $j$, and only $z$ can forward to $j$, a rank that $z$ received for $z + \cx{t - 2} = m$.

\emph{$m$ was active in \gen{t - 1}.}
Then $m$ sends $\lrk{t - 1}(m)$ to $j$, with \textit{final} set if and only if $m$ retired in \gen{t - 1}, \ie before \gen{t}.
A suffix that is not active in \gen{t - 1} forwards nothing in it.
If $z$ was active in \gen{t - 1}, it received there, by the induction hypothesis, the rank of $m$ marked final if and only if $m$ retired before \gen{t - 1}, which it did not; so $z$ forwards nothing.
Hence $j$ receives exactly one message.

\emph{$m$ was not active in \gen{t - 1}.}
Then $m$ retired before \gen{t - 1} and sends nothing, so $j$ receives at most the forward from $z$.
If $j$ is active in \gen{t}, then $z$ was active in \gen{t - 1} by pullback.
By the induction hypothesis, $z$ received there the rank of $m$ marked final, since $m$ retired before \gen{t - 1}.
So $z$ forwards it to $j$, marked final, and $j$ receives exactly one message.
The rank of $m$ is final, so it is the same in every later generation.
\end{proof}

\section{Design choices}
\label{sec:supp-design}

This section describes how \gpusaca\ handles the end of the text, and two alternatives that would free host memory of $T$ and \pmap: persisting the contexts, and relaying partition addresses.

\subsection{The end of the text}
\label{sec:supp-terminator}

Real corpora carry no sentinel, and the smallest value 0 is a valid token.
\gpusaca\ pads $T$ with $\max(\ell, K)$ instances of 0, so that every $\ell$-context and $K$-gram can be read whole, and orders every suffix before the longer suffixes that it prefixes, as the sentinel would.
The padding keeps \pmap\ monotone: the padded $K$-gram of a suffix shorter than $K$ is no larger than that of any longer suffix it prefixes.
The second component of key~(\ref{eq:key0}) is $\pmap(j + \ell) + \ell$ for a suffix $j$ with $j + \ell < n$.
A suffix whose $\ell$-context runs past the end, $j + \ell \ge n$, instead gets its length less one, $n - 1 - j < \ell$.
Two suffixes with equal padded $\ell$-contexts then sort correctly: if either runs past the end, the shorter is a prefix of the other, and has the smaller second component.
These reserved values are distinct and lie below every partition's, so each suffix of length at most $\ell$ has a unique key, and retires at \gen{0}.
By pullback (\S\ref{sec:method-message}), a suffix $j$ active after \gen{t} has $j + \cx{t - 1}$ active after \gen{t - 1}, so by induction every extension suffix of a doubling generation lies within $T$.

\subsection{Persisting the contexts}
\label{sec:supp-contexts}

After sampling the splitters, \gpusaca\ reads $T$ at random only in \gen{0}: to gather each suffix's $\ell$-context for key~(\ref{eq:key0}), and, without a resident \pmap, to find $\pmap(j \pm \ell)$ by binary-searching the $K$-grams at $j \pm \ell$ (\S\ref{sec:system-residency}).
Partitioning already reads $T$ and \pmap\ in index order, so its scatter can instead write $\ctx{j}{\ell}$ into the record of each suffix $j$, and $\pmap(j \pm \ell)$ too when \pmap\ is not resident, from a sliding window over the two; \gen{0} then reads them with the records.
$T$ is then read only sequentially, besides the $s$ positions that sampling reads, at most a few million on our corpora, which the drives serve directly.
So $T$ need never be resident; with the relay (\S\ref{sec:supp-relay}), \pmap\ need not be either.

\para{Cost}
The scatter's records, which cross \barr{0}, grow from a 5-byte suffix index to $5 + 2\ell = 37$ bytes per suffix at $\ell = 16$ for 2-byte tokens, and by $2 \lceil \log_2 p \rceil / 8$ more with the two addresses (\S\ref{sec:supp-relay}).
So the short-lived bytes at \barr{0} grow from $5n$ to about $37n$, twice the $18n$ at \barr{1}~(\S\ref{sec:method-residency}), and become the peak of the run.
Each of these bytes that spills is written and read once, to save the 2 bytes per suffix that $T$ takes in host memory.

\subsection{Relaying partition addresses}
\label{sec:supp-relay}

\gpusaca\ requires the partition map \pmap\ only to address messages (\S\ref{sec:method-message}).
The relay, sketched in \S\ref{sec:method-residency}, seeds the messages with the addresses instead and the mechanism sustains itself, so that \pmap\ need never be resident.

\begin{figure}[t]
    \centering
    \includegraphics{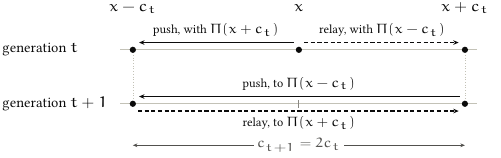}
    \caption{
        The relay mechanism.
        Pushes are solid, relays dashed.
        At \gen{t}, an active suffix $x$ knows the partitions of the suffixes $\cx{t}$ positions before and after it, and sends each of the two the other's partition.
        At \gen{t + 1}, the two are $\cx{t + 1}$ apart, and each sends a message to the other ($x - \cx{t}$ relays and $x + \cx{t}$ pushes), to the address that $x$ sent it.
    }
    \label{fig:relay}
\end{figure}

At \gen{t}, an active suffix $x$ sends its own message, its \emph{push}, to $\pmap(x - \cx{t})$; call it $x$'s \emph{push address}.
We call $\pmap(x + \cx{t})$ its \emph{relay address}.
The suffixes $x - \cx{t}$ and $x + \cx{t}$ are $2\cx{t} = \cx{t + 1}$ positions apart, so at \gen{t + 1}, the second pushes to the first, and the first \emph{relays} to the second: each needs the other's partition (Figure~\ref{fig:relay}).
With relaying, their midpoint $x$ provides both at \gen{t}: its push $\langle x - \cx{t}, \lrk{t}(x), \mathit{final}, \pmap(x + \cx{t}) \rangle$ carries its relay address~\footnote{
    The $\pmap(x + \cx{t})$ component is additional to the message structure (\S\ref{sec:method-message}) of the paper.
}, and its relay $\langle x + \cx{t}, \pmap(x - \cx{t}) \rangle$, sent to $\pmap(x + \cx{t})$, carries its push address.
In turn, $x$ received its push address from the relay of $x - \cx{t - 1}$, and its relay address from the push of its extension suffix $x + \cx{t - 1}$, both sent at \gen{t - 1}.
\Gen{0} seeds the relay mechanism with $\pmap(j \pm \ell)$ for each suffix $j$, computed by binary-searching $T[j \pm \ell, j \pm \ell + K)$ among the splitters (\S\ref{sec:method-partition}), or read from the records when $T$ is not resident (\S\ref{sec:supp-contexts}).
A pill forward of $x$ goes to its push address of \gen{t - 1}, which $x$ keeps with its record for one more generation.
Relays fill inboxes of their own, at most one per suffix, and join the active suffixes the same way as the pushes do, at the receive-side join (\S\ref{sec:method-message}).
The relayed addresses that the join attaches to the active suffixes replace the stagers' lookups into \pmap\ (\S\ref{sec:system-device}).

\para{No needed message is lost}
Only survivors relay; the retirees of a generation do not, as their relays would serve only messages that are never read (below).
A message is sent only if its address has arrived.
An address that arrives is exact, as the arithmetic above copies it from those of \gen{0}, by induction.
We argue that the sender of every message that is actually read has the reader's address.
Each case below rests on pullback (\S\ref{sec:method-message}): if a suffix is active after \gen{t}, its extension suffix, $\cx{t - 1}$ ahead, was active after \gen{t - 1}.
Let $x$ be active in \gen{t}.

\begin{enumerate}
\item \emph{Push to $y = x - \cx{t}$.}
If the recipient $y$ reads the push at \gen{t + 1}, it must have been active after \gen{t}.
So its extension suffix $y + \cx{t - 1} = x - \cx{t - 1}$, the midpoint of $y$ and $x$ (as $\cx{t} = 2\cx{t - 1}$), survived \gen{t - 1}, and relayed $\pmap(y)$ to $x$.

\item \emph{Relay to $x + \cx{t}$, if $x$ survives \gen{t}.}
Then its extension suffix $x + \cx{t - 1}$, the midpoint of $x$ and $x + \cx{t}$, survived \gen{t - 1}, and pushed to $x$ with $\pmap(x + \cx{t})$, rather than retiring and leaving $x$ a forward, which carries no address.
A retiree's relay is never needed: it would carry $\pmap(y)$ only for messages to $y$ that are read from \gen{t + 2} on, and $y$ is active after \gen{t + 1} only if its extension suffix of \gen{t + 1}, $y + \cx{t} = x$, survives \gen{t}.

\item \emph{Forward to $x - \cx{t - 1}$, if $x$ received a pill.}
It uses the push address of $x$ of the generation before.
If its recipient reads it, the recipient is active after \gen{t}, and thus after \gen{t - 1}, so the push case one generation earlier gave $x$ that address.
\end{enumerate}

By induction over the generations, every message that is read reaches its partition, and Theorem~\ref{thm:coverage} holds without \pmap; the pushes and forwards that are not sent would have been dead.

\para{Cost}
\gpusaca's implementation uses 5-byte suffix indices, and \pmap-entries take $\wpmap = (\lceil \log_2 p \rceil + 1) / 8$ bytes for $p$ partitions, with the liveness filter's flag (\S\ref{sec:method-message}), where the address is $\wpmap' = \lceil \log_2 p \rceil / 8$ bytes.
A run sends at most one relay per survivor and generation, consisting of a suffix index and an address.
Thus there are at most \cmass\ relays, where $\cmass = \bar{d}n$ is the doubling work (\S\ref{sec:evaluation-structure}): $0.15n$ to $4.5n$ on our corpora (Table~\ref{tab:datasets} of the paper), costing $(5 + \wpmap')\cmass$ bytes.
Each push and forward also carries an additional address, at most $n + \cmass + |\act{0}|$ of them over a run~(\S\ref{sec:method-message}), with \act{0} the suffixes still tied after \gen{0}, costing $(n + \cmass + |\act{0}|) \wpmap'$ bytes.
Each active record needs to carry an additional address, total \cmass\ over a run, costing $\cmass\wpmap'$ bytes.
The relay mechanism thus adds at most $5\cmass + (n + 3\cmass + |\act{0}|) \wpmap'$ bytes to a run's streams.
Each byte is written to SSD and read back once if it spills.
On \data{dolma-sample} ($p = 34$, $\wpmap' = 6/8$) and \data{github} ($p = 249$, $\wpmap' = 1$), that is $1.9n$ and $38n$ bytes, \ie \qty{15}{\giga\byte} and \qty{2.4}{\tera\byte}.
These streams can spill to SSD unlike \pmap.
But on our corpora, they outweigh \pmap's $\wpmap n$ bytes, about a byte per suffix, or \qty{7}{\giga\byte} and \qty{72}{\giga\byte} on the two above.
\gpusaca\ therefore keeps \pmap\ resident, skipping the additional SSD traffic (\S\ref{sec:method-residency}).
\section{Platform}
\label{sec:supp-platform}

Table~\ref{tab:hardware} lists the measured rates of the server that runs every experiment (\S\ref{sec:prelim-hw}).

\begin{table}[t]
\caption{
    The platform of all evaluations: an NVIDIA RTX PRO 6000 Blackwell GPU on PCIe~5.0~$\times$16, two 64-core Intel Xeon 6710E CPUs, and two Samsung PM9A3 NVMe SSDs.
    Per tier: capacity, sequential bandwidth, and random reads per second (8-byte loads by the GPU, by the GPU across the PCIe link; or by all 128 CPU cores; 4 KB pages per SSD).
}
\label{tab:hardware}
\small
\setlength{\tabcolsep}{3pt}
\begin{tabular}{@{}>{\raggedright\arraybackslash}p{0.22\columnwidth}>{\raggedright\arraybackslash}p{0.21\columnwidth}>{\raggedright\arraybackslash}p{0.33\columnwidth}>{\raggedright\arraybackslash}p{0.16\columnwidth}@{}}
    \toprule
    Tier & Capacity & Bandwidth & Random reads \\
    \midrule
    Device memory & \qty{96}{\giga\byte} & \qty{1.5}{\tera\byte\per\second} & 6.5--15\,G/s \\
    PCIe link & -- & \qty{56.5}{\giga\byte\per\second} to the device, \qty{57.3}{\giga\byte\per\second} back & 25--94\,M/s \\
    Host memory & \qty{1}{\tera\byte} & \qty{530}{\giga\byte\per\second} & 2.1\,G/s \\
    SSDs & $2 \times \qty{3.84}{\tera\byte}$ & \qty{6.7}{\giga\byte\per\second} read, \qty{4.1}{\giga\byte\per\second} write, each & 0.87\,M/s each \\
    \bottomrule
\end{tabular}
\end{table}

\section{Additional evaluation}
\label{sec:supp-experiments}

\subsection{Corpus structure}
\label{sec:supp-structure}

\para{Repeat depth and active sets}
Suffix $j$ stays tied after \gen{t} only if another suffix shares its $\cx{t}$-context, \ie if $\rdep(j) \ge \cx{t}$.
The partition component of key~(\ref{eq:key0}) breaks some of these ties up to $K$ symbols early (\S\ref{sec:method-tci}), so that the set \act{t} of suffixes still tied after \gen{t} satisfies
\begin{equation*}
    \{ j : \rdep(j) \ge \cx{t} + K \} \;\subseteq\; \act{t} \;\subseteq\; \{ j : \rdep(j) \ge \cx{t} \}.
\end{equation*}
Every active set in Figure~\ref{fig:structure-decay}a lies within this band, at most 8\% below its upper end: the component resolves 0.8--2.6 tokens of context beyond $\cx{t}$ for free, of the $K = 8$ it can.
Recomputing the tie-breaks from the suffix array and \pmap\ reproduces every active set that \gpusaca\ counts, exactly, on all corpora.

\para{Liveness filter}
Without the liveness filter (\S\ref{sec:method-message}), a run emits from 1.21 messages per token on \data{dolma-sample} to 6.18 on \data{github}, of which the dead ones, $n$ plus the forwards, total 1.06--1.68 per token on every corpus.
\Gen{0}'s flood holds 65--87\% of the dead messages, but only 14\% on \data{github}, where most suffixes survive it.
Figure~\ref{fig:structure-filter} ablates the filter: it skips 41--49\% of the dead messages of \gen{0}, whose recipients all retire in their senders' own generation, and 80--100\% of those of each later generation, most of whose recipients retired earlier.
In total, the filter skips \numrange{23}{48}\% of the messages and \numrange{46}{83}\% of the dead ones, without affecting correctness: in every generation of every corpus, the active sets equal those of the run without the filter, and the messages sent lie between the live ones and those emitted.
As it marks a retiree only once its partition has been ranked, what the filter skips depends on the order in which the partitions run, \ie the filtered message volume is not deterministic.

\begin{figure}[t]
    \centering
    \includegraphics{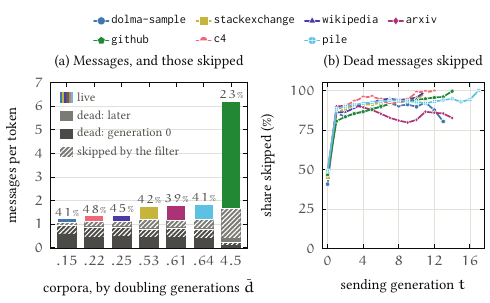}
    \caption{
        The liveness filter's ablation.
        (a) The messages per suffix that a run emits without the filter, with the corpora as in Figure~\ref{fig:structure-decay}b: the live ones (colored), and the dead ones of \gen{0} (dark) and of later generations, forwarded pills included; hatched, the share of each that the filter skips, and atop each bar, its share over the total message volume.
        (b) The share of each generation's dead messages that the filter skips, by the generation $t$ that sends them.
    }
    \label{fig:structure-filter}
\end{figure}

\para{Partition size skew}

The sampled splitters keep every partition close to its target size.
\gpusaca\ targets $p_{\min} = \lceil n / B \rceil$ partitions of $B$ suffixes, and fills them greedily from sampled $K$-grams so that each partition spans $h \approx 128 \log_2 n$ gaps between samples.
$B$ is set according to the device memory.
A partition's size then deviates from $B$ by about $1/\sqrt{h}$, and the largest of $p$ partitions by about $\sqrt{2 \ln p / h}$ (\S\ref{sec:method-partition}).
Sweeping $p$ from \num{34} to \num{24522} and $K$ from \num{4} to \num{32} tokens, the largest partition is found within \num{3}--\num{10}\% of $B$ and follows this envelope (Figure~\ref{fig:structure-skew}), fitting within the additional \num{20}\% space that \gpusaca\ reserves; the median partition is within \num{0.25}\% of $B$, and the greedy fill produces at most ten partitions beyond $p_{\min}$.
$K$ affects only feasibility: no splitter can divide a $K$-gram that outweighs $B$ (\S\ref{sec:method-partition}), which aborts \data{github} at $K = 4$, and both \data{github} and \data{pile} at $K = 8$ once $B$ falls to 32M suffixes ($p \ge 1984$ and $p \ge 7368$).

\begin{figure}[t]
    \centering
    \includegraphics{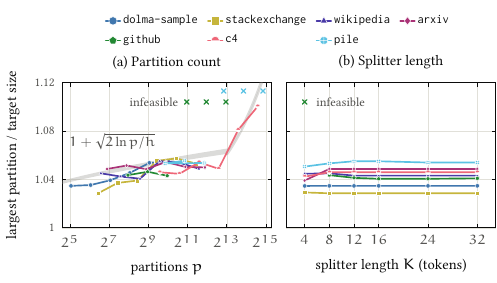}
    \caption{
        The largest partition-size as a multiple of its target size $B$, (a) against the partition count $p$, at $K = 8$ tokens, and (b) against the splitter length $K$, at each corpus' default $p$.
        The gray band is the sampling envelope $1 + \sqrt{2 \ln p / h}$ from the splitter design, with $h \approx 128 \log_2 n$ samples per partition; for \data{c4} and \data{pile}, $h$ falls as $1/p$ beyond $p \approx \num{8200}$, where the sample count reaches its cap.
    }
    \label{fig:structure-skew}
\end{figure}

\para{Heaviest $K$-grams}
Suffixes sharing their prefix $K$-gram share a partition, so the heaviest $K$-gram, of frequency $\fmax$, bounds the largest partition from below.
\gpusaca\ requires $\fmax \le B$, where the device holds partitions of up to $P_{\max} = 1.2B$ (\S\ref{sec:method-partition}).
Table~\ref{tab:supp-fmax} sets both against $B$ at each corpus' default partitioning.
At $K = 8$, the heaviest 8-gram occurs from \num{0.27} million times in \data{dolma-sample} to \num{54.6} million in \data{pile}: \num{0.1}--\num{21}\% of $B$, and at most \num{18}\% of $P_{\max}$.
Hence the sampling, not $\fmax$, sets the largest partition at \num{1.03}--\num{1.05}$B$.
Shorter splitters raise $\fmax$: at $K = 4$, the heaviest 4-gram of \data{github} holds \num{1.41}$B$ suffixes, and that of \data{c4} \num{0.74}$B$.
Of the \num{84} partitionings in the sweeps, the \num{77} whose heaviest $K$-gram holds at most \num{0.84}$B$ are feasible, and the \num{7} whose heaviest holds \num{1.41}--\num{6.75}$B$ abort.

Table~\ref{tab:supp-kgrams} decodes the heaviest 8-grams, none of which is prose.
In \data{stackexchange}, \data{github}, \data{c4}, and \data{pile}, it repeats a single token, a zero, a tab, a spaced period, or 16 spaces, and \numrange{66}{92}\% of its occurrences overlap the next one within longer runs.
Such runs give \data{github} and \data{pile} the heaviest 8-grams of all.
In the other three, it is boilerplate of some format: an empty field of YAML-encoded campaign-finance filings in \data{dolma-sample}, the headings that end a Wikipedia article, and the opening of a displayed equation in \data{arxiv}.

\begin{table}[!htb]
\caption{
    Each corpus' default partitioning: the partition count $p$, the target size $B$ in millions of suffixes, and the largest partition and the frequency $\fmax$ of the heaviest $K$-gram at $K = 8$ and $4$ tokens.
    The device holds partitions of up to $P_{\max} = 1.2B$.
}
\label{tab:supp-fmax}
\small
\setlength{\tabcolsep}{4pt}
\begin{tabular}{@{}l S[table-format=3.0] S[table-format=3.1] S[table-format=3.1] S[table-format=2.1] S[table-format=3.1]@{}}
    \toprule
    & & & {Largest} & \multicolumn{2}{c}{$\fmax$ (\% of $B$)} \\
    \cmidrule(l){5-6}
    Corpus & {$p$} & {$B$ (M)} & {(\% of $B$)} & {$K = 8$} & {$K = 4$} \\
    \midrule
    \data{dolma-sample}  & 34  & 243.6 & 103.5 & 0.1  & 0.4 \\
    \data{stackexchange} & 87  & 255.4 & 102.9 & 2.0  & 16.9 \\
    \data{wikipedia}     & 96  & 253.9 & 104.5 & 0.9  & 14.5 \\
    \data{arxiv}         & 119 & 256.1 & 104.9 & 2.3  & 30.7 \\
    \data{github}        & 249 & 257.8 & 104.4 & 17.8 & 140.9 \\
    \data{c4}            & 767 & 258.6 & 104.6 & 1.1  & 74.4 \\
    \data{pile}          & 922 & 258.9 & 105.3 & 21.1 & 44.7 \\
    \bottomrule
\end{tabular}

\bigskip
\caption{
    The heaviest 8-gram of each corpus, and its frequency $\fmax$ in millions; beneath it, where it occurs, as read off 40 of its occurrences drawn at random.
    \texttt{\mdseries\textvisiblespace} is a space, $\hookleftarrow$ a newline, and $\rightarrow$ a tab.
    The 8-grams are decoded with the corpora's tokenizers (\S\ref{sec:datasets}).
}
\label{tab:supp-kgrams}
\small
\setlength{\tabcolsep}{4pt}
\newcommand*{\ks}{\textvisiblespace}
\newcommand*{\kn}{\ensuremath{\hookleftarrow}}
\newcommand*{\kt}{\ensuremath{\rightarrow}}
\newcommand*{\kq}{\textquotesingle}
\begin{tabular}{@{}l S[table-format=2.2] l@{}}
    \toprule
    Corpus & {$\fmax$ (M)} & {Heaviest 8-gram; where it occurs} \\
    \midrule
    \data{dolma-sample}  & 0.27  & \texttt{\kn\ks\ks\ks\ks value:\ks\kq\kq\kn\ks\ks-} \\
                         &       & empty fields of YAML filing records \\[2pt]
    \data{stackexchange} & 5.19  & \texttt{00000000} \\
                         &       & zeros in hex dumps and numbers \\[2pt]
    \data{wikipedia}     & 2.18  & \texttt{\kn\kn References\kn\kn External\ks links} \\
                         &       & the headings that end an article \\[2pt]
    \data{arxiv}         & 5.89  & \texttt{\kn\textbackslash begin\{equation\}\kn\textbackslash} \\
                         &       & the start of displayed equations \\[2pt]
    \data{github}        & 45.83 & \texttt{\kt\kt\kt\kt\kt\kt\kt\kt} \\
                         &       & indentation of nested code \\[2pt]
    \data{c4}            & 2.95  & \texttt{\ks.\ks.\ks.\ks.\ks.\ks.\ks.\ks.} \\
                         &       & dot leaders of contents and listings \\[2pt]
    \data{pile}          & 54.65 & (\texttt{\ks}${} \times 16$)${} \times 8$ \\
                         &       & padding of plain-text tables \\
    \bottomrule
\end{tabular}
\end{table}

\subsection{Device memory}
\label{sec:supp-device}

In the sweep of Figure~\ref{fig:perf-device}, the device's peak memory is \qty{294}{\byte} per suffix of the largest partition: \qty{97}{\byte} for each of the three partitions in flight, one per CUDA stream, and \qty{2.7}{\byte} for the sort's workspace (\S\ref{sec:system-device}).
The largest partition stays 4--6\% above the mean throughout.
The heaviest 8-gram of \data{arxiv}, with 5.9M suffixes, sits well below even the 16M-suffix partitions at 16$\times$; that of \data{github}, with 45.8M, fits the 65M-suffix partitions at 4$\times$, but outweighs the 32M-suffix ones at 8$\times$, so the sweep splits \data{github} at $K = 20$ (\S\ref{sec:evaluation-reach}).
At 2$\times$, the builds are 9\% faster than at the device floor, partly as the largest partition sets the number of staging slabs (\S\ref{sec:system-pipeline}), and about half as many slabs pin in \qtyrange{10}{11}{\second} instead of \qtyrange{16}{18}{\second}.
Past 4$\times$, the builds slow by 12\% and 31\% at 8$\times$ and 16$\times$ on \data{arxiv}, and by 19\% and 46\% on \data{github}, as \gen{0} and doubling sort and rank up to 16 times as many partitions.
On \data{arxiv}, their costs rise from \qty{2.4}{} to \qty{4.0}{} and from \qty{1.2}{} to \qty{2.4}{\ns} per token; on \data{github}, where doubling dominates the build, from \qty{2.6}{} to \qty{3.7}{} and from \qty{8.0}{} to \qty{13.1}{\ns}.

\subsection{Bytes and time by phase}
\label{sec:supp-phases}

\para{Bytes}
Figure~\ref{fig:perf-bytes} splits the bytes that each build reads and writes across its phases (\S\ref{sec:evaluation-cost}).
We count them at the build's read and write calls: the text and the spill bypass the page cache, and the output either bypasses it or is synced as it is written (\S\ref{sec:system-layout}), so the calls carry all of the build's traffic to storage.

\para{Time}
This paragraph details the phase times of Figure~\ref{fig:perf-phases} beyond \S\ref{sec:evaluation-cost}.
Doubling costs \qtyrange{0.5}{1.7}{\ns} per token on every corpus but \data{github} (\qty{7.9}{\ns}), as their doubling work differs (\S\ref{sec:evaluation-structure}).
Per active suffix and generation, it costs more where generation~1 does: on \data{dolma-sample} (\qty{3.3}{\ns}), generation~1 joins 8.8 messages per active suffix, and on \data{c4} and \data{pile} (\qty{3.7}{} and \qty{2.6}{\ns}), it reads back \qty{1.0}{} and \qty{1.7}{\tera\byte} of spill, at \qty{8.0}{} and \qty{7.1}{\giga\byte\per\second}; on these three, the later generations cost \qtyrange{1.7}{2.5}{\ns}.
On \data{c4} and \data{pile}, partitioning costs \qty{0.7}{} and \qty{1.0}{\ns} per token, up from \qtyrange{0.5}{0.7}{\ns} in memory, as its scatter writes the spilled suffixes.
\data{github} spills less and later: its partitioning stays in host memory, and its generation~0, at \qty{2.6}{\ns} per token, spills \qty{0.38}{\tera\byte}, which generation~1 and output assembly read back.
Setup allocates the slabs that stage transfers between host and device, which take \qtyrange{14}{18}{\second} to pin in the background of partitioning's scatter: \data{dolma-sample} waits \qty{12}{\second} for the pinning, the next four corpora \qtyrange{2}{11}{\second}, and the scatters of \data{c4} and \data{pile} outlast it.

\begin{figure}[t]
    \centering
    \includegraphics{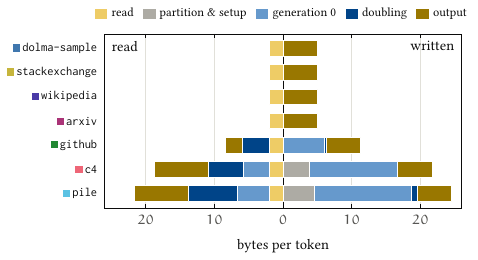}
    \caption{
        Bytes per token that each phase of \gpusaca's end-to-end build reads (left) and writes (right), counted at its read and write calls, in the phases of Figure~\ref{fig:perf-phases} of the paper.
        Every build reads its 2-byte tokens and writes the 5-byte suffix array; its other bytes are spill, which it reads back once.
    }
    \label{fig:perf-bytes}
\end{figure}

\section{Queries}
\label{sec:supp-query}

\subsection{Workloads}
\label{sec:supp-query-workloads}

Table~\ref{tab:supp-query-workloads} lists the 18 workloads of \S\ref{sec:evaluation-query}, each a batch that the query tool draws with its own seed, from a 64-bit Mersenne Twister.
A \emph{present} pattern starts at a uniformly random position of the corpus, so a batch draws frequent substrings more often, repeats some patterns, and may cross the separator between two documents.
A \emph{mutated} pattern is a present one with one position, chosen uniformly, replaced by a different token, drawn uniformly from all \num{65536} 16-bit values.
A \emph{random} pattern draws each of its tokens uniformly from the same \num{65536} values, which include the separator and the values outside the tokenizer's vocabulary.
The \emph{unigrams} are these \num{65536} values, once each.
A \emph{held-out} pattern is a present pattern of another corpus: of \data{wikipedia} for \data{stackexchange}, \data{arxiv}, \data{github}, \data{c4}, and \data{pile}; of \data{stackexchange} for \data{wikipedia}; and of arXiv, in the OLMo tokenizer of \data{dolma-sample}, for \data{dolma-sample}.
Under 1.6\% of the held-out patterns occur: except on \data{c4}, 4.9\% of the 10-token ones; and on \data{pile}, which contains Wikipedia: 19.5\% of the 10-token ones and 9.5\% of the 100-token ones.
Mutated patterns of 10 tokens or more, and random ones of 5 or more, almost never occur: these batches time the search for absent patterns.

\begin{table*}[t]
\caption{
    The workloads of \S\ref{sec:evaluation-query}: their pattern length $|P|$ in tokens, the pattern counts, the seed that draws them, and the share of their patterns that occur in each corpus.
}
\label{tab:supp-query-workloads}
\small
\begin{tabular}{@{}lrrrrrrrrrr@{}}
\toprule
& & & & \multicolumn{7}{c}{Share of patterns that occur} \\
\cmidrule(l){5-11}
Workload & $|P|$ & Patterns & Seed & \data{dolma-sample} & \data{stackexchange} & \data{wikipedia} & \data{arxiv} & \data{github} & \data{c4} & \data{pile} \\
\midrule
present-1 & 1 & \num{1000000} & 1 & 100\% & 100\% & 100\% & 100\% & 100\% & 100\% & 100\% \\
present-2 & 2 & \num{1000000} & 2 & 100\% & 100\% & 100\% & 100\% & 100\% & 100\% & 100\% \\
present-5 & 5 & \num{1000000} & 3 & 100\% & 100\% & 100\% & 100\% & 100\% & 100\% & 100\% \\
present-10 & 10 & \num{1000000} & 4 & 100\% & 100\% & 100\% & 100\% & 100\% & 100\% & 100\% \\
present-100 & 100 & \num{100000} & 5 & 100\% & 100\% & 100\% & 100\% & 100\% & 100\% & 100\% \\
present-1000 & 1000 & \num{100000} & 6 & 100\% & 100\% & 100\% & 100\% & 100\% & 100\% & 100\% \\
mutated-2 & 2 & \num{1000000} & 7 & 33.4\% & 22.6\% & 17.6\% & 17.9\% & 21.9\% & 27.3\% & 29.3\% \\
mutated-5 & 5 & \num{1000000} & 8 & 0.129\% & 0.41\% & 0.305\% & 0.355\% & 0.631\% & 0.556\% & 0.603\% \\
mutated-10 & 10 & \num{1000000} & 9 & $<$0.1\% & $<$0.1\% & $<$0.1\% & $<$0.1\% & $<$0.1\% & $<$0.1\% & $<$0.1\% \\
mutated-100 & 100 & \num{100000} & 10 & 0 & 0 & 0 & 0 & 0 & 0 & 0 \\
mutated-1000 & 1000 & \num{100000} & 11 & 0 & 0 & 0 & 0 & 0 & 0 & 0 \\
random-1 & 1 & \num{1000000} & 12 & 76.3\% & 48.5\% & 48.4\% & 47.8\% & 48.6\% & 48.4\% & 48.6\% \\
random-2 & 2 & \num{1000000} & 13 & 3.42\% & 1.6\% & 2.03\% & 0.982\% & 1.49\% & 4.07\% & 4.6\% \\
random-5 & 5 & \num{1000000} & 14 & 0 & 0 & 0 & 0 & 0 & 0 & 0 \\
random-10 & 10 & \num{1000000} & 15 & 0 & 0 & 0 & 0 & 0 & 0 & 0 \\
held-out-10 & 10 & \num{1000000} & 17 & 1.13\% & 0.39\% & 1.54\% & 0.346\% & 0.362\% & 4.87\% & 19.5\% \\
held-out-100 & 100 & \num{100000} & 18 & $<$0.1\% & $<$0.1\% & $<$0.1\% & $<$0.1\% & 0 & 0.268\% & 9.55\% \\
unigram & 1 & \num{65536} & -- & 76.3\% & 48.5\% & 48.4\% & 47.8\% & 48.6\% & 48.4\% & 48.7\% \\
\bottomrule
\end{tabular}
 \end{table*}

\subsection{Measurement}
\label{sec:supp-query-measurement}

\para{Indexes}
Each engine searches its own index of the same token array.
\ig\ orders the suffixes by their bytes, which hold each token little-endian; \gpusaca\ orders by the tokens.
So the two engines place an interval at different ranks: we compare their counts, per batch by a hash of all its counts.

\para{Timing}
Both engines time a batch from its patterns in host memory to their counts in host memory, after loading the index.
Each process times every batch six times in a row, three in placement D, and takes the median.
Cold repetitions in D first drop the engine's suffix array from the page cache; the hot ones follow the cold ones, and keep its pages.

\subsection{Results per workload}
\label{sec:supp-query-results}

Tables~\ref{tab:supp-query-memory} and~\ref{tab:supp-query-drive} give each engine's throughput per workload, and \gpusaca's speedup, with the index in memory and with \sa\ on SSD.
Relative to \ig's engine, \gpusaca's is slowest on the batches of 1000-token patterns, except cold on SSD: \numrange{2.0}{2.6}$\times$ in memory, and \numrange{3.4}{3.7}$\times$ hot.
Such a batch holds \qty{200}{\mega\byte} of patterns, which the engine copies to the device whole, in seven chunks that overlap the bracket search on the host: at the rate of Table~\ref{tab:hardware}, the copy alone takes at least \qty{3.5}{\ms} of the batch's \qtyrange{4.7}{7.3}{\ms} in placement C.
The bracket search also compares each pattern with the text over up to 1000 tokens, and the host searches the chunks for \numrange{52}{86}\% of the batch's time.

\begin{table*}[t]
\caption{
    Throughput (end to end) per workload, in millions of patterns per second, of \ig's engine and \gpusaca's, and \gpusaca's speedup ($\times$), with the index in memory.
    The last column gives each row's median over the workloads.
}
\label{tab:supp-query-memory}
\footnotesize
\setlength{\tabcolsep}{2.5pt}
\begin{tabular}{@{}lrrrrrrrrrrrrrrrrrrr@{}}
\toprule
& \multicolumn{6}{c}{present} & \multicolumn{5}{c}{mutated} & \multicolumn{4}{c}{random} & \multicolumn{2}{c}{held-out} & {unigram} & \\
\cmidrule(lr){2-7} \cmidrule(lr){8-12} \cmidrule(lr){13-16} \cmidrule(lr){17-18} \cmidrule(lr){19-19}
$|P|$ & 1 & 2 & 5 & 10 & 100 & 1000 & 2 & 5 & 10 & 100 & 1000 & 1 & 2 & 5 & 10 & 10 & 100 & 1 & {median} \\
\midrule
\multicolumn{20}{@{}l}{\data{dolma-sample} (A)} \\
\quad \ig & 29.2 & 12.6 & 14.5 & 15.4 & 9.92 & 9.13 & 17.2 & 17.1 & 16.5 & 10.0 & 9.46 & 23.2 & 23.1 & 23.2 & 22.9 & 23.8 & 12.4 & 8.75 & 16.0 \\
\quad \gpusaca & 653 & 457 & 473 & 474 & 138 & 20.1 & 698 & 631 & 540 & 151 & 21.6 & 781 & 950 & 922 & 778 & 658 & 179 & 533 & 537 \\
\quad $\times$ & 22.4 & 36.2 & 32.6 & 30.9 & 13.9 & 2.2 & 40.6 & 37.0 & 32.7 & 15.1 & 2.3 & 33.7 & 41.2 & 39.8 & 34.0 & 27.7 & 14.4 & 60.9 & 32.6 \\
\addlinespace
\multicolumn{20}{@{}l}{\data{stackexchange} (C)} \\
\quad \ig & 30.5 & 12.7 & 11.8 & 13.3 & 9.18 & 8.59 & 19.2 & 16.5 & 15.4 & 9.36 & 8.73 & 34.5 & 30.2 & 30.4 & 30.1 & 18.3 & 10.9 & 8.06 & 14.3 \\
\quad \gpusaca & 459 & 329 & 191 & 192 & 93.2 & 20.6 & 422 & 352 & 271 & 108 & 18.6 & 514 & 531 & 513 & 459 & 304 & 127 & 242 & 287 \\
\quad $\times$ & 15.0 & 25.9 & 16.1 & 14.4 & 10.2 & 2.4 & 22.0 & 21.4 & 17.6 & 11.5 & 2.1 & 14.9 & 17.6 & 16.9 & 15.2 & 16.6 & 11.6 & 30.0 & 15.7 \\
\addlinespace
\multicolumn{20}{@{}l}{\data{wikipedia} (C)} \\
\quad \ig & 25.3 & 10.8 & 12.2 & 13.5 & 9.22 & 8.58 & 18.3 & 16.2 & 15.2 & 9.43 & 8.76 & 30.5 & 27.4 & 27.5 & 27.3 & 19.4 & 11.3 & 9.33 & 14.3 \\
\quad \gpusaca & 423 & 287 & 222 & 229 & 99.5 & 18.8 & 425 & 367 & 296 & 106 & 22.4 & 505 & 515 & 505 & 447 & 340 & 127 & 237 & 291 \\
\quad $\times$ & 16.7 & 26.4 & 18.2 & 17.0 & 10.8 & 2.2 & 23.2 & 22.7 & 19.5 & 11.3 & 2.6 & 16.6 & 18.8 & 18.4 & 16.3 & 17.5 & 11.3 & 25.4 & 17.2 \\
\addlinespace
\multicolumn{20}{@{}l}{\data{arxiv} (C)} \\
\quad \ig & 31.5 & 13.6 & 11.4 & 13.2 & 9.14 & 8.46 & 20.0 & 16.2 & 15.0 & 9.26 & 8.50 & 36.1 & 32.2 & 32.3 & 32.0 & 19.2 & 11.5 & 8.17 & 14.3 \\
\quad \gpusaca & 465 & 358 & 218 & 232 & 101 & 20.8 & 442 & 387 & 308 & 112 & 21.6 & 522 & 549 & 523 & 468 & 343 & 128 & 243 & 325 \\
\quad $\times$ & 14.8 & 26.3 & 19.1 & 17.6 & 11.0 & 2.5 & 22.1 & 24.0 & 20.5 & 12.1 & 2.5 & 14.5 & 17.0 & 16.2 & 14.6 & 17.9 & 11.2 & 29.7 & 16.6 \\
\addlinespace
\multicolumn{20}{@{}l}{\data{github} (C)} \\
\quad \ig & 26.6 & 11.7 & 10.0 & 10.5 & 7.65 & 6.83 & 17.1 & 14.7 & 13.7 & 8.35 & 7.52 & 28.6 & 26.4 & 26.5 & 26.2 & 16.3 & 9.94 & 9.11 & 12.7 \\
\quad \gpusaca & 432 & 326 & 210 & 189 & 46.0 & 13.9 & 410 & 366 & 298 & 74.4 & 19.2 & 494 & 497 & 490 & 441 & 303 & 123 & 234 & 300 \\
\quad $\times$ & 16.2 & 27.8 & 21.0 & 18.1 & 6.0 & 2.0 & 24.0 & 24.9 & 21.8 & 8.9 & 2.6 & 17.2 & 18.8 & 18.5 & 16.9 & 18.6 & 12.4 & 25.7 & 18.3 \\
\bottomrule
\end{tabular}
 \end{table*}

\begin{table*}[t]
\caption{
    Throughput (end to end) per workload and \gpusaca's speedup, as in Table~\ref{tab:supp-query-memory}, with \sa\ on SSD (placement D), cold and hot, and the pages that each engine reads from the drive per pattern, cold.
}
\label{tab:supp-query-drive}
\footnotesize
\setlength{\tabcolsep}{2.5pt}
\begin{tabular}{@{}lrrrrrrrrrrrrrrrrrrr@{}}
\toprule
& \multicolumn{6}{c}{present} & \multicolumn{5}{c}{mutated} & \multicolumn{4}{c}{random} & \multicolumn{2}{c}{held-out} & {unigram} & \\
\cmidrule(lr){2-7} \cmidrule(lr){8-12} \cmidrule(lr){13-16} \cmidrule(lr){17-18} \cmidrule(lr){19-19}
$|P|$ & 1 & 2 & 5 & 10 & 100 & 1000 & 2 & 5 & 10 & 100 & 1000 & 1 & 2 & 5 & 10 & 10 & 100 & 1 & {median} \\
\midrule
\multicolumn{20}{@{}l}{\data{c4} (D)} \\
\quad \ig, cold & 1.56 & 0.087 & 0.054 & 0.057 & 0.040 & 0.040 & 0.11 & 0.069 & 0.062 & 0.040 & 0.040 & 1.40 & 0.21 & 0.22 & 0.22 & 0.10 & 0.055 & 0.092 & 0.078 \\
\quad \gpusaca, cold & 16.8 & 0.80 & 0.50 & 0.56 & 0.43 & 0.45 & 1.16 & 0.76 & 0.64 & 0.44 & 0.45 & 14.4 & 2.69 & 2.71 & 2.72 & 0.81 & 0.52 & 1.00 & 0.78 \\
\quad $\times$, cold & 10.8 & 9.1 & 9.2 & 9.8 & 10.9 & 11.3 & 10.6 & 11.0 & 10.3 & 11.0 & 11.2 & 10.3 & 12.5 & 12.5 & 12.5 & 7.9 & 9.5 & 10.9 & 10.8 \\
\addlinespace
\quad \ig, hot & 12.0 & 4.80 & 5.45 & 6.16 & 4.72 & 4.64 & 8.03 & 7.20 & 6.78 & 4.72 & 4.63 & 12.5 & 13.2 & 13.4 & 13.3 & 7.84 & 5.50 & 6.50 & 6.64 \\
\quad \gpusaca, hot & 180 & 86.2 & 81.9 & 88.3 & 48.0 & 15.9 & 142 & 124 & 103 & 51.1 & 16.6 & 204 & 203 & 206 & 205 & 119 & 64.9 & 129 & 111 \\
\quad $\times$, hot & 15.0 & 18.0 & 15.0 & 14.3 & 10.2 & 3.4 & 17.7 & 17.2 & 15.2 & 10.8 & 3.6 & 16.3 & 15.3 & 15.3 & 15.4 & 15.2 & 11.8 & 19.8 & 15.2 \\
\addlinespace
\quad pages, \ig & 0.28 & 5.27 & 8.54 & 8.05 & 11.3 & 11.3 & 4.17 & 6.66 & 7.44 & 11.3 & 11.3 & 0.30 & 2.13 & 2.12 & 2.12 & 4.43 & 8.20 & 4.52 & 5.96 \\
\quad pages, \gpusaca & 0.03 & 0.84 & 1.40 & 1.18 & 1.19 & 1.18 & 0.58 & 0.84 & 1.01 & 1.18 & 1.18 & 0.04 & 0.22 & 0.22 & 0.22 & 0.84 & 1.05 & 0.55 & 0.84 \\
\addlinespace
\multicolumn{20}{@{}l}{\data{pile} (D)} \\
\quad \ig, cold & 1.33 & 0.081 & 0.051 & 0.055 & 0.039 & 0.039 & 0.10 & 0.067 & 0.060 & 0.039 & 0.039 & 1.23 & 0.18 & 0.18 & 0.18 & 0.084 & 0.048 & 0.081 & 0.074 \\
\quad \gpusaca, cold & 15.1 & 0.72 & 0.46 & 0.51 & 0.43 & 0.44 & 1.06 & 0.71 & 0.60 & 0.43 & 0.44 & 14.1 & 2.34 & 2.37 & 2.36 & 0.65 & 0.48 & 0.96 & 0.68 \\
\quad $\times$, cold & 11.3 & 9.0 & 8.9 & 9.4 & 11.1 & 11.3 & 10.5 & 10.7 & 10.1 & 11.1 & 11.4 & 11.5 & 13.2 & 13.2 & 13.2 & 7.8 & 9.9 & 11.9 & 11.1 \\
\addlinespace
\quad \ig, hot & 11.6 & 4.79 & 5.20 & 5.95 & 4.58 & 4.45 & 7.77 & 7.05 & 6.69 & 4.75 & 4.58 & 11.5 & 12.5 & 12.7 & 12.6 & 7.26 & 5.02 & 6.19 & 6.44 \\
\quad \gpusaca, hot & 165 & 78.2 & 69.3 & 83.1 & 46.3 & 15.5 & 131 & 119 & 101 & 42.8 & 16.8 & 186 & 186 & 202 & 184 & 110 & 56.7 & 132 & 106 \\
\quad $\times$, hot & 14.3 & 16.3 & 13.3 & 14.0 & 10.1 & 3.5 & 16.8 & 16.8 & 15.2 & 9.0 & 3.7 & 16.2 & 14.9 & 15.9 & 14.7 & 15.2 & 11.3 & 21.4 & 14.8 \\
\addlinespace
\quad pages, \ig & 0.33 & 5.68 & 8.95 & 8.38 & 11.6 & 11.6 & 4.58 & 6.90 & 7.70 & 11.6 & 11.6 & 0.34 & 2.59 & 2.57 & 2.57 & 5.48 & 9.43 & 5.21 & 6.29 \\
\quad pages, \gpusaca & 0.03 & 0.88 & 1.43 & 1.21 & 1.22 & 1.21 & 0.63 & 0.85 & 1.01 & 1.19 & 1.21 & 0.04 & 0.26 & 0.25 & 0.25 & 1.02 & 1.13 & 0.56 & 0.95 \\
\bottomrule
\end{tabular}
 \end{table*}

\subsection{Placements compared}
\label{sec:supp-query-placements}

Table~\ref{tab:supp-query-plans} also runs three corpora in their other placements that fit: \data{dolma-sample} in C, B, and D; \data{arxiv} in B and D; and \data{github} in D, as its text does not fit the device.
The placement that our engine picks is the fastest on each corpus.

\begin{table*}[t]
\centering
\caption{
    \gpusaca's median throughput over the workloads, in millions of patterns per second, and its median speedup over \ig's engine ($\times$), with each corpus in each placement that fits it; in bold, the placement that \gpusaca's engine picks.
}
\label{tab:supp-query-plans}
\small
\setlength{\tabcolsep}{4pt}
\begin{tabular}{@{}lrrrrrr@{}}
\toprule
& \multicolumn{2}{c}{\data{dolma-sample}} & \multicolumn{2}{c}{\data{arxiv}} & \multicolumn{2}{c}{\data{github}} \\
\cmidrule(l){2-3} \cmidrule(l){4-5} \cmidrule(l){6-7}
Placement & {M/s} & {$\times$} & {M/s} & {$\times$} & {M/s} & {$\times$} \\
\midrule
A & \textbf{537} & \textbf{32.6} & {--} & {--} & {--} & {--} \\
C & 296 & 17.1 & \textbf{325} & \textbf{16.6} & \textbf{300} & \textbf{18.3} \\
B & 70.9 & 4.5 & 64.4 & 4.3 & {--} & {--} \\
D, cold & 0.79 & 5.8 & 1.16 & 8.3 & 1.01 & 8.2 \\
D, hot & 190 & 15.3 & 190 & 15.2 & 176 & 17.3 \\
\bottomrule
\end{tabular}
 \end{table*}

\para{B against C}
B is the slowest placement in memory: C's throughput is 4.2 and 5.1 times as high on \data{dolma-sample} and \data{arxiv}.
In B, the device's bracket search, with its copies of brackets across PCIe, takes 89\% and 96\% of a batch of 10-token present patterns on these corpora: \qty{14.2}{} and \qty{43.0}{\ms} per million patterns, with \qty{40}{} and \qty{151}{\giga\byte} of suffix array.
The host's search in C takes \qtyrange{3.4}{4.6}{\ms} per million on all five corpora that C holds, of 8 to 64 billion tokens.
The device reads pinned host memory at random at 51M reads per second at \qty{16}{\gibi\byte}, and at fewer as the memory grows, 32M at \qty{140}{\gibi\byte} and 25M at \qty{300}{\gibi\byte}, while it copies \qty{56.5}{\giga\byte\per\second}: each read waits on the translation of a host address, not on bandwidth.
The host's 128 cores, with 16 loads in flight each, reach \num{2062}M random loads per second, 41 to 64 times as many at the sizes of these two suffix arrays.

\para{D against memory}
Hot, placement D reaches 0.58 to 0.64 times C's throughput: the page cache maps the suffix array in pages of \qty{4}{\kibi\byte}, while the heap's pages are huge, and with \qty{4}{\kibi\byte} pages, the host's random loads fall to \num{1114}M per second, 0.54 times as many.
Cold, D's lead over \ig's engine grows with the index: \num{5.8}$\times$ on \data{dolma-sample}, \numrange{8.2}{8.3}$\times$ on \data{arxiv} and \data{github}, and \numrange{10.8}{11.1}$\times$ on \data{c4} and \data{pile}, as \ig's reads per present 10-token pattern grow from 3.5 pages to 8.4, while ours stay at 1.0 to 1.2.

\section{Validation}
\label{sec:supp-validation}

We validate \gpusaca's output, and test its implementation, in the following ways.
This list is not exhaustive.
\begin{itemize}
    \item \textbf{Suffix arrays.}
    An independent validator checks the suffix array that \gpusaca\ builds for every corpus: that it is a permutation of the suffixes, and that every pair of adjacent suffixes is in lexicographic order, compared in full.

    \item \textbf{Structural cross-check.}
    From the suffix array and the run's partitioning alone, the same tool recomputes which suffixes stay tied after each generation, and reproduces every active set that the run counts, exactly, on all corpora (\S\ref{sec:supp-experiments}).

    \item \textbf{Invariant checks.}
    Test builds check the method's invariants inline, as the construction runs: \pmap\ against a reference partition map computed on the host (\S\ref{sec:method-partition}); every suffix in its partition, exactly once; target coherence among tied suffixes (Theorem~\ref{thm:tci}); complete coverage of messages (Theorem~\ref{thm:coverage}); the routing of every message, and each inbox's size and checksum against its senders'; and the retirement flags against the active sets.
    They also validate their own output suffix array at the end.

    \item \textbf{Placement stress.}
    Test builds also move chunks between host memory and SSD at every barrier, refuse charges to the host budget at random, force every stream to SSD, and shrink the chunks so that small inputs span many of them, to exercise every placement path.

    \item \textbf{Queries.}
    The query engine checks its index at load, and can check every answer (\S\ref{sec:system-query}).
\end{itemize}
 
\end{document}